\def\dnsparagraph#1{\par\vspace{2pt}\noindent{\bf #1}.}
\newcommand{\congest}{${\mathsf{CONGEST}}$}
\newcommand{\f}{\frac}
\newcommand{\cd}{\cdot}
\newcommand{\sr}{\sqrt}
\newcommand{\lds}{\ldots}
\newcommand{\s}{\subseteq}
\newcommand{\BE}{\begin{enumerate}}
\newcommand{\EE}{\end{enumerate}}
\newcommand{\im}{\item}
\newcommand{\BI}{\begin{itemize}}
\newcommand{\EI}{\end{itemize}}
\newcommand{\logn}{\log n}
\newcommand{\tuple}{\textup{tuple}}
\newcommand{\R}{\mathbb R}
\newcommand{\Z}{\mathbb Z}
\newcommand{\N}{\mathbb N}
\newcommand{\e}{\epsilon}
\newcommand{\de}{\delta}
\newcommand{\De}{\Delta}
\newcommand{\be}{\beta}
\newcommand{\Om}{\Omega}
\newcommand{\el}{\ell}
\newcommand{\Th}{\Theta}
\newcommand{\m}{\mathcal}
\newcommand{\lf}{\lfloor}
\newcommand{\rf}{\rfloor}
\newcommand{\poly}{\textup{poly}}
\newcommand{\norm}[1]{\left\lVert#1\right\rVert}
\newcommand{\lp}{\left(}
\newcommand{\rp}{\right)}
\newcommand{\lmt}{\left[\begin{matrix}}
\newcommand{\rmt}{\end{matrix}\right]}
\newtheorem{theorem}{Theorem}[section]
\newtheorem{lemma}[theorem]{Lemma}
\newtheorem{definition}[theorem]{Definition}
\newtheorem{corollary}[theorem]{Corollary}
\newtheorem{observation}[theorem]{Observation}
\newtheorem{fact}[theorem]{Fact}
\newtheorem{claim}[theorem]{Claim}
\newcommand{\BT}{\begin{theorem}}
\newcommand{\ET}{\end{theorem}}
\newcommand{\BL}{\begin{lemma}}
\newcommand{\EL}{\end{lemma}}
\newcommand{\BD}{\begin{definition}}
\newcommand{\ED}{\end{definition}}
\newcommand{\BC}{\begin{corollary}}
\newcommand{\EC}{\end{corollary}}
\newcommand{\BO}{\begin{observation}}
\newcommand{\EO}{\end{observation}}
\newcommand{\BCL}{\begin{claim}}
\newcommand{\ECL}{\end{claim}}
\newcommand{\BP}{\begin{proof}}
\newcommand{\EP}{\end{proof}}
\newcommand{\BPS}{\begin{proof}[Proof (Sketch)]}
\newcommand{\EPS}{\end{proof}}
\Crefname{observation}{Observation}{Observations}
\Crefname{claim}{Claim}{Claims}
\newcommand{\tO}{\widetilde{O}}
\newcommand{\diam}{\textup{diam}}
\newcommand{\ol}{\overline}
\newcommand{\sep}{\textup{sep}}
\newcommand{\len}{\textup{len}}
\newcommand{\bd}{\mathbf{d}}
\newcommand{\bx}{\mathbf{x}}
\newcommand{\Cut}{\textsc{Cut}}
\renewcommand{\Join}{\textsc{Join}}
\newcommand{\Get}{\textsc{GetValue}}
\newcommand{\Add}{\textsc{AddSubtree}}
\newcommand{\Value}{\textsc{Value}}
\newcommand{\dilation}{\mbox{\tt d}}
\newcommand{\congestion}{\mbox{\tt c}}
\renewcommand{\int}{\textup{int}}
\newcounter{note}[section]
\newenvironment{boxfig}[2]{\begin{figure}[#1]\fbox{\begin{minipage}{\linewidth}
                                \vspace{0.2em}
                                \makebox[0.025\linewidth]{}
                                \begin{minipage}{0.95\linewidth}
                                        {{
                                                        #2 }}
                                \end{minipage}
                                \vspace{0.2em}
        \end{minipage}}}{\end{figure}}
\newcommand{\pprotocol}[4]{
        \begin{boxfig}{h!}{
                        \begin{center}
                                \textbf{#1}
                        \end{center}
                        #4
                        \vspace{0.2em} } \caption{\label{#3} #2}
        \end{boxfig}
}
\newcommand{\protocol}[4]{
        \pprotocol{#1}{#2}{#3}{#4} }
\begin{document}

\title{Planar Diameter via Metric Compression}
\author{
 Jason Li\\
  \small CMU\\
  \small jmli@cs.cmu.edu 
\and
Merav Parter\\
        \small Weizmann Institute \\
        \small merav.parter@weizmann.ac.il
}
\date{}
\maketitle
\begin{abstract}
We develop a new approach for distributed distance computation in planar graphs that is based on a variant of the metric compression problem recently introduced by Abboud et al. [SODA'18].  
In our variant of the Planar Graph Metric Compression Problem, one is given an $n$-vertex planar graph $G=(V,E)$, a set of $S \subseteq V$ source terminals lying on a single face, and a subset of target terminals $T \subseteq V$. The goal is to compactly encode the $S\times T$ distances. 

One of our key technical contributions is in providing a compression scheme that encodes all $S \times T$  distances using $\widetilde{O}(|S|\cdot\poly(D)+|T|)$ bits\footnote{As standard, $\widetilde{O}$ is used to hide $\poly\log n$ factors.}, for unweighted graphs with diameter $D$. This significantly improves the state of the art of $\widetilde{O}(|S|\cdot 2^{D}+|T| \cdot D)$ bits. We also consider an approximate version of the problem for \emph{weighted} graphs, where the goal is to encode $(1+\epsilon)$ approximation of the $S \times T$ distances, for a given input parameter $\epsilon \in (0,1]$. Here, our compression scheme uses $\widetilde{O}(\poly(|S|/\epsilon)+|T|)$ bits. In addition, we describe how these compression schemes can be computed in near-linear time. 
At the heart of this compact compression scheme lies a VC-dimension type argument on planar graphs, using the well-known Sauer’'s lemma. 

This efficient compression scheme leads to several improvements and simplifications in the setting of diameter computation, most notably in the distributed setting:
\begin{itemize}
\item
There is an $\widetilde{O}(D^5)$-round randomized distributed algorithm for computing the diameter in planar graphs, w.h.p. 
\item
There is an $\widetilde{O}(D^3)+\poly(\log n/\epsilon)\cdot D^2$-round randomized distributed algorithm for computing an $(1+\epsilon)$ approximation of the diameter in weighted graphs with polynomially bounded weights, w.h.p.
\end{itemize}
No sublinear round algorithms were known for these problems before. 
These distributed constructions are based on a new recursive graph decomposition that preserves the (unweighted) diameter of each of the subgraphs up to a logarithmic term. Using this decomposition, we also get an \emph{exact} SSSP tree computation within $\widetilde{O}(D^2)$ rounds. 
\end{abstract}

\newpage 
\tableofcontents
\newpage
\section{Introduction}
Computing the diameter of a graph is one of the most central problems in planar graph algorithms.
In general weighted graphs, the best diameter algorithm is based on solving the All-Pairs Shortest Paths
(APSP) problem. In planar graphs, however, the diameter can be solved considerably faster. In recent years there has been a substantial progress on this problem both for the exact as well as for the approximate setting.  
\dnsparagraph{Exact Diameter}
Frederickson \cite{federickson1987fast} gave the first $O(n^2)$ algorithm for the problem using APSP. A poly-logarithmic improvement was given by Wulff-Nilsen \cite{wulff2008wiener}, providing the first indication that diameter is indeed \emph{easier} than APSP.
The question of whether one can compute the diameter in sub-quadratic time was one of the most important open problems in the area for quite some time. In a breakthrough result, Cabello \cite{cabello2017subquadratic}, building upon the heavy machinery of Voronoi diagrams in planar graphs, presented the first truly sub-quadratic diameter algorithm that runs in time $\widetilde{O}(n^{11/6})$. This works even for weighted and directed planar graphs. 
Soon after, by simplifying and extending the approach of Cabello, Gawrychowski et al. \cite{gawrychowski2018voronoi} improved the bound to $O(n^{5/3})$, which is currently the state of the art. The techniques developed in \cite{cabello2017subquadratic,gawrychowski2018voronoi} led to subsequent improvements in the related setting of compact distance oracles \cite{cohen2017fast,gawrychowski2018better,charalampopoulos2018exact}.

\dnsparagraph{Approximate Diameter}
In lack of truly efficient algorithms for diameter computation over the years, the area turned to consider the approximate setting. 
The most notable work in this context is by Weimann and Yuster \cite{weimann2016approximating} that provided the first $(1+\epsilon)$ approximation in time $\widetilde{O}(2^{O(1/\epsilon)}\cdot n)$, hence \emph{linear} for any \emph{constant} $\epsilon$. Unlike the heavy machinery used by the exact algorithms, their approximate algorithm is based on a simple divide and conquer approach using \emph{shortest path separators}. Ideas along this line were first introduced by \cite{thorup2004compact} in the distance oracle setting. 
We elaborate more on this approach in the technical overview section. Chan and Skrepetos \cite{chan2017faster} combined the exact and approximate worlds by combining the algorithm of Weimann and Yuster \cite{weimann2016approximating} with the abstract Voronoi diagram tool of \cite{cabello2017subquadratic}. They achieve a randomized $(1+\epsilon)$ approximation in time $\widetilde{O}(\poly(1/\epsilon)\cdot n)$. We note that one implication of our results is a considerably simpler deterministic ``divide and conquer" algorithm for this problem that has the same time complexity of $\widetilde{O}(\poly(1/\epsilon)\cdot n)$ but avoids the use of Voronoi diagrams.


\subsection{Distributed Algorithms for Planar Graphs}
Throughout, we use a standard message passing model of distributed computing called $\mathsf{CONGEST}$\cite{Peleg:2000}. The network is abstracted as an $n$-node graph $G=(V, E)$, with one processor on each network node. Initially, these processors do not know the graph. They solve the given graph problems via communicating with their neighbors. Communication happens in synchronous rounds. Per round, nodes can send $O(\log n)$-bit message to each of their neighbors.

\dnsparagraph{The Distributed View Point} There is a subtle gap between the centralized and distributed point of views on planar graphs (and on global graph problems in general). In the centralized world, one usually thinks of the graph diameter $D$ in terms of the worst-case $\Omega(n)$ bound. For this reason, an $\sqrt{n}$-size separator is way more preferable over shortest path separators. In contrast, the prevalent viewpoint in distributed graph algorithms thinks of the graph's diameter as being a small number (independent of $n$). With this view, shortest-path separators are preferable over $\sqrt{n}$-size separators. 
This viewpoint has two justifications. First, as argued by Garay, Kutten, and Peleg in their seminal work \cite{Garay-Kutten-Peleg, Kutten-Peleg}, real world networks usually do have small diameter. In addition, global graph problems admit a trivial $\Omega(D)$ lower bound in the distributed setting. Thus, a separator with $O(D)$ vertices is small w.r.t to the total round complexity.

\dnsparagraph{Distributed Planar Graphs via Low-Congestion Shortcuts} 
The area of distributed planar algorithm was initiated by Ghaffari and Haeupler \cite{ghaffari2016distributed}, who introduced the notion of \emph{low-congestion shortcuts}. Roughly speaking, low-congestion shortcuts augment vertex disjoint subgraphs of potentially large diameter, with edges from the original graph in order to considerably reduce their diameter. Using this machinery, \cite{ghaffari2016distributed} has provided improved algorithms for MST and minimum-cut. Low-congestion shortcuts and their algorithmic applications have been studied extensively since then \cite{haeupler2016LCSwoEmbed,haeupler2016LCSonBoundedParam,HaeuplerLZ18,li2018distributed,HaeuplerHW18}.
Recently, Ghaffari and Parter \cite{GhaffariP17} presented a distributed construction of shortest path separator in nearly optimal time. We will use this algorithm extensively in our constructions.

\dnsparagraph{Lack of Efficient Shortest Path Algorithms}
Low-congestion shortcuts provide the fundamental communication backbone for many global graph problems. 
However, when it comes to distance related problems, the shortcuts by them-self seem to be insufficient. 
One of the key contributions in this paper is to provide a new recursive graph decomposition that preserves some distance related measures in each of the recursive pieces. This decomposition along with the low-congestion shortcuts provide the communication backbones for our algorithms.

An exception for the above, is a recent work by Haeupler and Li \cite{HaeuplerL18} that used low-congestion shortcuts to compute $(\log n)^{O(1/\e)}$-approximate SSSP trees within $O(n^{\epsilon}\cdot D)$ rounds. 
\dnsparagraph{Distributed Shortest Paths in General Graphs}
In contrast to planar graphs, the problem of distributed diameter computation in general graphs is fully understood.
Frischknecht et al. showed a lower bound of $\widetilde{\Omega}(n)$ rounds that holds even for networks with constant diameter. Abboud, Censor-Hillel and Khoury \cite{abboud2016near} showed the same lower bound holds even if (i) the network is sparse (and with small diameter), or (ii) if we relax to an $(3/2-\epsilon)$ approximation in sparse graphs. A matching upper bound is known by Peleg, Roddity and Tal \cite{peleg2012distributed}. 

Unlike diameter, distributed shortest path computation for weighted graphs is a subject of an active research, attracting a lot of recent attention. Becker et al. presented a deterministic $(1+o(1))$-approximate shortest paths in $\widetilde{O}(D+\sqrt{n})$. Elkin \cite{elkin2017distributed} provided the first sublinear-time algorithm for
\emph{exact} single source shortest paths on undirected graphs. Huang et al. \cite{huang2017distributed}
presented an improved algorithm for the exact all pairs shortest paths. Recently, Ghaffari and Li \cite{ghaffari2018improved} improved Elkin's result and presented an $\widetilde{O}(n^{3/4}\cdot D^{1/4})$. This was improved even more recently by Forster and Nanongkai \cite{krinninger2017faster}. The lack of efficient distributed algorithms for these problems in general graphs provides the motivation for studying these problems in planar networks.

\subsection{Our Results}
We study the problem of distributed diameter computation (and related problems) by means of  metric compression point of view. This approach is inspired by the approximate diameter algorithm of Weimann and Yuster \cite{weimann2016approximating}, and the metric compression problem by Abboud at el. \cite{abboud2018near}.
We start by defining the following problem, a special case of Abboud at el., which  will underlie the combinatorial basis for our diameter computation. 
\paragraph{The Metric Compression Problem}
\begin{definition}[The OS Metric Compression Problem]
In the OS (Okamura Seymour) Metric Compression Problem\footnote{The setting where the terminal vertices are on the boundary of a face is called an Okamura Seymour instance.} one is given an unweighted, undirected planar $n$-vertex graph $G$, a subset of sources $S \subseteq V$ of vertices lying on a single face in $G$, and a subset of target terminals $T \subseteq V$. The goal is to compute a bit
string $\mathcal{S}$ that encodes all $S \times T$ distances. That is, there is a decoding
function $f$ that given the encoding $\mathcal{S}$ and any two nodes $s, t \in S \times T$ returns the distance $d_G(s,t)$.
\end{definition}
This problem can observed as a special case of the metric compression problem studied by Abboud et al. \cite{abboud2018near}. In particular, \cite{abboud2018near} considered an \emph{arbitrary} subset $S \subseteq V$ with the objective to compress the $S \times S$ distances (rather than the $S\times T$ distances). 
Our formulation is motivated by diameter computation, where the set $S$ corresponds to the cycle separator of the graph and $T=V$, we then wish to compress the two sides across the cycle separator to speed up the computation of the diameter. We note the our solution is technically not related to \cite{abboud2018near}. In the latter, the main challenge is in handling the case where $S$ is not lying on a single face. In our case the challenge is in handling $S \times V$ distance rather than a small set of $S \times S$ distances. Indeed, our approach is different than that 
of \cite{abboud2018near}, and it is based on VC-dimension type arguments. We are unaware of previous use of such arguments in the context of distance computation in planar graphs.
\begin{mdframed}[hidealllines=false,backgroundcolor=gray!20]
\vspace{-2pt}
\begin{theorem}[Exact Compression]\label{thm:exact}
Given an $n$-vertex unweighted planar graph $G=(V,E)$, a set $S \subseteq V$ of sources lying consecutively on a single face, and subset $T \subseteq V$, there exists an algorithm that computes a compression of all $S \times T$ distances in $G$ using $\widetilde{O}(|S|^3 \cdot D+|T|)$ bits. 
\end{theorem}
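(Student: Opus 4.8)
The plan is to reduce the task to bounding the number of distinct \emph{distance profiles} of target vertices with respect to $S$, and then to encode a shared dictionary of profiles plus one short pointer per target. By planarity we may assume $S$ lies on the outer face and list its vertices as $s_1,\dots,s_k$ ($k=|S|$) in the cyclic order in which they appear there. For $t\in T$ define its profile $\bd(t):=\big(d_G(s_1,t)-d_G(s_i,t)\big)_{i=1}^{k}$; since the graph is unweighted and $d_G(s_1,s_i)\le D$, each coordinate of $\bd(t)$ lies in $\{-D,\dots,D\}$. Given $\bd(t)$ and the single value $d_G(s_1,t)$ one recovers every $d_G(s_i,t)=d_G(s_1,t)-\bd(t)_i$. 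Storing $d_G(s_1,t)$ costs $O(\log D)$ bits per target, i.e.\ $\tO(|T|)$ total, so the whole theorem reduces to: (i) show that only $P=\poly(|S|)\cdot D^{O(1)}$ distinct profiles arise as $t$ ranges over $V$; (ii) store a dictionary of these $P$ profiles (each a length-$k$ vector over $\{-D,\dots,D\}$) together with an $O(\log P)$-bit pointer for each $t\in T$. The pointers cost $\tO(|T|)$, and storing the dictionary and the $\binom{k}{2}$ source--source distances $d_G(s_i,s_j)$ costs a fixed polynomial in $|S|$ times $\tO(D)$; the plan is to organize this so the total is $\tO(|S|^3D)$.

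The combinatorial heart is step (i), via a VC-dimension argument and Sauer's lemma. The number of distinct profiles equals the number of cells in the common refinement of the partitions of $V$ induced by the $k-1$ functions $v\mapsto d_G(s_1,v)-d_G(s_i,v)$, equivalently the number of distinct sign patterns with respect to the family of ``offset bisectors'' $\mathcal H=\{\,H_{i,c}:=\{v\in V:\ d_G(s_1,v)-d_G(s_i,v)\le c\}\ :\ 2\le i\le k,\ c\in\{-D,\dots,D\}\,\}$. First I would show that $\mathcal H$ has VC-dimension $O(1)$: because all of $S$ lies on one face, the shortest-path trees rooted at $s_1$ and at $s_i$ interact in a non-crossing (laminar) fashion, so each $H_{i,c}$ is bounded by a single Jordan-curve-like separator in the plane, and a large shattered vertex set would force many pairwise-crossing such separators, contradicting planarity. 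Then Sauer's lemma in its dual form bounds the number of realizable sign patterns by a polynomial in $|\mathcal H|=O(|S|\,D)$; crucially, since for a fixed $i$ the sets $H_{i,c}$ are nested in $c$, the offset parameter contributes only a single multiplicative factor of $O(D)$ rather than a power of $D$, so one obtains $P=\poly(|S|)\cdot O(D)$, which is what the budget allows.

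For the algorithmic claim, the decoder simply computes $d_G(s_1,t)-\bd(t)_i$ after following the pointer; correctness is immediate from the reduction. To construct the encoding it suffices to run a BFS/SSSP from each $s_i$ (any near-linear-time planar shortest-path routine works, and the theorem only asserts that an algorithm exists), which yields every $d_G(s_i,\cdot)$; from this one forms each target's profile, deduplicates to obtain the dictionary and the pointers, and reads off the source--source distances.

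I expect the main obstacle to be step (i), and specifically proving the VC-dimension bound \emph{without} incurring polynomial factors of $D$. Three things have to line up: choosing the right set system (offset bisectors between $s_1$ and the other face sources, not plain metric balls); extracting from planarity and the ``$S$ on one face'' hypothesis that these bisectors behave like non-crossing curves so that only $O(1)$ vertices can be shattered; and separating the discrete offset $c$ from the finitely many combinatorial types of bisectors so that Sauer's lemma yields $P=\poly(|S|)\cdot O(D)$ rather than $\poly(|S|,D)$ with a large power of $D$. Pushing the polynomial down to exactly $|S|^3$ will require a reasonably tight count of how a bounded family of such curves can subdivide the plane.
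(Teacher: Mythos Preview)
Your high-level scaffolding matches the paper: store a dictionary of distinct ``profiles'' plus an $O(\log n)$-bit pointer per target, and bound the number of profiles via a VC-dimension argument and Sauer's lemma. The gap is in the specific set system you choose, and this gap is exactly the step you flagged as the main obstacle.

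With your family $\mathcal H=\{H_{i,c}\}$ the universe has size $\Theta(|S|\cdot D)$, so even a constant VC-dimension bound together with Sauer only yields $O((|S|D)^{O(1)})$ sign patterns. Your attempt to rescue this---``for fixed $i$ the $H_{i,c}$ are nested, so the offset contributes only a single factor of $D$''---is not a theorem; nestedness within each chain does not by itself collapse the Sauer bound from $(|S|D)^{c}$ to $|S|^{c}\cdot D$. Indeed, the number of cells you are counting is precisely the number of distinct threshold tuples $(t_2,\dots,t_k)$ with $t_i=d(s_1,v)-d(s_i,v)$, and nothing in your argument prevents this from being polynomial in $D$ with exponent larger than $1$. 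Separately, the claim that each $H_{i,c}$ is bounded by a single Jordan-curve-like separator, and hence that ``non-crossing'' forces $O(1)$ shattering, is asserted but not proved; additive bisectors in planar graphs need not have such clean structure.

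The paper avoids both issues by a different choice of set system: instead of comparing every $s_i$ to a fixed anchor $s_1$ with offsets in $\{-D,\dots,D\}$, it compares \emph{consecutive} sources $s_i,s_{i+1}$. Because these are adjacent (unit distance), $d(v,s_i)-d(v,s_{i+1})\in\{-1,0,1\}$, so only the thresholds $\Delta\in\{-1,0\}$ are needed and the universe is $[\,\ell-1\,]\times\{-1,0\}$ of size $O(|S|)$ with no $D$ dependence. The paper then proves VC dimension at most $3$ for the dual family $\mathcal F(S)=\{\{(i,\Delta):v\in A_i^\Delta\}:v\in V\}$ by a direct planar-crossing contradiction on shortest paths (not a ``bisectors are Jordan curves'' heuristic), so Sauer gives $|\mathcal F(S)|=O(|S|^3)$. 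Finally, knowing the class $F(v)$ together with the single anchor value $d(v,s_r)\in\{0,\dots,D\}$ reconstructs the full tuple (by telescoping the consecutive differences), which is where the lone factor of $D$ enters. This is exactly the ``separate the offset from the combinatorial types'' goal you articulated, but achieved by choosing the right comparisons rather than by post-hoc accounting.
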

\end{mdframed}
For the case of weighted graphs $G$ with aspect ratio\footnote{The ratio between the largest distance and smallest distance among all pairs in $G$} $W$, we also provide an $(1+\epsilon)$-approximate compression scheme.
\begin{mdframed}[hidealllines=false,backgroundcolor=gray!20]
\vspace{-2pt}
\begin{theorem}[Approximate Weighted Compression]\label{thm:approx}
Given an $n$-vertex weighted planar graph $G=(V,E,\omega)$ with aspect ratio $W$, a set $S \subseteq V$ of sources lying (not necessarily consecutively) on a single face and set of terminal $T \subseteq V$, there exists an algorithm that computes a compression of $(1+\e)$-approximate $S \times T$ distances in $G$ using $\widetilde{O}((\poly(|S|/\e)+|T|)\log W)$ bits. 
\end{theorem}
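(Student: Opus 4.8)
The plan is to reduce \Cref{thm:approx} to $O(\log W)$ invocations of the exact unweighted scheme of \Cref{thm:exact}, one per \emph{distance scale}. Normalizing so that edge weights are positive integers, every relevant $S\times T$ distance lies in a window of multiplicative width $W$, so for $j=0,1,\dots,O(\log W)$ set $r_j=2^j$ and treat separately the pairs $(s,t)$ with $d_G(s,t)\in[r_j,2r_j)$. Fixing a scale $r=r_j$ and a granularity $\delta=\Theta(\e r/\polylog n)$, the goal is to construct an \emph{unweighted} planar graph $\widehat G_r$ together with a source set $\widehat S$ lying consecutively on one face of $\widehat G_r$ (meeting the hypothesis of \Cref{thm:exact}, which also disposes of the weaker ``not necessarily consecutively'' assumption here) and a target set $\widehat T=T$, such that: (i) the quantities relevant to \Cref{thm:exact} are $|\widehat S|=\poly(|S|/\e)$ and $D=\poly(|S|/\e)$ after truncating distances at $D$; and (ii) for every $s\in S$, $t\in T$ with $d_G(s,t)\in[r,2r)$ one has $\delta\cdot d_{\widehat G_r}(s,t)\in(1\pm\e)\,d_G(s,t)$, whereas $d_{\widehat G_r}(s,t)>D$ whenever $d_G(s,t)\ge 2r$. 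Given such $\widehat G_r$, \Cref{thm:exact} costs $\widetilde O(|\widehat S|^3 D+|\widehat T|)=\widetilde O(\poly(|S|/\e)+|T|)$ bits per scale, and summing over the $O(\log W)$ scales yields the bound. Decoding a query $(s,t)$ then amounts to binary-searching over the scales --- each per-scale encoding lets us test whether $d_{\widehat G_r}(s,t)\le D$, equivalently whether $d_G(s,t)<2r$ --- locating the window containing $d_G(s,t)$, and returning $\delta\cdot d_{\widehat G_r}(s,t)$.

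Constructing $\widehat G_r$ is the heart of the matter, and I expect it to be the main obstacle, because the naive approach fails on two counts: deleting edges of weight $>2r$ is harmless (no path of weighted length $<2r$ uses one), but a $G$-shortest path of weighted length $<2r$ may still traverse $\Omega(n)$ edges, so plain rounding-and-subdividing does not bound the unweighted diameter independently of $n$ and $W$. The fix I would pursue is the shortest-path-separator / portal technique of Thorup and Weimann--Yuster, specialized to the fact that $S$ is a \emph{small} set on a face: fix a recursive shortest-path-separator decomposition of $G$ (computable in near-linear time), and for every source $s\in S$, on each of the $O(\log n)$ separators relevant to $s$ in the recursion, place $O(1/\e)$ \emph{portals} along the portion of that separator within weighted distance $O(r)$ of $s$, spaced $\Theta(\e r)$ apart; also attach each target $t\in T$ to the $O(1/\e)$ portals nearest it on each of $t$'s $O(\log n)$ relevant separators. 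Crucially, portals are defined once with respect to the fixed decomposition --- they do \emph{not} themselves spawn new portals in recursive subproblems --- so their total number is $\poly(|S|/\e)$ rather than blowing up as $(1/\e)^{\Theta(\log n)}$. In $\widehat G_r$ one then puts an edge from each $s\in S$ (resp.\ each $t\in T$) to each of its portals, weighted by the \emph{true} $G$-distance, caps that weight at $O(r)$, rounds it up to a multiple of $\delta$, and subdivides into unit edges; and, to make $\widehat S=S\cup\{\text{portals}\}$ consecutive on one face, one cuts $G$ open along the relevant separator paths in the Okamura--Seymour manner. For a relevant pair, rerouting $s\leadsto t$ through the portal nearest the separator crossing realizes $d_{\widehat G_r}(s,t)\le(1+\e)d_G(s,t)$; since every edge weight is a genuine $G$-distance there is no underestimate; and since the reroute uses $O(1)$ edges, the subdivided witness path has unweighted length $O(r/\delta)=\poly(1/\e)\cdot\polylog n$.

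The loose ends to nail down are: verifying that the portal count stays polynomial and that attaching $O(1/\e)$ portals per target per level keeps the target count exactly $|T|$ (only the \emph{number} of targets, not their degree, matters for the $|T|$ term of \Cref{thm:exact}); confirming that cutting along the relevant separator paths leaves the graph planar with $\widehat S$ arrangeable consecutively on one face (the separator paths of a recursive decomposition form a laminar family, which is what makes this possible, although the cross-level bookkeeping is delicate); checking that truncating all distances at $D=\poly(|S|/\e)$ affects no relevant $S\times T$ distance and legitimately lets us feed $\widehat G_r$ to \Cref{thm:exact} with parameter $D$ (so that per-scale the graph need not have small diameter \emph{globally}, only on the relevant pairs); ensuring the $O(\log n)$ per-level $\Theta(\e r)$ rerouting slacks compose to a single $(1+\e)$ factor against distances $\ge r$ (run the construction with $\e/\Theta(\log n)$ in place of $\e$); and, for the near-linear running time, noting that the separators, the bounded-radius $G$-distances needed to place portals, and (by the near-linear construction time already asserted for \Cref{thm:exact}) the per-scale encodings are all computable within $\widetilde O(n)\cdot\poly(|S|/\e)$ total work.
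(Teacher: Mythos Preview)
Your outer scaffolding---decompose into $O(\log W)$ distance scales and aggregate---is in the same spirit as the paper's low-diameter-decomposition reduction (\Cref{sec:reduction-to-additive-error}). The divergence is in the per-scale step. You attempt to \emph{reduce} to the unweighted exact bound of \Cref{thm:exact} by manufacturing an auxiliary unweighted planar instance $\widehat G_r$ via recursive shortest-path separators, portals, cutting, and shortcut edges. The paper never does this: it instead reproves the VC-dimension bound directly in the weighted setting (\Cref{thm:vc-dim-2} and \Cref{thm:approx-core-set}). Letting $\Delta$ range over a grid of $O(|S|\,d/\delta)$ values instead of just $\{-1,0\}$, the family $\mathcal F(S)$ still has VC dimension at most $3$---the proof of \Cref{thm:vc-dim} never uses the particular values $\{-1,0\}$, as the paper explicitly remarks---so one obtains a $\delta$-additive core-set of size $\poly(|S|\cdot d/\delta)$ entirely inside $G$, with no auxiliary graph. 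The LDD layer then converts additive into multiplicative error. This is strictly simpler than your route: the generalization of the VC argument is essentially free, whereas your construction brings in the full separator/portal machinery.

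Your reduction has a structural gap at its center. For \Cref{thm:exact} to apply to $\widehat G_r$ you need it planar with $\widehat S=S\cup\{\text{portals}\}$ consecutive on a single face. Cutting along an $O(\log n)$-deep laminar family of separator cycles does not accomplish this: it fragments the plane into many regions, with portals distributed over all of their boundaries rather than one face. Worse, once you add the shortcut edges from each target $t$ to its portals---which sit on $O(\log n)$ nested separator curves enclosing $t$---planarity breaks outright: an edge from $t$ to a portal on the outermost separator must cross every intervening one. (There is also a secondary issue: attaching $t$ only to its ``nearest $O(1/\epsilon)$'' portals on each separator may miss the specific portal near the actual crossing point of the $s$--$t$ shortest path, since portals placed on behalf of \emph{other} sources can crowd that interval.) You correctly flag these as loose ends, but they are the load-bearing part of the argument; without planarity and a single-face source set you cannot invoke \Cref{thm:exact}, and the reduction does not go through. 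The paper's direct VC-dimension approach sidesteps all of this by staying in $G$, where the only face hypothesis needed is the one already assumed for $S$.
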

\end{mdframed}
We complement these results by providing an efficient algorithm that computes the compressions in linear time (in the input and output size), this improves upon the na\"ive algorithm that takes $O(|S|\cdot n)$ time. 

\dnsparagraph{Distributed Diameter Computation}
We are making a first step of progress on the distributed complexity of this classical problem, by presenting a $\poly(D)$ round algorithm for $D$-diameter planar graphs. No sublinear round algorithm was known for the problem before.
\begin{mdframed}[hidealllines=false,backgroundcolor=gray!20]
\vspace{-2pt}
\begin{theorem}[Distributed Planar Diameter] \label{thm:distr-exact}Given an $n$-vertex unweighted, undirected planar graph with diameter $D$, there is a randomized distributed algorithm that computes the diameter in $\widetilde{O}(D^5)$ rounds, with high probability.
\end{theorem}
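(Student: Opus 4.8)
I would lift the shortest-path-separator divide and conquer of Weimann--Yuster into $\mathsf{CONGEST}$, using Theorem~\ref{thm:exact} to make each recursive step cheap and low-congestion shortcuts \cite{ghaffari2016distributed,GhaffariP17} as the communication backbone. First build, distributedly and level by level, a recursive decomposition of depth $O(\log n)$ in which every piece $G_i$ is connected, its vertex separator $S_i$ is a union of $O(1)$ shortest paths of $G$ each of length $O(D)$ (so $|S_i|=O(D)$), $S_i$ lies on a single face of $G_i$ (an OS instance), and $G_i$ is \emph{distance-preserving}: $d_{G_i}(u,v)=d_G(u,v)$ for all $u,v\in V(G_i)$ (ensured by keeping the separating shortest paths in both children), whence $\diam(G_i)=\widetilde{O}(D)$. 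A single shortest-path separator is produced in $\widetilde{O}(D)$ rounds by \cite{GhaffariP17}; iterating this while maintaining these invariants is the ``new recursive decomposition'' of the introduction. Since the level-$\ell$ pieces overlap only on $\widetilde{O}(D)$ separator vertices, they are essentially a partition, so planar low-congestion shortcuts of quality (congestion $+$ dilation) $\widetilde{O}(D)$ exist for each level.

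\textbf{Per-level work.} Fix level $\ell$. \emph{(1)} Inside each piece $G_i$ run $|S_i|$ successive BFS explorations, so every $v\in V(G_i)$ learns $\bd_v=(d_G(v,s))_{s\in S_i}$; this costs $\widetilde{O}(|S_i|\cdot\diam(G_i))=\widetilde{O}(D^2)$ rounds, all pieces in parallel. \emph{(2)} By Theorem~\ref{thm:exact} --- more precisely by the Sauer-type bound on realizable distance patterns that underlies its proof --- the set $\{\bd_v : v\in V(G_i)\}$ has only $\widetilde{O}(|S_i|^3)=\widetilde{O}(D^3)$ distinct elements; each $v$ locally computes a canonical form of $\bd_v$ together with the side $L_i$ or $R_i$ of $S_i$ on which it lies, and the deduplicated collection of $\widetilde{O}(D^3)$ (pattern, side) records --- in total the $\widetilde{O}(|S_i|^3 D)=\widetilde{O}(D^4)$-bit ``sketch'' of Theorem~\ref{thm:exact} --- is convergecast to a leader of $G_i$ (deduplicating at every internal node, so the forwarded set stays of size $\widetilde{O}(D^3)$). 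Performed for all level-$\ell$ pieces simultaneously over the quality-$\widetilde{O}(D)$ shortcuts this takes $\widetilde{O}(D^4)\cdot\widetilde{O}(D)=\widetilde{O}(D^5)$ rounds and is the dominant cost.

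\textbf{Combining and round count.} The leader of $G_i$ now holds, for each side, the full set of realized distance vectors, so it outputs locally (in $\widetilde{O}(\poly(D))$ time) the cross-separator value $\max\{\min_{s\in S_i}(\bd[s]+\bd'[s]) : \bd \text{ realized on } L_i,\ \bd' \text{ realized on } R_i\}$, which equals $\max_{v\in L_i,\,t\in R_i}d_G(v,t)$ because $S_i$ separates $v$ from $t$ in the distance-preserving piece $G_i$. Every vertex pair is separated at exactly one level or lies in a common leaf piece; leaf pieces, of $O(D)$ vertices, are gathered in full to their leaders and handled directly. The diameter is the global maximum of all cross-separator values and leaf values, computed by one $\max$-aggregation in $\widetilde{O}(D)$ rounds. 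Over the $O(\log n)$ levels the algorithm runs in $\widetilde{O}(D^5)$ rounds.

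\textbf{Main obstacle.} The crux is the interface between Theorem~\ref{thm:exact} and the distributed setting: constructing the recursive decomposition so that every piece at every level is simultaneously distance-preserving, of diameter $\widetilde{O}(D)$, face-bounded, and admits quality-$\widetilde{O}(D)$ planar low-congestion shortcuts; and, in step (2), leaning on the VC/Sauer structure behind Theorem~\ref{thm:exact} to guarantee only $\poly(D)$ realizable distance patterns, so that the information a leader must collect is $\widetilde{O}(D^4)$ bits rather than $\Omega(n)$ and the cross-pair maximum is a $\poly(D)$-size computation rather than an $|L_i|\times|R_i|$ one. The $\widetilde{O}(D^5)$ bound is precisely $\widetilde{O}(D^4)$ bits routed over shortcuts of quality $\widetilde{O}(D)$, so slack in either estimate would have to be controlled.
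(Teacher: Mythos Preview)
Your high-level plan and round accounting match the paper, but the key structural claim fails. Pieces cannot be made distance-preserving by ``keeping the separating shortest paths in both children'': for $u,v$ both on the interior side $G^+$, the $G$-shortest $u$--$v$ path may traverse $G^-\setminus S$, so in general $d_{G^+}(u,v)>d_G(u,v)$. Hence the BFS in your Step~(1) does not produce $G$-distances, and your cross-separator formula $\min_{s\in S_i}(\bd[s]+\bd'[s])$ need not equal $d_G(u,v)$ (indeed $S_i$ need not separate $u$ from $v$ in $G$). A second problem: putting the entire separator cycle into both children lets an edge of $S$ propagate to both sides, and this can compound from level to level, so an edge may lie in many bags at the same level; your ``essentially a partition'' claim and the $\widetilde O(D)$ congestion bound do not survive. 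The paper's technical overview explicitly names both failure modes as the obstacle its new decomposition is designed to circumvent.

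The paper's fix is the \emph{Bounded Diameter Decomposition} (Section~\ref{sec:bdd}): only carefully chosen \emph{segments} of each separator are retained, guaranteeing that every edge lies in at most two bags per level while the diameter of each bag grows by only $+O(D)$ per level, to $O(D\log n)$. Because bags are not distance-preserving, the per-bag step uses the two-index formula $\min_{s,t\in S_X}\bigl(d_{G^{\pm}(u)}(u,s)+d_G(s,t)+d_{G^{\pm}(v)}(t,v)\bigr)$: the outer terms are distances inside $G^+,G^-$ (where $S_X$ lies on a single face, so Theorem~\ref{thm:exact-compr} applies), and the middle term $d_G(s,t)$ comes from $\widetilde O(D)$-bit exact distance labels in $G$ computed once (Section~\ref{sec:exact-labels}) and broadcast by the separator vertices. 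One smaller correction: the number of distinct distance tuples is $O(|S|^3D)$, not $\widetilde O(|S|^3)$; the $|S|^3$ is only the VC bound on the difference-pattern family $\mathcal F(S)$, and the extra factor $D$ is the anchor value $d(v,s_1)$. With $|S_X|=O(D\log n)$ this gives $\widetilde O(D^4)$ tuples, and after hashing each to $O(\log n)$ bits before the convergecast (rather than shipping full $\widetilde O(D)$-bit tuples) the cost is $\widetilde O(D^5)$ rounds, as stated.
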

\end{mdframed}
We also consider the problem of computing a $(1+\epsilon)$ approximation of the weighted diameter. 
Our end result is:
\begin{mdframed}[hidealllines=false,backgroundcolor=gray!20]
\vspace{-2pt}
\begin{theorem}[Approximate Weighted Compression]\label{thm:approx-diam}
Given an $n$-vertex weighted, undirected planar graph with unweighted/hop diameter $D$ and aspect ratio $W$, for every $\epsilon \in (0,1]$, there exists a distributed approximate planar diameter algorithm that computes a $(1+\epsilon)$ approximation of the diameter in $\widetilde{O}(D^3)+\poly(\log (n W)/\epsilon)\cdot D^2$ rounds, with high probability. 
\end{theorem}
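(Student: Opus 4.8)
I would follow the template of the exact algorithm behind \Cref{thm:distr-exact}, replacing its use of the exact compression scheme (\Cref{thm:exact}) by the approximate one (\Cref{thm:approx}), and replacing exact BFS/SSSP computations by approximate weighted SSSP. The backbone is the recursive graph decomposition: over $O(\log n)$ levels, $G$ is split into pieces that overlap only in separator vertices, such that each piece has hop- (i.e.\ unweighted) diameter $\widetilde{O}(D)$ and admits a shortest-path cycle separator $S$ formed from $O(1)$ shortest paths in the hop metric, hence $|S| = O(D)$ vertices lying on a single face. Each such separator is computed distributedly via the construction of Ghaffari and Parter~\cite{GhaffariP17}, using low-congestion shortcuts for communication inside a piece. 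Since $S$ is a Jordan curve in the plane, for any $u,v$ lying on opposite sides of $S$ every $u$--$v$ path in $G$ crosses $S$, so $\dist_G(u,v) = \min_{s\in S}\big(\dist_G(u,s)+\dist_G(s,v)\big)$; and every pair attaining $\diam(G)$ is, at the first level at which it is separated, such a pair. Hence it suffices, in each piece, to (a) compute $(1+\epsilon)$-approximations $\widetilde d(s,\cdot)$ of the $G$-distances from each $s\in S$ to every vertex of the piece (which the distance-preservation property of the decomposition makes computable within the piece), and (b) compute $\max\{\min_{s\in S}(\widetilde d(u,s)+\widetilde d(s,v)) : u,v \text{ on opposite sides of } S\}$; since $\widetilde d$ never underestimates and is within a $(1+\epsilon)$ factor, the maximum of the quantities from (b) over all pieces and all levels is a $(1+\epsilon)$-approximation of $\diam(G)$.

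For the round complexity: step~(a) amounts to $|S|=O(D)$ approximate weighted SSSP computations inside a piece of hop-diameter $\widetilde{O}(D)$; run via a hopset / low-congestion-shortcut SSSP subroutine, each costs $\poly(\log(nW)/\epsilon)\cdot\widetilde{O}(D)$ rounds, and since the pieces of one level can be processed in parallel this is $\poly(\log(nW)/\epsilon)\cdot D^2$ rounds per level. Step~(b) invokes the combinatorial heart of \Cref{thm:approx} --- the Sauer's-lemma / bounded-VC-dimension argument --- which bounds the number of \emph{distinct} approximate distance profiles $(\widetilde d(v,s))_{s\in S}$, over all vertices $v$ of the piece, by $\poly(|S|/\epsilon)$, each profile being describable in $\widetilde{O}(|S|\log W)$ bits; hence the whole list of realized profiles fits in $D^2\cdot\poly(\log(nW)/\epsilon)$ bits. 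Each vertex locally identifies its profile class and which side of $S$ it lies on; the list of realized classes is convergecast to a leader of the piece in $D^2\cdot\poly(\log(nW)/\epsilon)+\widetilde{O}(D)$ rounds, the value of step~(b) is computed at the leader over the at most quadratically many pairs of classes, and broadcast back. Summing over the $O(\log n)$ levels, steps (a) and (b) together cost $\poly(\log(nW)/\epsilon)\cdot D^2$ rounds, and the remaining $\widetilde{O}(D^3)$ accounts, exactly as in \Cref{thm:distr-exact}, for the purely combinatorial scaffolding: building the decomposition, computing the shortest-path separators, and the auxiliary unweighted BFS computations.

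The high-level reduction is routine; the hard part will be making every ingredient work \emph{inside a recursively generated piece} while respecting the $O(\log n)$-bits-per-edge budget of \congest{}. This requires: (i) that the decomposition genuinely keeps each piece's hop-diameter $\widetilde{O}(D)$ and preserves the relevant $G$-distances, so that a shortest-path cycle separator with $O(D)$ face vertices both exists and is distributedly constructible; (ii) running $O(D)$ approximate weighted SSSP instances simultaneously across the (near-disjoint) pieces of a level without the overlapping shortcut structures inflating the congestion; and (iii) turning the centralized VC-dimension bound of \Cref{thm:approx} into a genuine distributed aggregation --- having each vertex recognize its profile class locally and making the convergecast of the $D^2\cdot\poly(\log(nW)/\epsilon)$-bit list of realized classes (with the opposite-side bookkeeping) fit the claimed budget. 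Item~(iii), the interface between metric compression and the distributed model, is the crux; items~(i)--(ii) are inherited, essentially verbatim, from the unweighted algorithm of \Cref{thm:distr-exact}, now run on approximate weighted distances.
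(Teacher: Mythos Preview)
Your accounting does not achieve the stated bound, and the reason is that you miss the two key weighted-specific ideas. First, you take distance tuples with respect to the \emph{entire} separator $S$ of size $O(D)$; then $\poly(|S|/\e)=\poly(D/\e)$, not $\poly(\log(nW)/\e)$, so step~(b) alone costs $\poly(D/\e)$ rounds, which is not $\tO(D^3)+D^2\cdot\poly(\log(nW)/\e)$. The paper avoids this via \emph{portals}: it selects only $\tO(1/\e)$ evenly spaced vertices $Q$ on the separator (\`a la Weimann--Yuster), so tuples have length $\tO(1/\e)$ and the additive core-set has size $\poly(\log n/\e)$. But portals only give an additive $\e\widetilde D$ error if they lie on \emph{weighted} shortest paths, which forces the BDD to be built on the \emph{SSSP tree}, not the BFS tree; this is a new construction (with its own low-congestion-shortcut complications, since pieces now have large hop diameter), not something ``inherited essentially verbatim'' from the unweighted algorithm. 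Relatedly, your attribution of the $\tO(D^3)$ term is wrong: in the paper it comes from computing \emph{exact weighted distance labels} inside $G^+,G^-$ (the $\tO(D^2)$ labeling algorithm of \Cref{lem:weighted-exact-label}, run over shortcuts with $\tO(D)$ congestion), not from any unweighted scaffolding.

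Second, there is a conceptual slip in step~(b). \Cref{thm:approx} (more precisely \Cref{thm:approx-core-set}) does \emph{not} bound the number of distinct distance tuples in the weighted case---there can be $n$ of them---it only asserts the \emph{existence} of a small $\de$-additive core-set. Even less does it bound the number of distinct \emph{approximate} tuples $(\widetilde d(v,s))_{s\in S}$ output by some approximate SSSP subroutine; those values are arbitrary within their error bars. The paper therefore computes \emph{exact} tuples (with respect to the portals $Q$, inside $G^\pm$) and then clusters them by JL projection plus a randomly shifted grid so that nearby tuples hash to the same bucket; only then does the core-set bound limit the number of buckets to $\poly(|Q|/\e)$. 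Replacing exact SSSP by approximate SSSP, as you propose in step~(a), breaks this clustering argument and also breaks the portal error analysis; the paper instead builds exact SSSP and exact distance labels first (\Cref{lem:ExactSSSP,lem:weighted-exact-label}) and uses those throughout.
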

\end{mdframed}
\dnsparagraph{Distance Labels and (Exact) SSSP}
It is well known that distributed shortest path computations in weighted graphs are considerably more challenging (and provably harder in general graphs). The above mentioned $(1+\epsilon)$ approximation results are based upon additional set of tools and constructions, most notably is a construction of an exact SSSP tree. This problem has attracted a lot of attention recently in general graphs.
\begin{mdframed}[hidealllines=false,backgroundcolor=gray!20]
\vspace{-2pt}
\begin{theorem}[Exact SSSP Tree]\label{thm:sssp} There is a randomized distributed algorithm that given a source vertex $s$ computes an exact SSSP tree for any $n$-node planar undirected weighted network with unweighted diameter $D$ in $\widetilde{O}(D^2)$ rounds, with high probability.
\end{theorem}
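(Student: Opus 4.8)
The plan is to build the SSSP tree recursively on top of our distance-preserving decomposition of $G$, solving the problem bottom-up and merging the partial solutions across one separator at a time.

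First recall what the decomposition gives us: a hierarchy of depth $O(\log n)$ in which every subgraph $H$ has unweighted diameter $\widetilde{O}(D)$, and every separator is the union of $O(1)$ shortest paths of the \emph{unweighted} metric, hence consists of $\widetilde{O}(D)$ vertices, spans $O(D)$ hops, and lies on a single face of the corresponding subgraph. For every subgraph in the hierarchy we precompute low-congestion shortcuts, so that one synchronous relaxation round inside that subgraph -- every vertex forwarding its current distance estimate to its neighbours -- can be simulated in $\widetilde{O}(D)$ \congest\ rounds; subgraphs on a common level run in parallel, and since a vertex lies on only $O(\log n)$ separators the resulting congestion overhead is $O(\log n)$.

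The recursive procedure is as follows. At a leaf subgraph, which has $\widetilde{O}(D)$ vertices and $\widetilde{O}(D)$ hop-diameter, we gather the whole subgraph at a single vertex and solve it locally, in $\widetilde{O}(D)$ rounds. At an internal node with subgraph $P$, separator $\Sigma$, and children $P_1\ni s$ and $P_2$, we (a) compute, inside each child, the distances between all pairs of vertices of $\Sigma$, together with the distances from $s$ to $\Sigma$ inside $P_1$; (b) form the ``separator graph'' on vertex set $\{s\}\cup\Sigma$, where the weight of edge $\{\sigma,\sigma'\}$ is the smaller of the two within-child $\sigma$--$\sigma'$ distances and the weight of $\{s,\sigma\}$ is the $P_1$-distance, and run a weighted Bellman--Ford in it from $s$; this yields the exact values $d_P(s,\sigma)$ for $\sigma\in\Sigma$, by the standard fact that a shortest $s$--$v$ path splits into maximal segments internal to $P_1$ or to $P_2$ with all cut points on $\Sigma$; (c) push the labels $d_P(s,\Sigma)$ back into each child with a single-source weighted BFS from a super-source joined to all of $\Sigma$ (which keeps planarity, since the super-source may fan over the $O(1)$ face-paths of $\Sigma$), and take the coordinate-wise minimum with the estimate from step (a). Since $|\Sigma|=\widetilde{O}(D)$, the Bellman--Ford in the separator graph converges in $\widetilde{O}(D)$ relaxation rounds, and each such round, pipelined along the $O(D)$-hop separator and through the shortcuts, costs $\widetilde{O}(D)$; together with the parallelism across a level this makes the work per level $\widetilde{O}(D^2)$, for a total of $\widetilde{O}(D^2)$ rounds over the $O(\log n)$ levels. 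Parent pointers are read off from the distance labels in $O(1)$ further rounds.

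The main obstacle is step (a). Because the separators are short only in the unweighted metric, the within-child distances are honest distances -- not trivially the global ones -- and a naive implementation would run $\widetilde{O}(D)$ single-source computations inside each child, one per vertex of $\Sigma$; since each such computation is itself recursive, this would blow up exponentially in the depth of the hierarchy. The remedy is that the vertices of $\Sigma$ lie on a single face of the child, an Okamura--Seymour configuration, so the relevant distance matrices are Monge: one can move the source monotonically around that face in the manner of Klein's multiple-source shortest paths and obtain all $\binom{|\Sigma|}{2}$ distances using only $O(1)$ recursive single-source calls plus $\widetilde{O}(D^2)$ bookkeeping rounds. Carrying out this moving-source sweep in \congest, while keeping congestion along the $O(D)$-hop separator bounded and reusing the precomputed shortcuts, is the technical heart of the argument.
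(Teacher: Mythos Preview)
Your high-level recursion is sound, but there is a real gap at exactly the point you flag as ``the technical heart'': you never actually solve step~(a). Saying that the separator lies on a single face and that the distance matrix is Monge does not by itself give you an $\widetilde O(D^2)$ \congest\ procedure to compute all $\binom{|\Sigma|}{2}$ within-child distances. Klein's MSSP is a sequential amortized-cost argument about how many tree edges change as the source moves; it is not clear how to realize this sweep with bounded congestion in \congest, and your claim that it reduces to ``$O(1)$ recursive single-source calls'' is unsupported. Without this piece, the recursion $T(\cdot)=\widetilde O(D)\cdot T(\cdot)+\widetilde O(D^2)$ does blow up.

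The paper avoids this difficulty altogether, and the avoidance is the point. Rather than computing single-source distances and needing all-pairs separator distances as a subroutine, it computes the full Gavoille--Peleg--P\'erenn\`es--Raz \emph{distance labels} bottom-up on the BDD of the BFS tree. At a bag $X$ with separator $S_X$ and children $X_1,\dots,X_k$, each separator vertex $s\in S_X$ belongs to at most a constant number of children and already holds a label $L_{G[X_j]}(s)$ of size $\widetilde O(D)$ from the previous level; broadcasting these $\widetilde O(D)$ labels over a bag of unweighted diameter $\widetilde O(D)$ costs $\widetilde O(D^2)$, and then \emph{every} vertex $v$ can locally build the weighted clique on $\{v\}\cup S_X$ and run Dijkstra to get $d_{G[X]}(v,s)$ for all $s\in S_X$. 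No MSSP, no Monge, no multiple recursive calls. Once labels are computed, SSSP is trivial: the source broadcasts its $\widetilde O(D)$-bit label, each vertex decodes $d_G(s,v)$, and parent pointers follow. Note also that, because the BDD is built on the BFS tree, every bag already has unweighted diameter $\widetilde O(D)$, so no low-congestion shortcuts are needed here; the paper explicitly remarks that this SSSP result does not use the shortcut machinery.
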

\end{mdframed}
Interestingly, this result does not use the low-congestion shortcuts machinery. Instead, it is made possible due to our new recursive decomposition technique which preserves the unweighted diameter of each component throughout the recursion. 


\section{Technical Overview} 
\emph{Separators} are subgraphs whose removal from the graph leaves connected components that are all a constant factor smaller than the initial graph. They provide the key tool in working with planar graphs (in the centralized setting).  Typically, one desires the separator to be small, i.e, of size $\sqrt{n}$. In the distributed point of view, $D$ is typically considered to be smaller than $\sqrt{n}$ and thus in this context an $O(D)$-size separator is considered to be small. 
A celebrated result of Lipton and Tarjan \cite{lipton1979separator} demonstrates the existence of a \emph{separator path} in planar graphs. Their proof shows that:
\begin{mdframed}[hidealllines=false,backgroundcolor=white!25]
For any SSSP tree $T$ in a planar graph $G$, there is a non-tree edge $e$ (possible $e \notin G$) such that the strict interior
and strict exterior of the unique simple cycle $C$ in $T \cup \{e\}$ each contains at most
$2/3 \cdot  n$ vertices. Thus, $C$ forms a separator containing two
shortest paths in $T$. 
\end{mdframed}

\dnsparagraph{The High Level Approach for Diameter Computation}
Our diameter computation is based on a common divide and conquer approach introduced by \cite{thorup2004compact} using cycle separators. 

\begin{wrapfigure}{r}{0.25\textwidth} 
                \vspace{-10pt}
                \includegraphics[width=0.3\textwidth]{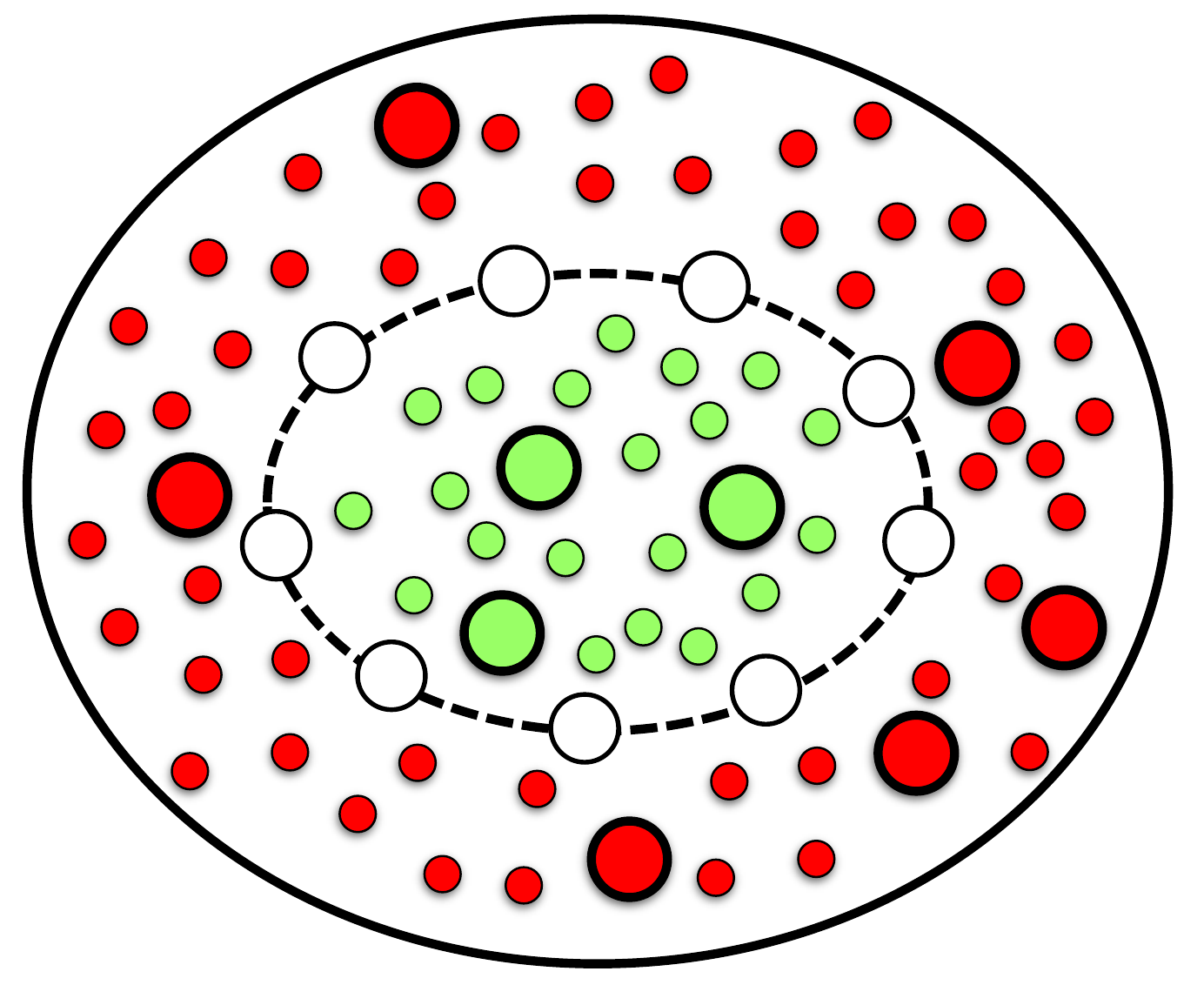}
                \label{fig:  scheme}
                \vspace{-20pt}
\end{wrapfigure} 

In any independent step of the recursion, one is given a subgraph $G' \subseteq G$ and the goal is to compute the
largest distance in $G$ between vertex pairs in $G'$. 
To do that, a cycle separator $C$ is computed in $G'$ which subdivides $G'$ into two subgraphs: the interior $G^+$ and the exterior $G^-$. 
The key task is to compute the largest $G$-distance among all pairs that are separated by the separator $C$. Since $G^+$ and $G^-$ might have $\Omega(n)$ vertices, computing the distances between all pairs of vertices across the separator is inefficient both in the centralized and the distributed setting. 

Our approach, inspired by \cite{weimann2016approximating}, is based on compressing the two sides across the separator, $G^+$ and $G^-$, into a small number of critical vertices $V^{*+} \subset G^+$ and $V^{*-} \subset G^-$, such that the vertex pair of largest distance in $G^+,G^-$ is contained $V^{*+},V^{*-}$. We call these critical sets \emph{core-sets}\footnote{We are aware to the fact that core-sets have similar yet a different context in the literature. We still use this term as it follows the same spirit of other existing core-sets.}.
In the figure, shown is a cycle separator, and the two parts inside and outside the cycle, $G^+$ and $G^-$. The core-sets are the filled large circles inside these regions. 
Having small size core-sets leads to a simple recursive scheme for diameter computation:
\protocol
{Recursive Diameter via Metric Compression:}
{A Simple Recursive Scheme by Compressing the Two Sides Across the Cycle Separator}
{fig:key-problem}
{
\begin{itemize}
\item Compute a cycle separator $C$ in $G$, which decomposes $G$ into $G^+$ and $G^-$ (e.g., see \cite{weimann2016approximating} for the precise definition of $G^+$ and $G^-$).
\item Compute shortest path distances from $C$ to all vertices in $G$.
\item Compute (via metric compression algorithm) the core-sets $V^{*+} \subset G^+$ and $V^{*-} \subset G^-$.
\item Find the farthest pair in $V^{*+}$ and $V^{*-}$.
\item Recurse on $G^+$ and $G^-$.
\end{itemize}
}
%
%
%

\subsection{The Metric Compression Problem}
To provide the high level ideas, we start by considering an unweighted $D$-diameter planar graph $G$. 
Let $S \subseteq V(G)$ be a subset of vertices lying consecutively on a common face. For each vertex $v\in V$, define the \emph{distance tuple} of $v$ to be its distances to $S$ stored as a tuple, defined as follows. 
\BD[Distance tuple]
Let $G=(V,E)$ be a graph, and let $S=\{s_1,\ldots, s_\el\}\s V$ be a subset of vertices. The \emph{$S$-distance tuple} of a vertex $v\in V$ denoted by $\tuple_S(v)$ is the function that maps each vertex $v\in V$ to the vector of distances $\tuple_S(v)=\langle d_G(s_1,v),\ldots, d_G(s_{\ell},v) \rangle$. When the set $S$ is clear from the context, we simply use \emph{distance tuple} and $\tuple(v)$.
\ED
To compress the distances, we will actually show that, perhaps surprisingly, there are only $O(|S|^3D)$ many possible distinct distance tuples. That is, if $n \gg O(|S|^3D)$, then there are many repeated distance tuples among the vertices. This means that we can simply keep a table of the $O(|S|^3D)$ distance tuples, and store for each vertex $t \in T$ the index into the tuple, which has size $O(\log n)$. Therefore, the size of the compression is $O(|S|^3D+|T|\log(|S|D))$. 
In Sec. \ref{sec:exact-unweighted} we show:
\begin{theorem}[\Cref{thm:exact}, Compression] 
Given an $n$-vertex unweighted planar graph $G=(V,E)$ and a set $S \subseteq V$ for sources lying consecutively on a single face, the number of distinct $S$-distance tuples (in $V$) is bounded by $O(|S|^3D)$.
\end{theorem}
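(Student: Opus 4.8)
The plan is to show that, although a priori the distance tuple lives in $\{0,1,\dots,D\}^{|S|}$, planarity collapses the number of \emph{realized} tuples from exponential to polynomial in $|S|$, and this collapse is precisely the dichotomy captured by Sauer's lemma. The first, easy step is to normalize. Because the sources in $S$ are consecutive on the face, consecutive sources $s_i,s_{i+1}$ satisfy $d_G(s_i,s_{i+1})\le 1$, so $|d_G(s_i,v)-d_G(s_{i+1},v)|\le 1$ for every $v\in V$. Hence the tuple $\tuple_S(v)$ is uniquely determined by the single number $a(v):=d_G(s_1,v)\in\{0,\dots,D\}$ together with the \emph{step pattern} $\sigma(v)=(\sigma_1(v),\dots,\sigma_{|S|-1}(v))\in\{-1,0,+1\}^{|S|-1}$, where $\sigma_i(v)=d_G(s_{i+1},v)-d_G(s_i,v)$. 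Since $a(v)$ takes at most $D+1$ values, it suffices to bound the number of distinct step patterns by $O(|S|^{O(1)})$; equivalently, writing $P^{+}(v)=\{i:\sigma_i(v)=+1\}$ and $P^{-}(v)=\{i:\sigma_i(v)=-1\}$, it suffices to bound $|\{(P^{+}(v),P^{-}(v)):v\in V\}|$.

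The heart of the argument is the claim that the set systems $\mathcal F^{+}=\{P^{+}(v):v\in V\}$ and $\mathcal F^{-}=\{P^{-}(v):v\in V\}$ over the ground set $\{1,\dots,|S|-1\}$ have VC dimension bounded by an absolute constant. This is where planarity enters. Fix $v$ and take a shortest-path tree $T_v$ rooted at $v$ consisting of leftmost shortest paths to the $s_i$; since $S$ lies on a single face, these paths are pairwise non-crossing and meet the boundary in the cyclic order $s_1,\dots,s_{|S|}$. Intuitively, $\sigma_i(v)$ records whether $v$ is "closer to $s_i$'s side or to $s_{i+1}$'s side" of this structure, i.e. a bisector-type condition at the boundary edge $s_is_{i+1}$. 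If an index set $\{i_1<\dots<i_{c+1}\}$ were shattered by $\mathcal F^{+}$ (for $c$ exceeding the claimed constant), we would obtain vertices whose non-crossing shortest-path trees realize too many alternations of this sidedness along the boundary face; tracing the resulting curves — concatenations of shortest paths to interleaved sources — would force two shortest paths to cross, contradicting the planar embedding. This is the same mechanism by which, in the planar Voronoi-diagram literature, the bisector of two vertices meets a single face in $O(1)$ arcs.

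Granting the VC bound, Sauer's lemma gives $|\mathcal F^{+}|,|\mathcal F^{-}|=O(|S|^{c})$ for a constant $c$; pinning down the constant (or, more cleanly, proving a joint VC bound directly on the $\{-1,0,+1\}$-valued system rather than on $\mathcal F^{+},\mathcal F^{-}$ separately), the number of step patterns is $O(|S|^{2})$ (or $O(|S|^3)$), and multiplying by the $D+1$ choices for $a(v)=d_G(s_1,v)$ yields at most $O(|S|^{3}D)$ distinct distance tuples, as claimed. The reduction to step patterns and the Sauer bookkeeping are routine; the one genuinely delicate point — and the step I expect to be the main obstacle — is isolating the correct bounded-VC-dimension set system and proving the bound via the non-crossing shortest-path / bisector argument, since the most naive candidate, the "ball" system $\{\,s_i : d_G(s_i,v)\le r\,\}$ indexed by $v$ and $r$, does not obviously have bounded VC dimension and so the sign/step encoding (or a further refinement of it) seems to be essential.
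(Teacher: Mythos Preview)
Your proposal is correct and follows essentially the same route as the paper: the paper encodes the step pattern via the joint set system $\mathcal F(S)=\{\,\{(i,\Delta):v\in A_i^\Delta\}:v\in V\,\}$ over the universe $[\ell]\times\{-1,0\}$ (exactly your suggested ``joint $\{-1,0,+1\}$-valued system''), proves its VC dimension is at most $3$, and concludes $O(|S|^3)$ step patterns and hence $O(|S|^3 D)$ tuples. One small correction to your sketch of the hard step: the contradiction is not that planarity is violated by crossing shortest paths, but a \emph{metric} one --- a topological lemma about eight points on a closed curve forces the shortest paths from the two witness vertices $t_1,t_2$ (realizing the interleaved subsets $\{(i_1,\Delta_1),(i_3,\Delta_3)\}$ and $\{(i_2,\Delta_2),(i_4,\Delta_4)\}$) to share a vertex, and swapping tails at that vertex yields an inequality contradicting the shattering constraints.
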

We start by representing the distance tuple information as a \emph{set system}. Let $S=\{s_1,\ldots, s_\el\}\s V$ be sorted according to their appearance on the face. 
For each $s_i \in S$, 
we define two sets $A_i^{\Delta}$ for $\Delta \in \{+,-\}$ containing vertices in $V$, where: 
$$A_i^{-}=\{v ~\mid~ d_G(v,s_i)<d_G(v,s_{i+1})\} \mbox{~~and~~} A_i^{+}=\{v ~\mid~ d_G(v,s_i)> d_G(v,s_{i+1})\}.$$
For each vertex $v$, we define a set $F(v)=\{(i,\Delta) ~\mid~ v \in A_i^{\Delta}\}$. The set $F(v)$ can be considered as a weaker version of the distance tuple of $v$. Specifically, two vertices $v$ and $u$ with $F(v)=F(u)$, do not necessarily have the same distance tuples. 
We then define equivalence classes based on the $F(v)$ sets, where $u$ and $v$ are in the 
same equivalence class if $F(u)=F(v)$. Our goal is to show that there are $|S|^3$ equivalence classes.
Assuming this, we are mostly done: once one knows $d_G(s_1,v)$ and $F(v)$, the entire distance tuple of $v$ is determined. As there are $D$ options for the starting value $d_G(s_1,v)$, the total number of tuples will be bounded by $|S|^3\cdot D$.

We will bound the number of equivalence classes using the VC dimension theory. In particular, we will use the well-known Sauer--Shelah Lemma \cite{sauer1972density}:
\begin{lemma}[Sauer--Shelah Lemma \cite{sauer1972density}]
Let $\mathcal{F}=\{A_1,\ldots, A_k\}$ be a family of sets over universe of size $m$, and let $T$ be another set. We say that $\mathcal{F}$ \emph{shatters} $T$ if for every subset $T' \subseteq T$, there exists $A_i \in \mathcal{F}$ such that $A_i \cap T=T'$. The VC-dimension of $\mathcal{F}$ is the largest set $T$ that can be shattered by  $\mathcal{F}$. Then, if the VC-dimension of $\mathcal{F}$ is $k$, then $|\mathcal{F}|=O(m^k)$.
\end{lemma}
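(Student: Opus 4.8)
The plan is to prove the standard sharp form of the lemma: if $\mathcal{F}$ is a family of subsets of a ground set $U$ with $|U| = m$ and VC-dimension at most $d$, then $|\mathcal{F}| \le \sum_{i=0}^{d}\binom{m}{i}$. Since each of the $d+1$ summands is at most $m^{d}$ (using $\binom{m}{i} \le m^{i} \le m^{d}$ for $i \le d$), this gives $|\mathcal{F}| = O(m^{d})$, which is exactly the asserted bound. I would prove the sharp form by induction on $m$, with $d$ allowed to vary.

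The base cases are short: if $d = 0$ no singleton is shattered, so all members of $\mathcal{F}$ agree on every element of $U$ and $|\mathcal{F}| \le 1 = \binom{m}{0}$; and if $m = 0$ the only possible set is empty, so again $|\mathcal{F}| \le 1$. For the inductive step, fix an element $x \in U$ and put $U' = U \setminus \{x\}$. Introduce the \emph{projection} $\mathcal{F}' = \{\, A \setminus \{x\} : A \in \mathcal{F}\,\}$ and the \emph{link} $\mathcal{G} = \{\, B \subseteq U' : B \in \mathcal{F} \text{ and } B \cup \{x\} \in \mathcal{F}\,\}$. The projection map $A \mapsto A \setminus \{x\}$ is at most two-to-one, and its fiber over $B$ has size exactly two iff $B \in \mathcal{G}$, which yields the counting identity $|\mathcal{F}| = |\mathcal{F}'| + |\mathcal{G}|$. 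Now $\mathcal{F}'$ is a family on the $(m-1)$-element set $U'$ of VC-dimension at most $d$ (a set avoiding $x$ that $\mathcal{F}'$ shatters is also shattered by $\mathcal{F}$), so by induction $|\mathcal{F}'| \le \sum_{i=0}^{d}\binom{m-1}{i}$. The key point is that $\mathcal{G}$ has VC-dimension at most $d-1$: if $\mathcal{G}$ shattered a set $Y \subseteq U'$ with $|Y| = d$, then $\mathcal{F}$ would shatter $Y \cup \{x\}$ — any subset of $Y \cup \{x\}$ missing $x$ is realized by the ``$B$'' witness of the matching trace on $Y$ in $\mathcal{G}$, and any subset containing $x$ by the corresponding ``$B \cup \{x\}$'' witness — contradicting that $\mathcal{F}$ has VC-dimension at most $d$; hence by induction $|\mathcal{G}| \le \sum_{i=0}^{d-1}\binom{m-1}{i}$. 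Adding the two bounds and applying Pascal's identity $\binom{m-1}{i} + \binom{m-1}{i-1} = \binom{m}{i}$ term by term gives $|\mathcal{F}| \le \sum_{i=0}^{d}\binom{m}{i}$, closing the induction.

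The only genuinely non-routine step — the one I would write out most carefully — is the VC-dimension drop for the link $\mathcal{G}$: one must see that owning a matched pair $\{B,\, B \cup \{x\}\}$ inside $\mathcal{F}$ lets one ``toggle $x$ for free'' on top of anything shattered by $\mathcal{G}$, raising the shattered dimension by one. The counting identity $|\mathcal{F}| = |\mathcal{F}'| + |\mathcal{G}|$ and the Pascal telescoping are mechanical. An alternative I could use is the polynomial / linear-algebra argument — send each $A \in \mathcal{F}$ to the restriction to $\{0,1\}^{m}$ of the multilinear monomial $\prod_{j \in A} x_{j}$, and argue via the idempotent relations $x_{j}^{2} = x_{j}$ together with the no-shattering hypothesis that these functions span a space of dimension at most $\sum_{i=0}^{d}\binom{m}{i}$ (the monomials of degree $\le d$) — but the inductive proof above is shorter and fully self-contained, so I would present that one.
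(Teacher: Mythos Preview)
Your proof is correct and is the standard inductive argument for the Sauer--Shelah lemma. However, the paper does not actually prove this statement: it is stated with a citation to \cite{sauer1972density} (and restated later as Theorem~\ref{thm:Sauer}) and used as a black box, so there is no ``paper's own proof'' to compare against. Your write-up is a valid self-contained proof of the cited classical result.
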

In our setting, $\mathcal{F}$ contains one representative set $F(v)$ from each of the equivalence classes. Thus the size of $\mathcal{F}$ is the same as the number of equivalence classes. The universe is $\mathcal{S}=\{ (s_i,\Delta) ~\mid~ s_i \in S, \Delta \in \{+,-\}\}$. Thus the universe size is $2|S|$. Suppose towards contradiction that 
the VC-dimension of $\mathcal{F}$ is four. 
By the Sauer--Shelah Lemma, we get that there is a set $T=\{(i_1,\Delta_{i_1}), (i_2,\Delta_{i_2}),(i_3,\Delta_{i_3}),(i_4,\Delta_{i_4})\}$ that is shattered by 
$\mathcal{F}$.
We first argue that we can assume w.l.o.g. that $i_1<i_2<i_3<i_4$. 
We then consider two subsets $T_1=\{(i_1,\Delta_{i_1}), (i_3,\Delta_{i_3})\}$ and $T_2=\{(i_2,\Delta_{i_2}), (i_4,\Delta_{i_4})\}$, and show that it cannot be that there are two vertices $t_1,t_2$ such that $F(t_1)\cap T=T_1$ and 
$F(t_2)\cap T=T_2$. This implies that the VC dimension of $\mathcal{F}$ is at most $3$, and thus that $|\mathcal{F}|=O(|S|^3)$. The proofs of these arguments are quite tedious, as we need to consider many cases, but aside from that, each of the cases is rather easy and follows immediately from the planar embedding.

\dnsparagraph{Fast Computation}
We continue with the unweighted case of $D$-diameter planar graph $G=(V,E)$ with $S \subseteq V$ sources on a face. A na\"{\i}ve computation applies a SSSP (single-source shortest-path) algorithm for each $s \in S$, which takes $O(|S|\cdot n)$ time. Our goal is compute all distance tuples in time $\widetilde{O}(n +|S|^3 \cdot D)$ time. 

At a high level, we will follow the multiple-source shortest path (MSSP) algorithm with all sources lying on a common face, from \cite{klein2005multiple}, while maintaining \emph{hashes} of distance tuples. Observe that we cannot explicitly maintain the size-$|S|$ distance tuple for each vertex, since that is $n\cd |S|$ integers, which exceeds the promised time bound of $\tO(n+|S|^3 \cdot D)$ if $|S|$ is large (say, $n^{\Om(1)}$). Therefore, we maintain hashes of size $O(\log n)$ instead. The tricky part is to efficiently update the distance tuples while running the MSSP algorithm. Our hash function is motivated by Rabin-Karp string hashing. For a distance tuple $\bd=(d_1,\lds,d_\el)$, we define its \emph{hash value} under \emph{base} $b$ and \emph{modulus} $p$ (for $p$ prime) as
$h(\bd,b,p) := \sum_{i=1}^\el d_i b^i \bmod p$.
Clearly, if two distance tuples are equal, then their hash values under the same base and modulus are equal. We then claim that for two distinct distance tuples, their hash values are likely to be different under a \emph{random} base, as long as the modulus is large enough.

The MSSP algorithm begins with computing the SSSP tree on the first source $s_1$. It then travels along the face segment in the order $(s_2,s_3,\lds,s_\el)$, temporarily setting each $s_j$ as the source, while maintaining a \emph{dynamic forest} $F$ of values, one for each vertex in $V$. (It also maintains a dynamic forest on the \emph{dual graph}, but we do not need to discuss that here.) 
Our algorithm will maintain another dynamic forest $F'$ on the vertices in $V$ that is updated alongside the MSSP algorithm, so that at the end, the value at each vertex $v$ is precisely the hash value of its tuple. See Sec. \ref{sec:fastunweighted} for the detailed description.

The weighted case will be similar to the unweighted one with a key crucial difference. Here, since we are looking for approximate distances, what we need is not a hash function, but a ``clustering'' function that groups together vertices whose distance tuples are close together (say, in $\el_2$-distance). For this, we will use the Johnson-Lindenstrauss (JL) dimension reduction scheme. The complete algorithm appears on Sec. \ref{sec:fastweighted}.

\dnsparagraph{Implications to Distance Oracles} Bounding the number of tuples by $O(|S|^3D)$, immediately leads to an efficient compact distance oracles scheme for maintaining $S \times T$ distances. The oracle will contain the $O(|S|^3\cdot D)$ distinct distance tuples using 
 $O(|S|^3\cdot D\log n)$ bits. Next, the distance tuple of each vertex $t \in T$ can be encoded with $O(\log(|S|D))$ bits (i.e., encoding the index of the tuple of $t$ in the list of all tuples). Given a query $s,t \in S \times T$, the oracle can compute $d_G(s,t)$ by extracting this information for the tuple of $t$ in constant time. 
The preprocessing time of the construction is linear (in the size of the oracle), due to the fast computation of the tuples.

\dnsparagraph{Implications to Diameter Computation}
In the context of diameter computation, $S$ will be the cycle separator of size $O(D)$. To compute the core-set $V^{*+}$, we simply take one representative vertex in $G^+$ for each of the $O(D^4)$ equivalence classes. The core-set $V^{*-}$ is defined analogously. This leads immediately to an $\widetilde{O}(\poly(D) \cdot n)$-time deterministic algorithm for the unweighted case, by plugging it in the recursive procedure described before. Similar bounds are obtained by \cite{chan2017faster}, using the heavy machinery of abstract Voronoi diagram.

\paragraph{The Weighted Case.}
For the weighted case, in Sec. \ref{sec:weighted-compr}, we consider an $(1+\epsilon)$-approximate compression scheme that maintains a $(1+\epsilon)$ approximation for all $S \times T$ distances. 
Here, we first provide a compression scheme with an \emph{additive} error with respect to the \emph{weighted diameter} of the graph. We then reduce the multiplicative error case to the additive via the use of low-diameter decompositions. We remark that for the purpose of diameter computation, the additive compression scheme suffices. 
The additive approximate compression is based on the notion of \emph{(additive) close} and \emph{(additive) core-set}. 
\BD[$\de$-Additive Close]
Let $G=(V,E)$ be a graph, let $S=\{s_1,\lds,s_\el\}\s V$ be a subset of vertices. Two vertices $u,v\in V$ are $\de$-additive close with respect to $S$ if
\[ |d(u,s_i) - d(v,s_i)| \le \de \qquad \forall i\in[\el] .\]
\ED
\BD[Additive Core-Set]
Let $G=(V,E)$ be a graph, let $S=\{s_1,\lds,s_\el\}\s V$ be a subset of vertices, and let $\de\ge0$ be an additive error parameter. A subset $V'\s V$ is a \emph{$\de$-additive core-set} with respect to $S$ if for all vertices $v\in V$, there exists a vertex $v'\in V'$ that is $\de$-additive close to $v$ w.r.t $S$.
\ED
Our goal will be to prove the existence of a core-set of size $\poly(|S|/\e)$.
\BT
Let $G=(V,E)$ be a weighted graph, and let $d>0$ be a parameter. Let $S=(s_1,\lds,s_\el)$ be a sequence of points on a common face arranged in cyclic order, such that the distance between any two consecutive points is at most $d$. Then, there exists a $\de$-additive core-set of size $O(\el^6(d/\de)^4)$.
\ET
We will define sets $A_i^\De$ similarly to the ones in the unweighted case, but with more values of $\De$. In particular, we will consider all multiples of $\de' /\el$ from roughly $-d$ to $d$. We will then apply 
a similar VC dimension argument as in the unweighted case, but with several subtleties as the weighted sets $A_i^\De$  are more involved. 

\subsection{Distributed Tools and Unweighted Diameter}

\dnsparagraph{The challenge} The efficiency our of recursive diameter computation critically depends on the size of the separator. Recall that also the bound on the size of the core-sets is a function of the separator size and the diameter of the graph. 
\cite{GhaffariP17} provided an $\widetilde{O}(D)$-round algorithm that computes a shortest path separator, thus a separator of size $O(D)$. In our algorithm the separator should be computed recursively, until all components are sufficiently small.  The key challenge is that already after the first computation, once we remove the vertices of the separator from the graph, the diameter of each of the components might be $\Omega(n)$.  
We note that although the DFS construction of \cite{GhaffariP17} also applied the separator algorithm in a recursive manner, for their purposes it was sufficient for the separator to be a \emph{path}, and its length could be arbitrarily large. In our setting in contrast, we need to come up with a different recursive scheme that preserves the diameter of the subgraphs throughout all recursion layers. This is our motivation for defining the bounded diameter decomposition.
%
%
%
\dnsparagraph{Bounded Diameter Decomposition (BDD)}
Informally, the bounded diameter decomposition is described by a recursive procedure that given a subgraph $G'$ of diameter $D'$, breaks down $G'$ into small components $G'_1,\ldots, G'_k$ each with at most $|G'_j|\leq |G'|/c$ vertices, for some constant $c$, such that:
\begin{itemize}
\item The diameter of each $G'$ is bounded as a function of $D'$.
\item Each edge $e \in G'$ appears on a small number of $G'_j$ subgraphs.
\end{itemize}
The first property is important for being able to compute an $O(D')$-separator recursively. The second property is important for parallelizing the computation on all the components (i.e.,  via the random-delay approach). 
Our formal definition of BDD is in fact considerably more delicate for the following reasons.
Let $T$ be a BFS tree on which the shortest-path separator is computed in the first recursion level. Let $S$ be the balanced cycle separator\footnote{This cycle separator might contain a non-$G$ edge, which we simulate as a virtual edge.} of $G$. To define the child components of $G$, there are two options. The first defines the 
child components in $G \setminus S$. while this satisfies the second property, it might violate the first property. Alternatively, one might first define the interior and exterior subgraphs w.r.t the cycle $S$ and then augment both parts with the vertices in $S$. This satisfies the first property, but now as $S$ is added to both parts, an edge might appear later on, on many subgraphs in the same recursion layer. 

To get out of this impasse, our technique adds segments of $S$ to each of the components, while guaranteeing that the second property holds. Since we do not add $S$ entirely to both parts, this might increase the diameter of the components. We then show that this increase is rather controlled, by an additive $+D$ term, in each recursive level. Thus, after all $O(\log n)$ recursion levels, the diameter is still bounded by $O(D\log n)$. This recursive decomposition continues until the components have size $O(D \log n)$. A useful property of the components is they have an $O(D\log n)$-depth spanning tree that contains at most $O(\log n)$ edges that are not in the BFS tree. This property will become useful in the weighted setting. 

\paragraph{Handling $1$-Connected Subgraphs} 
The basic separator algorithm of \cite{GhaffariP17} (which we will use throughout) requires that the boundary of each face is a simple cycle which indeed holds for biconnected subgraphs. Since we need to compute the separator recursively, even if the original input graph is biconnected, after one recursive layer it can become $1$-connected. \cite{GhaffariP17} handled this by computing the biconnected components of the graph and solving the problem for each piece separately. We take a rather different approach that allows us to simulate the separator algorithm for biconnected graphs in $1$-connected graphs with a small overhead in the number of rounds. This reduction is based on adding ``virtual" edges to the $1$-connected graph in order to make it biconnected. We then simulate these virtual edges by providing low-congestion and short path between the endpoints of the virtual edges. This tool of \emph{distributed biconnected augmentation} might provide a cleaner and more general way to handle $1$-connected graphs in other settings as well.

\paragraph{Exact Distance Labels}
To facilitate the recursive diameter computation, we first compute exact labels of $\widetilde{O}(D)$-bits. Our labels are based on the well-known scheme of Gavoille et al.\ \cite{gavoille2004distance} that has the following recursive structure. Each label $L_G(v)$ consists of (i) the $G$-distances from $v$ to all the vertices in the separator $S$ of $G$, (ii) the component ID of $v$ in $G \setminus  S$, and (iii) the label $L_{G'}(v)$, where $G'$ is the component of $v$ in $G \setminus S$. 

Our goal is to compute these labels in a top-down manner over the recursion tree of the BDD decomposition. The key challenge here is that unlike the recursion of \cite{gavoille2004distance} in which the child components are vertex-disjoint, here subgraphs are not vertex disjoint, and a vertex might belong to many subgraphs in the same recursion level. The fact that the components of \cite{gavoille2004distance} are disjoint, implies that a vertex belongs to $O(\log n)$ components in total, thus keeping the label small. In our case, we will not be able to keep a sub-label of $v$ for each of the subgraphs for which it belongs. To handle that, we will use the fact that the components in the recursive partitioning of \cite{gavoille2004distance} are fully contained in the components of the BDD decomposition. 

\paragraph{Distributed Unweighted Diameter.}
For the sake of explanation, we sketch here a $\poly(D)$-round algorithm. Obtaining the $\widetilde{O}(D^5)$- round algorithm calls for various combinations of techniques that shave off some of the $D$-factors. We first compute the distance labels of the vertices. 
The subsequent diameter computation works in a bottom-up manner on the BDD decomposition tree. The invariant that we will maintain is that in step $i$, all vertices in the $(D-i+1)$-level subgraphs $G'$ (in the BDD decomposition) have already computed the largest distance in $G$ (and not in $G'$) over all vertex pairs in $G'$. 
The leaf components of the BDD decomposition have $O(D)$ many vertices, and thus every vertex can collect the distance label of all the vertices in its leaf components in $\widetilde{O}(D^2)$ rounds. Since all these subgraphs are almost-edge disjoint, this computation can be done in parallel in all of these subgraphs.

Consider the $i^{th}$ phase of this process. For every $(D-i+1)$-level subgraph $G'$, there are two options. Either the farthest pair $u,v$ in $G'$ is contained in one of the child components, or that $u$ and $v$ are in different child components. The interesting case is the second one. Since $u$ and $v$ are in different component in the BDD, they are separated by the shortest path separator $S$ of $G'$. To compute the largest distance between vertices in different child components, we first let all vertices in the separator send their label to a global leader in $G'$. This is a total of $\widetilde{O}(|S|\cdot D)$ bits of information, and since $D(G')=O(D\log n)$, using standard pipeline procedure it can be implemented in $\widetilde{O}(D^2)$ rounds. At this points, all the vertices in $G'$ can compute their distance tuple with respect to $S$. Thanks to our metric compression solution, there are $O(D^4)$ distinct distance tuples. The last step is to aggregate all these tuples at a leader in $G'$. This  can be done in $O(D^6)$ rounds, but in our algorithm we do it more efficiently in $\widetilde{O}(D^5)$ rounds  by compressing each distance tuple into $O(\log n)$ bits. Once the leader in $G'$ receives all the distance tuples, it has all the information to compute the distance in $G$ between each pair in different child components. This holds since every such $u$-$v$ shortest-path must intersect $S$ in some vertex $s$, thus $d_G(u,v)=d_G(u,s)+d_G(s,v)$, these distances are contained in the distance tuple information. 

\subsection{Distributed Weighted Diameter}
The computation of $(1+\epsilon)$ approximate diameter is considerably more involved. It consists of several steps, and thus intermediate results (e.g., SSSP tree) which are important on their own.

\dnsparagraph{Step (1): SSSP via Distance Labels}
The major step here is the computation of exact distance labels with $\widetilde{O}(D)$-bits. Once we compute such labels, an SSSP can be easily defined: the source node $s$ sends its label $L_G(s)$ on the BFS tree to all the vertices. This allows each vertex $v$ to compute $d_G(s,v)$ based on its own label $L_G(v)$, and label of $L_G(s)$ of the source. By exchanging these distances with their neighbors, every vertex can compute its parent in the SSSP tree. Since the label size is $\widetilde{O}(D)$, we get that given exact labels, the SSSP can be computed within extra $\widetilde{O}(D^2)$ rounds. 

From that point on, we focus on labels computation.
Interestingly, we will compute these labels using the \emph{unweighted} communication backbone of the \emph{unweighted BDD}, namely, a BDD on a BFS tree. The key difference from the unweighted distance labels is that here we cannot afford to compute an SSSP from each of the separator vertices. Recall that efficient SSSP is the reason for computing this labels from first place. The key idea of our algorithm is it ``morally" applies the scheme of Gavoille et al.\ \cite{gavoille2001distance} but in a \emph{bottom-up} rather than a top-down manner. 

We will start from the leaf components which have $O(D)$ vertices. In each such leaf component $G'$, a vertex $v$ can collect the entire subgraph and locally compute a label $L_{G'}(v)$ that consists of all its distances to the vertices in $G'$. We now work from the leaf up, and consider the level-$i$ components in the BDD recursion tree. 
Consider a subgraph $G'$ and its children components $G'_1,\ldots, G'_k$ in level $i+1$. Let $S$ be the separator of $G'$ (based on which the child components are defined). By the induction assumption, we assume that each vertex $v$ in $G'_j$ has already computed its label $L_{G'_j}(v)$. Thus by the recursive label's structure of Gavoille et al.\, to compute $L_{G'}(v)$, it is sufficient to augment the sub-label $L_{G'}(v)$ with the distances (in $G'$) to each of the separator vertices in $S$. By letting each separator vertex $s \in S$ send their labels $\{L_{G'_j}(s), ~\mid~ s \in G'_j\}$ to all vertices in $G'$, every vertex $v \in G'_j$ for every $j$ has now the sufficient information to compute its distance to $S$ in $G'$ (using also its own label $L_{G'_j}(v)$). In the analysis, we show that each separator vertex appears on at most two child components of $G'$ (but potentially on many other subgraphs in this level), thus the total amount of information to be sent is $O(D^2)$. 
Since a separator vertex in level $i$ might appear in many subgraphs of levels $j\geq i$, we will mimic again the Gavoille et al. label structure, and shorten the labels of the vertices once we get to a level in which they are part of the separator. These shortening will be vital to keep the labels small.

\dnsparagraph{Step (2): BDD Decomposition on the SSSP Tree}
At this point, we already have all the ingredients necessary for an $(1+\epsilon)$ approximation in $\poly(D)\cdot \poly(1/\epsilon)$ rounds. Such an algorithm can be obtained by applying the exact same algorithm for the unweighted diameter, with the only difference is that the core-set size will be $\poly(D)\cdot \poly(1/\epsilon)$.

Obtaining a better bound of $\widetilde{O}(D^3)+D^2\cdot\poly(\log n/\epsilon)$ calls for several improvements.
Instead of computing the tuples w.r.t all vertices on the cycle separator, we will select $O(1/\epsilon)$ portal nodes on this cycle (as in \cite{weimann2016approximating}). For this approach to work, the portals cannot be selected arbitrarily, but rather should be selected carefully on a \emph{shortest-path} separator. Since the separator computed on a BFS tree is no longer a shortest-path in a weighted graph, we need to apply the BDD scheme on the SSSP tree. This will guarantee that the separator computed in each recursive layer will consist of a concatenation of $O(\log n)$ shortest path segments. We will then be able to mark $O(1/\epsilon)$ portal vertices on each such segment, and ignore the remaining vertices on the separator. 

Computing a BDD on the SSSP tree brings along several complications. The major one is that the unweighted diameter of each component might be very large. The standard remedy for these kind of problems is low-congestion shortcuts. 
However, as the components in each recursion level are not vertex disjoint, an additional argument is required in order to be able to compute the low-congestion shortcuts. For that purpose, we define the BDD in a more careful manner that guarantees the following: each edge $e \in G$ might appear on at most \emph{two} components in the same level -- one from each side of that edge. This allows us to apply the graph simulation technique of \cite{GhaffariP17}. This technique projects $G$ into a different graph $G'$ that contains the vertices of $G$ plus additional vertices. The subgraphs in the BDD level are mapped in $G'$ to vertex disjoint subgraphs, which allows safe application of low-congestion shortcuts in $G'$. The vertices will then simulate the low-congestion computation in $G'$ and will translate it back to the edge of $G$. 
%
%
%
\dnsparagraph{Step (3): Recursive Diameter Computation}
The algorithm has the same high level structure as for the unweighted, only that is works on the weighted BDD (i.e., BDD on the shortest-path tree) and uses approximate core-set with respect to a collection of $O(\log n/\epsilon)$ \emph{portals} on the separator (defined as in \cite{weimann2016approximating}). In each independent level of the recursion, given a component $G'$, compute a cycle separator $S$ that defines the subgraphs $G^+$ and $G^-$.
We then compute exact distance labels in $G^+$ and in $G^-$ using a total of $\widetilde{O}(D^3)$ rounds. Next, we restrict attention only to $O(\log n/\epsilon)$ portals on the separator of $G'$, and 
compute the $(1+\epsilon)$ approximate core-sets in $G^+$ and $G^-$.
The portals then send their exact labels (in $G$) over the low-congestion shortcuts of $G'$. Since each edge of these shortcuts appears on $O(D)$ subgraphs, the total amount of information that we pass through an edge is $\poly(\log n/\epsilon)\cdot D^2$. The method of random delay \cite{ghaffari2015near,LMR94-routing} then allows us to work on all components in parallel with a total round complexity of $\widetilde{O}(D^3)+\poly(\log n/\epsilon)\cdot D^2$.  

\subsection{Preliminaries}
\dnsparagraph{Graph Notations}
For a weighted graph $G=(V,E,\omega)$, let $d_G(u,v)$ be the total weight of the shortest path between $u$ and $v$ in $G$. When $G$ is clear from the context, we may omit it.
For a tree $T \subseteq G$, let $T(z)$ be the 
subtree of $T$ rooted at $z$, and let $\pi(u,v,T)$ be the tree path between $u$ 
and $v$, when $T$ is clear from the context, we may omit it and simply write 
$\pi(u,v)$. For a subset of vertices $S_i\subseteq V(G)$, let $G[S]$ be the induced subgraph on 
$S$.

\dnsparagraph{Planar Embeddings} The geometric planar embedding of graph $G$ is a drawing of $G$ on a plane so that no two edges intersect. A combinatorial planar embedding of $G$ determines the clockwise ordering of the edges of each node $v\in G$ around that node such that all these orderings are consistent with a plane drawing (i.e., geometric planar embedding) of $G$. Ghaffari and Haeupler \cite{ghaffari2016embedding} gave a distributed algorithm that computes a combinatorial planar embedding in $O(D \min\{\log n, D\})$ rounds, where each node learns the clockwise order of its edges.

\dnsparagraph{Low-Congestion Shortcuts} In a subsequent paper \cite{ghaffari2016distributed}, Ghaffari and Haeupler introduced the notion of low-congestion shortcuts, which provides as basic communication backbone in many planar algorithms. The definition is as follows.
\begin{definition}
\label[definition]{def:lowcongshortcut} (\textbf{$\alpha$-congestion $\beta$-dilation shortcut})
Given a graph $G=(V,E)$ and a partition of $V$ into disjoint subsets $S_1, \ldots, S_N \subseteq V$, each inducing a connected subgraph $G[S_i]$, we call a set of subgraphs $H_1, \ldots, H_N \subseteq G$, where $H_i$ is a supergraph of $G[S_i]$, an $\alpha$-congestion $\beta$-dilation shortcut if we have the following two properties:
(1) For each $i$, the diameter of the subgraph $H_i$ is at most $\beta$, and  
(2) for each edge $e \in E$, the number of subgraphs $H_i$ containing $e$ is at most $\alpha$. 
\end{definition}
Ghaffari and Haeupler~\cite{ghaffari2016distributed} proved the existence of almost optimal low-congestion covers, as well as providing efficient algorithm to compute them. 

\begin{fact}[Optimal Low-Congestion Shortcuts \cite{GhaffariP17}]
Any partition of a $D$-diameter planar graph into disjoint subsets $S_1, \ldots, S_N \subseteq V$, each inducing a connected subgraph $G[S_i]$, admits an $\alpha$-congestion $\beta$-dilation shortcut where $\alpha= O(D\log D)$ and $\beta= O(D\log D)$. Moreover, there is a randomized algorithm that computes these shortcuts within $\tilde{O}(D)$, with high probability. 
\end{fact}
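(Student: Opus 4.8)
The plan is to realize every shortcut $H_i$ as $G[S_i]$ augmented with edges of one fixed BFS tree, and to invoke planarity only when bounding congestion. First compute a BFS tree $T$ of $G$ rooted at an arbitrary vertex; since the hop-diameter is $D$, we have $d_G(r,v)\le D$ for all $v$, so $T$ has depth at most $D$, and $T$ is computable in $O(D)$ rounds by flooding. For each part $S_i$ the first, naive candidate is $H_i := G[S_i]\cup T_i$, where $T_i$ is the minimal subtree of $T$ spanning $S_i$. Because $T_i$ is a subtree of a depth-$D$ tree, any two vertices of $S_i$ are joined in $T_i$ through their lowest common ancestor by a path of length at most $2D$, so this candidate already has dilation $\beta=O(D)$. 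The difficulty is congestion: a tree edge $e$ lies in $T_i$ precisely when $S_i$ meets both sides of $e$ in $T$, and in a general graph all $N$ parts could do so.

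The heart of the argument — and the one place planarity is essential — is to replace the ``full Steiner tree'' reconnection of each $S_i$ by a sparser sub-forest of $T$ that still keeps $H_i$ connected while charging each tree edge to only $\tO(D)$ parts. The mechanism (following \cite{ghaffari2016distributed}, and in the tighter form of \cite{GhaffariP17}) is to process $T$ in $O(\log D)$ geometrically growing depth-bands and, within each band, use the planar embedding — equivalently the fundamental-cycle / dual-spanning-tree structure of the pair $(G,T)$ — to show that the tree edges a connected piece $G[S_i]$ must cross form a structured, interval-like family in the Euler-tour order of $T$, so that only $O(D)$ distinct parts can demand any fixed tree edge within a band. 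Summing over the $O(\log D)$ bands gives $\alpha=O(D\log D)$; reconnecting inside a band costs diameter $O(D)$ and stitching the $O(\log D)$ bands together costs another $\log D$ factor, so $\beta=O(D\log D)$ (the logarithm is $\log D$ rather than $\log n$ precisely because the recursion is on tree depth, which is at most $D$). This congestion-charging step is what I expect to be the main obstacle: the dilation bound and the overall skeleton are routine, but converting ``$G[S_i]$ is connected and $G$ is planar'' into an $O(D)$ multiplicity bound per tree edge is where the careful embedding argument — and essentially all of the work — lives.

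For the algorithmic claim, everything above runs in $\tO(D)$ rounds: the BFS tree in $O(D)$ rounds; the Euler-tour indices, depth-band labels, and interval data by pipelined aggregation up and down $T$ in $\tO(D)$ rounds; and the per-part reconnecting forests (each vertex learning which of its $O(\log D)$ band-edges belong to each relevant part) via the cycle-separator and shortcut-routing primitives of \cite{GhaffariP17,ghaffari2016distributed}. Randomization and the high-probability guarantee enter only through those inherited subroutines (hashing for aggregation, the randomized separator construction), not through the shortcut construction itself.
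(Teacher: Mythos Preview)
The paper does not prove this statement: it is stated as a \emph{Fact} in the Preliminaries and attributed to~\cite{GhaffariP17} (building on~\cite{ghaffari2016distributed}), with no proof given. So there is no ``paper's own proof'' to compare against here.

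Your sketch is a faithful high-level outline of the construction in those cited works: a single BFS tree, per-part reconnection through subtrees, processing in $O(\log D)$ depth bands, and a planarity-based congestion charge. As you yourself flag, the only nontrivial step is the congestion bound, and you leave that as a pointer (``interval-like family in the Euler-tour order'', ``fundamental-cycle / dual-spanning-tree structure'') rather than an argument. That is indeed where essentially all the content lives; in~\cite{ghaffari2016distributed} it goes through a tree--cotree argument showing that for a connected $G[S_i]$ the boundary of the region it occupies interacts with only $O(D)$ root-paths per band, and~\cite{GhaffariP17} tightens the recursion to get the $\log D$ rather than $\log n$. Your proposal is a correct skeleton but would not stand alone as a proof without filling in that charging step.
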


\dnsparagraph{Distributed Shortest-Path Separators} 
Throughout we will also make an extensive use of the separator algorithm by Ghaffari and Parter. This algorithm is based on tweaking the low-congestion shortcut machinery to allow fast computation on the \emph{dual} graph. 
\begin{fact}[Distributed Shortest-Path Separator, \cite{GhaffariP17}]
There is a randomized algorithm that given a $D$-diameter graph computes a shortest-path separator in $\widetilde{O}(D)$ rounds, with high probability. 
\end{fact}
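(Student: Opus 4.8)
The plan is to reconstruct the Ghaffari--Parter algorithm, whose skeleton is the classical Lipton--Tarjan argument rendered distributed. First I would compute a combinatorial planar embedding of $G$ in $O(D\min\{\log n,D\})$ rounds via Ghaffari and Haeupler~\cite{ghaffari2016embedding}, and a BFS tree $T$ rooted at an arbitrary vertex $r$ in $O(D)$ rounds; since $G$ has diameter $D$, $T$ has depth at most $D$. Conceptually triangulate $G$ by adding virtual edges (an edge of the output separator may be one of these, exactly the ``possibly $e\notin G$'' case in the Lipton--Tarjan statement quoted above), each virtual edge to be simulated later by a short low-congestion primal path between its endpoints. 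The target output is a fundamental cycle $C_e=\pi(r,u)\cup\pi(r,v)\cup\{e\}$ for a non-tree edge $e=(u,v)$ with at most $2n/3$ vertices strictly inside and at most $2n/3$ strictly outside, together with every vertex learning whether it lies on, inside, or outside $C_e$; since $T$ is a BFS tree the paths $\pi(r,u),\pi(r,v)$ are shortest paths, so $C_e$ is a shortest-path separator, and $|C_e|\le 2D+1$.

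The combinatorial core is the \emph{cotree}: the duals of the non-tree edges of $T$ form a spanning tree $T^*$ of the dual graph $G^*$. For a non-tree edge $e$, deleting the corresponding dual edge $e^*$ from $T^*$ splits the face set into two parts, and this bipartition is exactly the inside/outside partition of $C_e$; hence the number of vertices strictly inside $C_e$ is a subtree-sum over $T^*$ (after a canonical charging of vertices to faces in the triangulation, or equivalently via Euler-formula bookkeeping relating the interior faces to the interior vertices of a triangulated cycle). Rooting $T^*$ at the outer face, the inside-count is monotone along root-to-leaf paths, so the standard constructive Lipton--Tarjan descent — always moving toward the heavier side — reaches an edge $e$ whose cycle is balanced. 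Thus algorithmically it suffices to (a) evaluate subtree-sums on $T^*$ and (b) carry out this descent, both of which are standard rooted-tree computations realizable by $O(\log n)$ rounds of pointer-jumping / tree contraction \emph{on $T^*$}.

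The main obstacle — and the technical heart of \cite{GhaffariP17} — is that $T^*$ can have diameter $\Omega(n)$ (e.g.\ when $G$ is a long triangulated strip), so executing these tree computations naively over $T^*$ would cost $\Omega(n)$ rounds. The fix is to run them over a low-congestion shortcut network for $T^*$ rather than over $T^*$ itself. Each triangular face is known to its three incident primal vertices, so communication among dual vertices is simulated by primal \congest{} communication; and because $G$ has small primal diameter, the partition of $T^*$ into the relevant subtrees (induced by the primal BFS structure) admits an $\widetilde{O}(D)$-congestion $\widetilde{O}(D)$-dilation shortcut realized by primal edges, by the adapted, dual-side version of the ``Optimal Low-Congestion Shortcuts'' fact. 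Using this shortcut, each round of pointer-jumping / tree contraction on $T^*$ costs $\widetilde{O}(D)$ primal rounds, so steps (a) and (b) together cost $\widetilde{O}(D)$ rounds; the randomness and the ``with high probability'' guarantee come entirely from the randomized shortcut construction. Once $e=(u,v)$ is found, broadcasting $\pi(r,u)$ and $\pi(r,v)$ down $T$ in $O(D)$ rounds lets every vertex decide whether it lies on $C_e$, and the inside/outside label is read off from each vertex's position relative to $e^*$ in $T^*$ and propagated over the same shortcut in $\widetilde{O}(D)$ additional rounds; summing the contributions gives the claimed $\widetilde{O}(D)$ bound. The remaining subtlety is that after triangulating and rooting one must handle $G$ being only $1$-connected, so some face boundaries are not simple cycles; as discussed separately in the paper this is resolved by a distributed biconnectivity augmentation, again with virtual edges simulated by short low-congestion paths, which does not affect the asymptotic round complexity.
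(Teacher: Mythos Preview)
The paper does not prove this Fact; it is cited from~\cite{GhaffariP17}, and only a high-level description of the algorithm appears in the Preliminaries. Your reconstruction captures the essentials of that description --- the dual tree $T^*$ of the BFS tree, subtree-sums on $T^*$ encoding the ``inside'' count of each fundamental cycle, and the key point that $T^*$ can have $\Omega(n)$ diameter so computations on it must be routed through primal low-congestion shortcuts. Two departures from the paper's description (and from~\cite{GhaffariP17} itself) are worth flagging. First, the described algorithm does \emph{not} triangulate up front: it works with the original faces, computes only a $(1+\e)$-approximation of the dual weights, searches for a ``balanced'' dual node, and only if none exists locates a single ``critical'' face which it then triangulates --- so at most one virtual non-$G$ edge ever appears. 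Your global triangulation introduces up to $\Theta(n)$ virtual edges, and you are not explicit about why the dual-side computation on the enlarged co-tree still runs in $\tO(D)$; the primal shortcut fact you invoke is stated for vertex partitions, not face partitions. Second, ``$O(\log n)$ rounds of pointer-jumping on $T^*$, each simulated in $\tO(D)$ via shortcuts'' is the right intuition but glosses over the actual technical contribution of~\cite{GhaffariP17}, namely establishing that partitions of the dual admit $\tO(D)$-quality shortcuts realizable in the primal network. As a sketch of a cited black box the outline is sound; as a standalone proof those two points would need substantiation.
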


\dnsparagraph{Distributed Scheduling of Algorithms} 
Our algorithms are based on recursive graph decomposition, where in every level of the recursion we will need to work in parallel on several subgraphs. Since the \congest model allows sending only $O(\log n)$ bits on each edge per round, we will use the scheduling framework of \cite{ghaffari2015near}. In our context, this framework implies that if each edge appears on a small number of subgraphs in each recursion level, then all the algorithms (one per subgraph) can be scheduled within almost same number of rounds as a single algorithm. 
\begin{fact}[Scheduling, \cite{ghaffari2015near}]
Given a sequence of algorithms $\mathcal{A}_1,\ldots, \mathcal{A}_k$, each taking at most $\dilation$ rounds, and where for each edge, at most $\congestion$ messages
are sent through it in total over all these algorithms, then all algorithms can run in total of $O(\dilation+\congestion)$ rounds. 
\end{fact}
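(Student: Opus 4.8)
The plan is to prove this by the random-delay scheduling technique of Leighton--Maggs--Rao, adapted to \congest. First I would give each algorithm $\mathcal{A}_i$ an independent delay $\delta_i$ drawn uniformly at random from $\{0,1,\ldots,\congestion\}$, and define a \emph{virtual} schedule in which $\mathcal{A}_i$ executes its (at most) $\dilation$ steps during virtual rounds $\delta_i+1,\ldots,\delta_i+\dilation$. Since each delay is at most $\congestion$ and each algorithm runs for at most $\dilation$ steps, the virtual schedule spans at most $\congestion+\dilation$ virtual rounds. The issue is that one virtual round is not directly implementable in one actual \congest{} round: several algorithms may want to push a message across the same edge $e$ in the same virtual round.

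The core of the argument is therefore to bound, for every edge $e$ and every virtual round $r$, the \emph{load} $\ell(e,r)$, i.e.\ the number of messages that must cross $e$ in virtual round $r$. Fix $e$ and $r$. For a fixed algorithm $\mathcal{A}_i$, let $m_i$ be the number of its local steps at which it sends a message across $e$; by hypothesis $\sum_i m_i\le\congestion$ (each $\mathcal{A}_i$ individually respects the \congest{} bandwidth, so $m_i$ also equals the number of messages $\mathcal{A}_i$ sends across $e$). The message that $\mathcal{A}_i$ sends across $e$ at its local step $t$ lands in virtual round $\delta_i+t$, which equals $r$ with probability exactly $1/(\congestion+1)$ over the choice of $\delta_i$ (and $0$ if $r-t$ is out of range). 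For a fixed $i$ these events over $t$ are mutually exclusive, so $\mathcal{A}_i$ contributes to $\ell(e,r)$ a $\{0,1\}$-valued random variable with mean at most $m_i/(\congestion+1)$. Summing over $i$ and using independence of the delays across algorithms, $\ell(e,r)$ is a sum of independent $\{0,1\}$ variables with $\mathbb{E}[\ell(e,r)]\le\congestion/(\congestion+1)<1$, so a Chernoff bound gives $\Pr[\ell(e,r)\ge c\log n]\le n^{-\Omega(c)}$.

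Next I would union-bound over all (at most $n^2$) edges and all at most $\congestion+\dilation=\poly(n)$ virtual rounds, concluding that w.h.p.\ $\ell(e,r)=O(\log n)$ simultaneously for every edge and virtual round. Finally, expand each virtual round into $O(\log n)$ actual \congest{} rounds, during which the $O(\log n)$ messages queued on each edge are delivered one per round; this gives a total of $O((\dilation+\congestion)\log n)=\tO(\dilation+\congestion)$ actual rounds, matching the stated bound up to the polylogarithmic factor (the constant-overhead refinement is the one established in \cite{ghaffari2015near}, via a finer analysis over overlapping time windows). Care is needed to verify that all the algorithms can be run on the same node set in a coordinated way — each node runs its local part of every $\mathcal{A}_i$ in lockstep with the global virtual clock, which is possible since a node knows its own delays.

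The step I expect to be the main obstacle is the load bound, specifically the independence bookkeeping: although the messages of a single algorithm are heavily correlated (they share one delay $\delta_i$), the per-$(e,r)$ load still decomposes as a sum of \emph{independent} bounded random variables, because each algorithm contributes at most once to a fixed $(e,r)$ pair (its step-$t$ messages across $e$ for different $t$ fall in distinct virtual rounds). Once this decomposition is in place, the Chernoff bound, the union bound, and the slow-down are all routine.
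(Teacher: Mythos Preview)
The paper does not give its own proof of this statement; it is quoted as a black-box fact from \cite{ghaffari2015near}. So there is no in-paper argument to compare against. Your sketch is the standard random-delay (Leighton--Maggs--Rao) argument and is correct as far as it goes: the decomposition of the per-edge, per-virtual-round load into independent $\{0,1\}$ contributions (one per algorithm) is exactly the right bookkeeping, and the Chernoff plus union bound over edges and rounds is routine.

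One remark: your argument as written yields $O((\dilation+\congestion)\log n)$ rounds, i.e.\ $\tO(\dilation+\congestion)$, whereas the stated bound is $O(\dilation+\congestion)$ without the logarithm. You acknowledge this and defer the sharper bound to \cite{ghaffari2015near}; that is fine, but be aware that shaving the $\log n$ is not just ``a finer analysis over overlapping time windows'' of the same schedule --- the actual improvement in \cite{ghaffari2015near} uses a more delicate argument (grouping rounds into phases and bounding the maximum load per phase rather than per round), and getting all the way down to a true constant overhead is nontrivial. For the purposes of this paper the $\tO$ version is all that is ever used, so your sketch is already sufficient for every application downstream.
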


\paragraph{The Balanced Cycle Separator Algorithm by \cite{GhaffariP17}.}
The algorithm of \cite{GhaffariP17} gets as input a biconnected graph $G$, and a spanning tree $T \subset G$. It outputs a cycle consisting of a tree path in $T$ plus one additional edge (possibly not in $G$). 
In the high-level the algorithm computes this separator by considering the dual-tree $T'$  of $T$. 
The nodes of this dual tree are the faces of $G$, and two dual-nodes are connected in $T'$ if their faces share an non-tree edge $e \notin T$. The dual tree is rooted at the outface, and each dual-node $v'$ is given a weight as follows: consider the superface obtained by merging all faces (dual-nodes) in the subtree of $T'$ rooted at $v$; then the weight of $v'$ is the number of nodes on its superface boundary plus the number of nodes inside the superface. 
The algorithm first compute a $(1+\epsilon)$ approximation of all dual nodes in $T'$.
This computation is done on the dual tree whose vertices and edges are not part of $G$ and thus call for special tool. 
Once all weights are computed, the algorithm first attempts at finding a \emph{balanced} dual-node, a dual-node whose weight is in $[n/(3(1+\epsilon)) ,2(1+\epsilon)n/3]$. If such balanced dual-node than the boundary of its superface is the fundamental cycle separator (all the edges of this cycle are in $G$). Otherwise, there must be a \emph{critical} dual-node such that its own weight is large but the weight of each of its children in $T'$ is small. 
In this case, the algorithm mimics Lipton-Tarjan algorithm by roughly speaking ``triangulating" the face of this critical dual node. In this latter case, the cycle separator contains one edge that is not in $G$. 

\dnsparagraph{Road-Map} We start by presenting the metric compression problem in Sec. \ref{sec:compr}. This provides the combinatorial basis for our diameter algorithms. Sec. \ref{sec:unbounded-diam} presents the key tool of bounded diameter decomposition and computation of exact diameter in unweighted graphs. In Sec. \ref{sec:labelssp} we describe the construction of exact distance labels in weighted graphs and a construction of SSSP (singe source shortest path) tree. Finally, we conclude with Sec. \ref{sec:approxdiam} that provides the computation of $(1+\epsilon)$ approximation for the diameter in weighted graphs.

\section{The Metric Compression Problem}\label{sec:compr}
In Sec. \ref{sec:exact-unweighted}, we consider the setting of exact compression for unweighted graphs.
In Sec. \ref{sec:weighted-compr}, we extend the result to the $(1+\epsilon)$-approximate compression for weighted graphs. Finally, in Sec. \ref{sec:fast}, we consider the computational aspects of the problem. We present a linear time algorithm for computing the tuples. 

\subsection{Exact Compression for Unweighted Graphs}\label{sec:exact-unweighted}

This section focuses on proving the compression part of \Cref{thm:exact}. Let $G$ be a planar graph under some planar embedding, and let $S$ be a subset of vertices lying consecutively on a common face. For each vertex $v\in V$, define the \emph{distance tuple} of $v$ to be its distances to $S$ stored in a tuple, defined formally as follows. 

\BD[Distance tuple]
Let $G=(V,E)$ be a graph, and let $S=\{s_1,\ldots, s_\el\}\s V$ be a subset of vertices. The \emph{$S$-distance tuple} of a vertex $v\in V$ denoted by $\tuple_S(v)$ is the function that maps each vertex $v\in V$ to the vector of distances $\tuple_S(v)=\langle d_G(s_1,v),\ldots, d_G(v,s_{\ell}) \rangle$. When the set $S$ is clear from the context, we simply use \emph{distance tuple} and $\tuple(v)$.
\ED

To compress the distances, we will actually show that, perhaps surprisingly, there are only $O(|S|^3D)$ many possible distinct distance tuples. That is, if $n >>|S|^3 \cdot D$, then there are many repeated distance tuples among the vertices. This means that we can simply keep a table of the $O(|S|^3D)$ distance tuples, and store for each vertex the index into the tuple, which has size $O(\log n)$. Therefore, the size of the compression is $O(|S|^3D+n\log(|S|D))$. We will show:

\begin{theorem}[\Cref{thm:exact}, Compression] \label{thm:exact-compr}
Given an $n$-vertex unweighted planar graph $G=(V,E)$ and a set $S \subseteq V$ for sources lying consecutively on a single face, the number of distinct $S$-distance tuples is bounded by $O(|S|^3D)$.
\end{theorem}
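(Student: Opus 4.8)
The plan is to follow the roadmap sketched in the technical overview: reduce the counting of distance tuples to counting equivalence classes of the sets $F(v)$, and bound the latter via a VC-dimension argument combined with the Sauer--Shelah lemma. First I would set up the set system. Order $S=\{s_1,\ldots,s_\el\}$ along the face, and for each $i$ define $A_i^-=\{v : d_G(v,s_i)<d_G(v,s_{i+1})\}$ and $A_i^+=\{v : d_G(v,s_i)>d_G(v,s_{i+1})\}$ (indices cyclic, or with the convention that the last index is dropped since $S$ lies on a single arc of the face). For each vertex $v$ set $F(v)=\{(i,\De) : v\in A_i^\De\}\subseteq \mathcal S$, where $\mathcal S=\{(s_i,\De): \De\in\{+,-\}\}$ has size $2\el$. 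Two key observations make this a useful reduction. First, the values $d_G(v,s_i)$ for consecutive $s_i$ differ by at most $1$ in absolute value (since $s_i,s_{i+1}$ are adjacent on the face, hence at distance... actually we only need $|d_G(v,s_i)-d_G(v,s_{i+1})|\le$ something bounded; in the unweighted case consecutive face vertices are at distance at most a constant, so the difference is $O(1)$ — more precisely, the proof should only need that knowing $d_G(v,s_1)$ and the sign pattern $F(v)$ pins down the whole tuple). Second, and this is the crux of the reduction: the triple $(d_G(v,s_1),\ F(v))$ determines $\tuple_S(v)$ completely, because $F(v)$ records the relative order (strict less / strict greater / tie) of every pair of consecutive distances, and consecutive distances differ by at most $1$, so each $d_G(v,s_{i+1})$ is forced from $d_G(v,s_i)$ and the bit $(i,\cdot)\in F(v)$. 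Since there are at most $D+1$ choices for $d_G(v,s_1)$, it remains to show there are $O(|S|^3)$ distinct sets $F(v)$.

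Next I would invoke Sauer--Shelah. Let $\mathcal F$ be the family containing one representative $F(v)$ from each equivalence class, viewed as a set system over the universe $\mathcal S$ of size $2\el$. It suffices to show $\mathcal F$ has VC-dimension at most $3$; then $|\mathcal F|=O(\el^3)=O(|S|^3)$ and we are done. Suppose for contradiction that some $4$-element set $T=\{(i_1,\De_1),(i_2,\De_2),(i_3,\De_3),(i_4,\De_4)\}$ is shattered. I would first argue we may assume $i_1<i_2<i_3<i_4$: if two of the indices coincide, say $i_a=i_b$ with $\De_a\ne\De_b$, then no vertex can lie in both $A_{i_a}^+$ and $A_{i_a}^-$, so the subset containing both $(i_a,\De_a)$ and $(i_b,\De_b)$ cannot be realized, contradicting shattering; hence the four indices are distinct and can be relabeled in increasing order. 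Then, following the overview, take $T_1=\{(i_1,\De_1),(i_3,\De_3)\}$ and $T_2=\{(i_2,\De_2),(i_4,\De_4)\}$, and derive a contradiction from the existence of $t_1,t_2$ with $F(t_1)\cap T=T_1$ and $F(t_2)\cap T=T_2$. The idea is that membership of $v$ in $A_i^-$ versus $A_i^+$ is governed by which side of a certain curve (the bisector-like set between $s_i$ and $s_{i+1}$) the vertex lies on, and the cyclic/linear order of $s_1,\ldots,s_\el$ on the face forces a "non-crossing" or "laminar" structure on these side-conditions — an interleaved pattern $i_1<i_2<i_3<i_4$ with $t_1$ agreeing with $T$ exactly on positions $\{1,3\}$ and $t_2$ exactly on positions $\{2,4\}$ would require two shortest-path-bisector curves to cross in a way the planar embedding forbids. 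Concretely I would use shortest paths from $t_1$ (resp. $t_2$) to the relevant $s_i$'s and the fact that $S$ is on one face to get a planarity/Jordan-curve contradiction.

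The main obstacle is precisely the last step: making the "two curves cannot cross" intuition into a rigorous case analysis. One has to pin down what object plays the role of the separating curve for the condition $(i,\De)\in F(v)$ — presumably the union of a shortest $v$-to-$s_i$ path and a shortest $v$-to-$s_{i+1}$ path, or a carefully chosen portion thereof — and then argue that for the interleaved configuration the shortest paths realizing the sign conditions for $t_1$ and for $t_2$ must intersect, and that this intersection yields a contradiction (e.g. by a shortcutting argument showing one of the strict inequalities cannot hold, or by showing the faces/regions cut out are nested incompatibly with $i_1<i_2<i_3<i_4$). The overview explicitly flags this as "quite tedious... many cases", and I expect the proof to branch on the $2^4$ sign choices for $(\De_1,\De_2,\De_3,\De_4)$ (reducible by symmetry $+\leftrightarrow-$ and by reversing the face order to a handful of essentially-distinct cases), each dispatched by a short planar-embedding argument. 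The bounds on the distance-tuple starting value and the bounded increments are routine; the reduction to VC-dimension is clean; the geometric core of ruling out VC-dimension $4$ is where all the work lies.
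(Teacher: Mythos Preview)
Your proposal is correct and follows essentially the same route as the paper: reduce to counting the sets $F(v)$ via the observation that $(d_G(v,s_1),F(v))$ determines $\tuple_S(v)$, then bound $|\m F|$ by showing VC-dimension $\le 3$ and applying Sauer--Shelah, with the contradiction coming from the interleaved pair $T_1=\{(i_1,\De_1),(i_3,\De_3)\}$, $T_2=\{(i_2,\De_2),(i_4,\De_4)\}$.

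Two remarks on the final step, where your sketch diverges from what the paper actually does. First, you anticipate a $2^4$-case split on the signs $(\De_1,\De_2,\De_3,\De_4)$; the paper avoids this entirely. Its argument is uniform in the $\De$-values (indeed it notes the proof works for $\De$ in any real set $M$, which is what enables the weighted extension). The trick is not to look at bisector curves but simply at eight specific shortest paths: from $t_1$ to $s_{i_1},s_{i_2+1},s_{i_3},s_{i_4+1}$ and from $t_2$ to $s_{i_1+1},s_{i_2},s_{i_3+1},s_{i_4}$. The $\De$'s only enter through the inequalities (\ref{eq:q1})--(\ref{eq:q2}), and these are used at the very end in a single addition. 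Second, the planarity obstruction is packaged into a standalone drawing lemma (\Cref{lem:drawing}): with eight points arranged clockwise on a closed curve and two interior points, one cannot draw the eight arcs so that the four from each interior point are pairwise non-crossing \emph{and} the four consecutive pairs $(p_1,p_2),(p_3,p_4),(p_5,p_6),(p_7,p_8)$ are pairwise non-touching. Applying this forces some shortest path from $t_1$ to touch one from $t_2$ at a common vertex, and then a path-swap at that vertex (exactly the ``shortcutting'' you mention) contradicts one of the strict inequalities. So your instinct about the mechanism is right; the execution is cleaner than the case analysis you were bracing for.
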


Our proof consists of two main steps. First, we define an alternative representation of distance tuples that utilizes the definition below.

\BD
For each $i\in[\el-1]$ and $\De\in\{-1,0\}$, define the set $A_i^\De:=\{v\in V:d(v,s_i) \le d(v,s_{i+1})+\De\}$. 
Define the family
\begin{gather*}
\m F(S):=\{ \{ (i,\De):v\in A_i^\De \} : v\in V\} .
\end{gather*}
\ED

We show that, modulo a factor of $O(D)$, our task of bounding the number of distance tuples reduces to bounding the size of $\m F(S)$. Second, we prove the size bound $|\m F(S)| \le O(|S|^3)$ using a VC-dimension argument. In particular, we show that the set system represented by $\m F(S)$ has VC dimension at most $3$. Combining these two steps proves the desired $O(\el^3 D)$ bound on distinct distance tuples.

\subsubsection{Reduction to VC Dimension Argument}

We now proceed with the technical details, beginning with the alternative representation step. Fix an arbitrary $r\in[\el]$, and define $[D]_0 := \{0,1,2,\lds, D\}=\{0\}\cup[D]$. Our domain will be $[D]_0\times \m F(S)$, and each vertex $v\in V$ will be represented by the tuple $(d(v,s_r),\{(i,\De):v\in A_i^\De\}) \in [D]_0\times \m F(S)$. Here, we use the fact that the graph diameter is at most $D$, so $d(v,s_r)\in[D]_0$. We first prove the following claim, which establishes a surjective map from $[D]_0\times \m F(S)$ to the set of distinct distance tuples. This bounds the number of distinct distance tuples by the size of the domain $[D]_0 \times \m F(S)$.

\BCL\label{clm:canon}
Let $r$ be any integer in $[\el]$. For each vertex $v\in V$, its distance tuple is determined by the value of $d(v,s_r)$ and which sets $A_i^\De$ contain $v$. More formally, there is a function $f$ from $\N\times 2^{[\el]\times\{-1,0\}}$ to the set of $S$-distance tuples such that for all $v$, the distance label of $v$ is precisely $f(d(v,s_r),\{(i,\De):v\in A_i^\De\})$.
\ECL
\BP
Fix a vertex $v\in V$; we will reconstruct the distance tuple for $v$ based on $d(v,s_r)$ and which sets $A_i^\De$ contain $v$. The value $d(v,s_r)$ is already known; we now proceed to calculate $d(v,s_i)$ for $i<r$. First, note that for all $i\in[\el-1]$, by the triangle inequality and the fact that $s_i,s_{i+1}$ are distance $1$ apart, we have $d(v,s_i)-d(v,s_{i+1})\in\{-1,0,1\}$. Whether or not $v\in A_{r-1}^{-1}$ determines whether or not $d(v,s_{r-1})-d(v,s_r)=-1$. If so, then we must have $d(v,s_{r-1})=d(v,s_r)-1$, and we are done. Otherwise, $d(v,s_{r-1})-d(v,s_r)$ is either $0$ or $1$. It must be $0$ if $v\in A_i^0$, and otherwise, it must be $1$, so in either case, we are done.

We can now proceed inductively from $r-1$ to $1$: knowing $d(v,s_i)$ for $i\in[2,r]$, we can deduce $d(v,s_{i-1})$. This gives us all distances $d(v,s_i)$ for $i\in[r]$. For the remaining distances $d(v,s_i)$ for $i\in[r+1,\el]$, we can proceed analogously.
\EP

\subsubsection{VC Dimension Argument}

Since $[D]_0$ has size $D+1$, to bound the domain size by $O(|S|^3D)$,  it suffices to bound $\m F(S)$ by $O(|S|^3)$. 
%
We will prove that the VC-dimension of $\m F(S)$ is at most $3$, and then apply the well-known \emph{Sauer's Lemma}. 

\begin{definition}[VC Dimension of a Set System]
Let $X$ be a set of elements, called the universe. A family $\m F$ of subsets of $X$ has \emph{VC dimension $d$} if $d$ is the largest possible size of a subset $Y\s X$ satisfying the following property: for any subset $Y'\s Y$, there exists subset $F\in \m F$ such that $Y\cap F=Y'$.
\end{definition}

\begin{theorem}[Sauer's lemma]\label{thm:Sauer}
Let $X$ be a set of elements. If a family $\m F$ of subsets of $X$ of VC dimension $d$, then $|\m F|=O(|X|^d)$.
\end{theorem}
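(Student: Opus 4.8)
The plan is to prove the standard quantitative form of Sauer's lemma: if $\m F\s 2^X$ has VC dimension at most $d$ and $|X|=m$, then in fact $|\m F|\le\sum_{i=0}^{d}\binom{m}{i}$, which is $O(m^d)$. I would obtain this via the \emph{down-shifting} (compression) technique, whose point is to reduce the general case to the trivial case of a downward-closed family, for which the bound is obvious.

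First, I would fix an element $x\in X$ and introduce the shift operator $T_x\colon 2^X\to 2^X$ defined by $T_x(F)=F\setminus\{x\}$ if $F\setminus\{x\}\notin\m F$, and $T_x(F)=F$ otherwise, and set $T_x(\m F)=\{T_x(F):F\in\m F\}$. I would then establish three properties. (a) $T_x$ is injective on $\m F$, so $|T_x(\m F)|=|\m F|$: if $T_x(F)=T_x(F')$ with $F\ne F'$, the two sets can differ only in whether they contain $x$, and unwinding the definition of $T_x$ quickly forces a contradiction. (b) Since $T_x(F)\s F$ always, $T_x$ is a size-nonincreasing bijection $\m F\to T_x(\m F)$, so $\sum_{F}|F|$ never increases and strictly decreases whenever $T_x(\m F)\ne\m F$; hence cycling through all $x\in X$ repeatedly must terminate at a family $\m F^\ast$ with $|\m F^\ast|=|\m F|$ and $T_x(\m F^\ast)=\m F^\ast$ for every $x$, and this last condition readily implies that $\m F^\ast$ is downward closed (for $G\in\m F^\ast$ with $x\in G$ one gets $G\setminus\{x\}\in\m F^\ast$, then iterate). (c) Shifting creates no new shattered sets: if $T_x(\m F)$ shatters $Y$, then $\m F$ shatters $Y$.

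Property (c) is the crux and the only place a genuine argument is needed. If $x\notin Y$ it is immediate, because $T_x(F)\cap Y=F\cap Y$ for every $F$, so any realization of a subset $Y'\s Y$ in $T_x(\m F)$ pulls back to a realization in $\m F$. If $x\in Y$, take any $Y'\s Y\setminus\{x\}$ and pick $F'\in\m F$ with $T_x(F')\cap Y=Y'\cup\{x\}$; since $x\in T_x(F')$ we must have $T_x(F')=F'$, which by definition of $T_x$ means $F'\setminus\{x\}\in\m F$. Then $F'\cap Y=Y'\cup\{x\}$ and $(F'\setminus\{x\})\cap Y=Y'$, so both $Y'$ and $Y'\cup\{x\}$ are realized by members of $\m F$; ranging over all such $Y'$ shows $\m F$ shatters $Y$.

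Finally I would combine the pieces: by (c) the terminal family $\m F^\ast$ still has VC dimension at most $d$, and by (b) it is downward closed; but a downward-closed family that contained any set of size larger than $d$ would contain all of that set's subsets, hence shatter it, contradicting the VC bound. Therefore every member of $\m F^\ast$ has size at most $d$, giving $|\m F|=|\m F^\ast|\le\sum_{i=0}^{d}\binom{m}{i}=O(m^d)$. I expect the main obstacle to be exactly property (c) (the $x\in Y$ case); properties (a) and (b) and the final counting step are routine bookkeeping.
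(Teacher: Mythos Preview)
Your proof via the down-shifting (compression) technique is correct and complete; the injectivity, monotonicity, and shattering-preservation arguments all go through as you describe, and the final counting for the downward-closed terminal family is immediate.

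Note, however, that the paper does not actually supply a proof of \Cref{thm:Sauer}: it is stated as the classical Sauer--Shelah lemma and invoked as a black box (the paper's contribution lies in bounding the VC dimension of the specific family $\m F(S)$, not in reproving Sauer's lemma). So there is no ``paper's own proof'' to compare against. Your shifting argument is one of the standard textbook proofs; the other common route is the double-counting/induction-on-$|X|$ argument of Pajor, but either is perfectly adequate here.
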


We now proceed with the VC dimension argument. For convenience, we redefine $\m F(S)$ in the statement of the theorem.

\begin{theorem}[Bounded VC-Dimension]\label{thm:vc-dim}
Let $G=(V,E)$ be an unweighted planar graph, and let $S:=(s_1,s_2,\lds,s_\el)$ be consecutive vertices on a face, ordered in clockwise or counter-clockwise order. 

For each $i\in[\el-1]$ and $\De\in\{-1,0\}$, define the set $A_i^\De:=\{v\in V:d(v,s_i) \le d(v,s_{i+1})+\De\}$. 
Define the universe $X:=[\el]\times \{-1,0\}$, and the family
\begin{gather}
\m F(S):=\{ \{ (i,\De):v\in A_i^\De \} : v\in V\} \s 2^{X} .\label{eq:F}
\end{gather}

Then, the VC dimension of $\m F(S)$ (on universe $X$) is at most $3$.
\end{theorem}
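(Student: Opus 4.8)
The plan is to argue by contradiction: suppose $\m F(S)$ has VC dimension at least $4$, so some $4$-element set $Y=\{(i_1,\De_1),(i_2,\De_2),(i_3,\De_3),(i_4,\De_4)\}\s X$ is shattered. First I would observe that we may assume the four indices $i_1,i_2,i_3,i_4$ are all distinct and, after relabeling, satisfy $i_1<i_2<i_3<i_4$: if two of them coincide, say $i_j=i_k$ with $\De_j=-1,\De_k=0$, then $A_{i_j}^{-1}\s A_{i_k}^0$, which already obstructs shattering (no vertex lies in the first but not the second), so the indices must be pairwise distinct. Then I would split $Y$ into the two ``interleaved'' pairs $T_1:=\{(i_1,\De_1),(i_3,\De_3)\}$ and $T_2:=\{(i_2,\De_2),(i_4,\De_4)\}$, and aim to show that no two vertices $t_1,t_2\in V$ can simultaneously realize $F(t_1)\cap Y=T_1$ and $F(t_2)\cap Y=T_2$; since shattering requires witnesses for \emph{both} of these subsets of $Y$, this contradiction forces VC dimension $\le 3$.

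The heart of the argument is a planarity/topology obstruction. The condition $t_1\in A_{i}^{\De}$ records, up to the additive slack $\De\in\{-1,0\}$, whether $t_1$ is ``closer to the $s_i$ side'' or ``closer to the $s_{i+1}$ side'' along the face. Because $s_1,\dots,s_\el$ sit consecutively on a single face, a shortest path from $t_1$ to $s_{i}$ and a shortest path from $t_2$ to $s_{i'}$ (for indices in opposite ``nesting order'') must cross, and a shortest path from $t_1$ to $s_j$ and from $t_2$ to $s_{j'}$ must cross as well; the four crossing points, together with the arcs of the face between consecutive $s$'s, would have to be arranged in a planar way that is impossible when the index pairs interleave as $i_1<i_2<i_3<i_4$ with $T_1$ taking the ``outer+inner'' pair $\{i_1,i_3\}$ and $T_2$ taking $\{i_2,i_4\}$. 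Concretely I would set up, for the hypothetical $t_1$: from $t_1\in A_{i_1}^{\De_1}$ and $t_1\notin A_{i_2}^{\De_2}$ deduce strict inequalities like $d(t_1,s_{i_1})\le d(t_1,s_{i_1+1})+\De_1$ and $d(t_1,s_{i_2})> d(t_1,s_{i_2+1})+\De_2$, and similarly a long string of such inequalities for $t_2$; then combine them via the triangle inequality along the face (using that consecutive $s$'s are at distance $1$, so $|d(v,s_a)-d(v,s_b)|\le |a-b|$) together with the planar non-crossing structure to derive $0<0$ or an analogous impossibility. I expect this to require a modest case analysis over the signs $\De_j\in\{-1,0\}$ and over which of $t_1,t_2$ is ``inside'' which region, with each individual case being short and essentially forced by the embedding.

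The main obstacle, and the step I would spend the most care on, is making the ``shortest paths must cross, hence planarity is violated'' argument fully rigorous rather than merely pictorial. The clean way is to fix a shortest-path tree (or a single SSSP forest) and note that in a planar graph with all of $S$ on one face, the shortest paths to the $s_i$'s have a laminar/non-crossing structure relative to the cyclic order of $S$ on the face; then the combinatorial statement becomes: the ``region'' in which $d(\cdot,s_i)$ beats $d(\cdot,s_{i+1})$ is a connected arc-like set in a suitable sense, and two interleaved such constraints cannot be separated. I would likely phrase the contradiction purely in terms of four shortest paths $P(t_1,s_a),P(t_1,s_b),P(t_2,s_c),P(t_2,s_d)$ and the face cycle, showing the union is a planar subgraph in which $t_1$ and $t_2$ are forced onto the same side of some Jordan curve while the distance inequalities demand they be on opposite sides. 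Everything else — the reduction to distinct interleaved indices, the application of Sauer's Lemma once VC dimension $\le 3$ is established, and the bookkeeping of the $\De$ slacks — is routine, as the excerpt itself flags (``the proofs of these arguments are quite tedious ... but aside from that, each of the cases is rather easy and follows immediately from the planar embedding'').
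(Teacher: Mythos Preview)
Your proposal is essentially the paper's approach: contradiction, reduce to distinct interleaved indices $i_1<i_2<i_3<i_4$, take witnesses $t_1,t_2$ for the two interleaved subsets $\{(i_1,\De_1),(i_3,\De_3)\}$ and $\{(i_2,\De_2),(i_4,\De_4)\}$, and derive a planar-crossing contradiction from the resulting eight strict/non-strict inequalities.

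Two points worth sharpening. First, the paper does not use the ``triangle inequality along the face'' ($|d(v,s_a)-d(v,s_b)|\le|a-b|$) to finish; instead it draws \emph{eight} specific shortest paths --- from $t_1$ to $s_{i_1},s_{i_2+1},s_{i_3},s_{i_4+1}$ and from $t_2$ to $s_{i_1+1},s_{i_2},s_{i_3+1},s_{i_4}$ --- isolates the topological obstruction into a separate lemma about curves in the plane (eight boundary points in cyclic order, two interior points, arcs from each interior point pairwise non-crossing), and concludes that at least one matched pair such as $(t_1\!\to\! s_{i_1})$ and $(t_2\!\to\! s_{i_1+1})$ must share a vertex. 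Second, the actual numeric contradiction is a path-exchange: at the common vertex, swap suffixes of the two shortest paths and sum the two resulting length bounds to obtain an inequality that contradicts the sum of the corresponding inequalities from the $t_1,t_2$ membership data. Your four-path sketch and ``derive $0<0$ via triangle inequality'' are the right spirit, but the concrete mechanism is this exchange argument, and packaging the topology as a standalone curve lemma is what makes the ``shortest paths must cross'' step rigorous rather than pictorial.
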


To prove the theorem, we use the following auxiliary lemma involving drawings in the plane below, whose easy but tedious proof is deferred to \Cref{sec:planar-drawing}.

We say that two (non-self-intersecting) arcs $C_1$ and $C_2$ \emph{cross}, when the following holds: there are two points $p$ and $q$ on both $C_1$ and $C_2$ (possibly $p=q$) and a simple curve $C$ between $p$ and $q$ (possibly the single point $p$ if $p=q$) satisfying the following: there exists a $\e_0>0$ such that for all positive $\e<\e_0$, the boundary of the region $B_\e(C)$ of all points within distance $\e$ from a point on $C$ has exactly four intersection points with $C_1$ and $C_2$, and they can be arranged in clockwise order so that the first and third points are on $C_1$ but not $C_2$, and the second and fourth are on $C_2$ but not $C_1$.
\begin{restatable}{lemma}{drawing}\label{lem:drawing}
Consider a simple, closed curve drawn in the plane, with eight (not necessarily distinct) points $p_1,p_2,\lds,p_8$ placed clockwise around the curve that satisfy $p_1\ne p_2$, $p_3\ne p_4$,  $p_5\ne p_6$, and $p_7\ne p_8$. Consider two points $q_1,q_2$ inside the curve. It is impossible to draw arcs \linebreak $(q_1,p_1), (q_1,p_4), (q_1,p_5), (q_1,p_8),\ (q_2,p_2), (q_2,p_3), (q_2,p_6),(q_2,p_7)$ such that:
 \BE
 \im The arcs from $q_1$ do not pairwise cross, and the arcs from $q_2$ do not pairwise cross, and
 \im the arcs from $p_1$ and $p_2$ do not touch, the arcs from $p_3$ and $p_4$ do not touch, the arcs from $p_5$ and $p_6$ do not touch, and the arcs from $p_7$ and $p_8$ do not touch.
 \EE
\end{restatable}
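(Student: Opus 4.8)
The plan is a Jordan-curve argument built on the ``pinwheel'' that the four arcs at $q_1$ cut out of the closed disk $\overline D$ bounded by the curve. I would first reduce to general position. Since $p_1\ne p_2$, $p_3\ne p_4$, $p_5\ne p_6$, $p_7\ne p_8$, the only points among $p_1,\dots,p_8$ that may coincide are non-constrained neighboring pairs, namely $\{p_2,p_3\}$, $\{p_4,p_5\}$, $\{p_6,p_7\}$, $\{p_8,p_1\}$; splitting a coincident such pair into two nearby points of the curve placed in its correct cyclic slot turns any valid drawing into a valid drawing with all eight points distinct, so I may assume this. By small perturbations I may further assume that every arc meets the curve only at its two endpoints, that the four arcs at $q_1$ pairwise meet only at $q_1$, and that the four arcs at $q_2$ pairwise meet only at $q_2$. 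Finally $q_2\ne q_1$ and $q_2$ lies on no arc from $q_1$: otherwise one of the four forbidden pairs of arcs would share a point (e.g.\ $q_2=q_1$ gives $q_2p_2\cap q_1p_1\ni q_1$, and $q_2\in q_1p_4$ gives $q_2p_3\cap q_1p_4\ni q_2$).

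Under these normalizations the four arcs at $q_1$ partition $\overline D$ into four regions, each bounded by two of them together with one of the four sub-arcs into which $p_1,p_4,p_5,p_8$ divide the curve: $R_A$ bounded by $q_1p_1$, $q_1p_4$ and the sub-arc $\gamma_{14}$ from $p_1$ to $p_4$ through $p_2,p_3$; similarly $R_B$ (arcs $q_1p_4,q_1p_5$, sub-arc $p_4$–$p_5$), $R_C$ (arcs $q_1p_5,q_1p_8$, sub-arc $\gamma_{58}$ from $p_5$ to $p_8$ through $p_6,p_7$), and $R_D$ (arcs $q_1p_8,q_1p_1$, sub-arc $p_8$–$p_1$). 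The point $q_2$ lies in the open interior of exactly one of them. The relabeling $p_i\mapsto p_{i+4\bmod 8}$, keeping $q_1$, $q_2$ and all arcs fixed, preserves every hypothesis of the lemma as well as the arc-to-label assignment, and it interchanges $R_A\leftrightarrow R_C$ and $R_B\leftrightarrow R_D$; hence I may assume $q_2\in R_A$ or $q_2\in R_B$.

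Both cases are settled by the same argument. Suppose $q_2\in R_A$, and let $J:=\partial R_C$; by the normalizations $J$ is a simple closed curve whose interior is $R_C$, and $q_2\notin\overline{R_C}$. The arc $q_2p_6$ ends at $p_6\in\gamma_{58}\subseteq J$, which it reaches from inside $R_C$ (its only contact with the curve is the endpoint $p_6$), so $q_2p_6$ meets $J$; letting $h_6$ be the last point of $q_2p_6\cap J$ before $p_6$, the sub-arc $\beta_6$ of $q_2p_6$ from $h_6$ to $p_6$ lies in $\overline{R_C}$. Since $q_2p_6$ touches the curve only at $p_6$ and is disjoint from $q_1p_5$ (a forbidden pair), $h_6$ must lie on $q_1p_8$. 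Symmetrically, using that $q_2p_7$ is disjoint from $q_1p_8$, the last sub-arc $\beta_7$ of $q_2p_7$ runs from some $e_7\in q_1p_5$ to $p_7$ inside $\overline{R_C}$. Traversing $J$ from $p_5$ along $q_1p_5$ to $q_1$, then along $q_1p_8$ to $p_8$, then along $\gamma_{58}$ meeting $p_7$ and then $p_6$, the four points appear in cyclic order $e_7,h_6,p_7,p_6$, so the chords $\{e_7,p_7\}$ and $\{h_6,p_6\}$ interleave; by the Jordan curve theorem the two arcs $\beta_6,\beta_7$ in the disk $\overline{R_C}$ must cross, at a point of $R_C$, hence at a point other than $q_2$ — contradicting that the arcs from $q_2$ are pairwise non-crossing. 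The case $q_2\in R_B$ is identical with $J:=\partial R_A$: $q_2p_2$ is disjoint from $q_1p_1$, so its final sub-arc into $R_A$ enters through $q_1p_4$ at some $f_2$; $q_2p_3$ is disjoint from $q_1p_4$, so its final sub-arc into $R_A$ enters through $q_1p_1$ at some $g_3$; the cyclic order on $\partial R_A$ is $g_3,f_2,p_3,p_2$; the chords $\{f_2,p_2\}$ and $\{g_3,p_3\}$ interleave; so $q_2p_2$ and $q_2p_3$ are forced to cross.

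I expect the genuine work to be in the normalizations of the first paragraph — verifying that coincident points and arcs that touch without crossing can be removed without violating any hypothesis — and in the remark that interleaved boundary endpoints in a disk force a true crossing (in this paper's transversal sense), not merely a tangency, of the two sub-arcs; with those in hand the rest is a direct application of the Jordan curve theorem.
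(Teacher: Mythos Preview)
Your proof is correct and follows essentially the paper's approach: partition the disk into four regions by the $q_1$-arcs and case-split on the region containing $q_2$. Your version is slightly more streamlined --- you exploit the $p_i\mapsto p_{i+4}$ symmetry to reduce to two cases and derive the contradiction via an interleaving-chords argument (forcing two $q_2$-arcs to cross inside the opposite region), whereas the paper treats three cases and in each one argues that a specific $q_2$-arc is forced to touch a forbidden $q_1$-arc.
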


Armed with \Cref{lem:drawing}, we now prove our VC dimension bound of \Cref{thm:vc-dim}.
\BP
Before we prove the theorem, we first remark that our proof will not actually use the fact that $\De$ takes on the values $-1,0$. Indeed, it can be adapted to work for $\De\in M$ for any set of real numbers $M$. This observation is needed for a smooth transition to the weighted case in \Cref{sec:small-additive}.

We argue by contradiction: suppose that the set $\m F$ has VC dimension at least $4$. Then, there exists a set $Y:=\{(i_1,\De_1),(i_2,\De_2),(i_3,\De_3),(i_4,\De_4)\}$ such that for each subset $Y'\s Y$, there exists vertex $v\in V$ such that for each $a\in[4]$,
\[v\in A_{i_a}^{\De_a} \iff (i_a,\De_a)\in Y'.\]

First, we argue that the values $i_1,i_2,i_3,i_4$ are all distinct. Suppose, otherwise, that $i_1=i_2$, and assume without loss of generality that $\De_1\le \De_2$. Consider the set $Y'=\{(i_1,\De_1)\}$; by assumption, there must be a vertex $v\in V$ satisfying $v\in A_{i_1}^{\De_1}$ and $v\notin A_{i_2}^{\De_2}=A_{i_1}^{\De_2}$. This means that
\[ d(v,s_{i_1})\le d(v,s_{i_1+1})+\De_1 \qquad\text{and}\qquad d(v,s_{i_1}) > d(v,s_{i_1+1})+\De_2 ,\]
but $\De_1\le \De_2$, so this is impossible, a contradiction.

Therefore, we can assume that $i_1,i_2,i_3,i_4$ are distinct, so assume without loss of generality that $i_1<i_2<i_3<i_4$. Consider the sets
\[ Y'_1:=\{(i_1,\De_1),(i_3,\De_3)\} \qquad\text{and}\qquad Y'_2:=\{(i_2,\De_2),(i_4,\De_4)\} .\]
By assumption, there must be a vertex $t_1\in V$ such that
\begin{align} d(t_1,s_{i_1})\le d(t_1,s_{i_1+1})+\De_1, \qquad  d(t_1,s_{i_2})> d(t_1,s_{i_2+1})+\De_2,\nonumber \\
d(t_1,s_{i_3})\le d(t_1,s_{i_3+1})+\De_3,
\qquad  d(t_1,s_{i_4})> d(t_1,s_{i_4+1})+\De_4,\label{eq:q1}\end{align}
and a vertex $t_2\in V$ such that
\begin{align} d(t_2,s_{i_1})> d(t_2,s_{i_1+1})+\De_1,\nonumber \qquad  d(t_2,s_{i_2})\le d(t_2,s_{i_2+1})+\De_2, \\
d(t_2,s_{i_3})> d(t_2,s_{i_3+1})+\De_3,
\qquad  d(t_2,s_{i_4})\le d(t_2,s_{i_4+1})+\De_4.\label{eq:q2}\end{align}

Consider the shortest paths between the pairs $(t_1,s_{i_1}),(t_1,s_{i_2+1}),(t_1,s_{i_3}),(t_1,s_{i_4+1}),$\linebreak $(t_2,s_{i_1+1}),(t_2,s_{i_2}),(t_2,s_{i_3+1}),(t_2,s_{i_4})$. 

We can assume that the paths from $t_1$ do not cross in their planar embeddings, since if two paths cross at a vertex $v\in V$, then we can modify one of the paths to agree with the other path up until vertex $v$, while still keeping it a shortest path.

Assume without loss of generality that the path $S=(s_1,\lds,s_\el)$ is numbered in clockwise order around the outer face. Since $i_1<i_2<i_3<i_4$, the points $s_{i_1},s_{i_1+1},s_{i_2},s_{i_2+1},s_{i_3},s_{i_3+1},s_{i_4},s_{i_4+1}$ are also in clockwise order around the outer face. Moreover, $s_{i_1}\ne s_{i_1+1}$, $s_{i_2}\ne s_{i_2+1}$,  $s_{i_3}\ne s_{i_3+1}$, and $s_{i_4}\ne s_{i_4+1}$. Therefore, we can invoke \Cref{lem:drawing} on $(s_{i_1},s_{i_1+1},s_{i_2},s_{i_2+1},s_{i_3},s_{i_3+1},s_{i_4},s_{i_4+1})$ with the shortest paths from $t_1,t_2$. Note that Condition~1 of \Cref{lem:drawing} is true, so it must be Condition~2 that is false. In other words, one of the following cases must hold:
 \BE
 \im The shortest paths between $(t_1,s_{i_1})$ and $(t_2,s_{i_1+1})$ intersect at some vertex $v_1\in V$.
 \im The shortest paths between $(t_1,s_{i_2+1})$ and $(t_2,s_{i_2})$ intersect at some vertex $v_2\in V$.
 \im The shortest paths between $(t_1,s_{i_3})$ and $(t_2,s_{i_3+1})$ intersect at some vertex $v_3\in V$.
 \im The shortest paths between $(t_1,s_{i_4+1})$ and $(t_2,s_{i_4})$ intersect at some vertex $v_4\in V$.
 \EE
All four cases are similar, but for completeness, we will go through all the cases in order, starting from Case~1. Assuming Case~1, define $P_{1,1}$ and $P_{2,2}$ to be the shortest paths between $(t_1,s_{i_1})$ and $(t_2,s_{i_1+1})$, respectively. Now, consider a path $P_{1,2}$ that travels from $t_1$ to $v_1$ along $P_{1,1}$, and then from $v_1$ to $s_{i_1+1}$ along $P_{2,2}$. Also, consider a path $P_{2,1}$ that travels from $t_2$ to $v_1$ along $P_{2,2}$, and then from $v_1$ to $s_{i_1}$ along $P_{1,1}$. For a path $P$, let $\len(P)$ be the number of edges on the path $P$. We have 
\begin{gather} d(t_1,s_{i_1+1})+d(t_2,s_{i_1})\le \len(P_{1,2})+\len(P_{2,1})=\len(P_{1,1})+\len(P_{2,2})=d(t_1,s_{i_1})+d(t_2,s_{i_1+1})   .\label{eq:sum}\end{gather}
However, summing up the inequality $d(t_1,s_{i_1})\le d(t_1,s_{i_1+1})+\De_1$ from (\ref{eq:q1}) and the inequality $d(t_2,s_{i_1+1})+\De_1< d(t_2,s_{i_1})$ from (\ref{eq:q2}) gives
\[ d(t_1,s_{i_1})+d(t_2,s_{i_1+1})+\De_1 < d(t_1,s_{i_1+1})+\De_1+d(t_2,s_{i_1}) ,\]
which contradicts (\ref{eq:sum}).

For Case~2, define $P_{1,1}$ and $P_{2,2}$ to be the shortest paths between $(t_1,s_{i_2+1})$ and $(t_2,s_{i_2})$, respectively. Now, consider a path $P_{1,2}$ that travels from $t_1$ to $v_2$ along $P_{1,1}$, and then from $v_2$ to $s_{i_2}$ along $P_{2,2}$. Also, consider a path $P_{2,1}$ that travels from $t_2$ to $v_2$ along $P_{2,2}$, and then from $v_2$ to $s_{i_2+1}$ along $P_{1,1}$. For a path $P$, let $\len(P)$ be the number of edges on the path $P$. We have 
\begin{gather} d(t_1,s_{i_2})+d(t_2,s_{i_2+1})\le \len(P_{1,2})+\len(P_{2,1})=\len(P_{1,1})+\len(P_{2,2})=d(t_1,s_{i_2+1})+d(t_2,s_{i_2})   .\label{eq:sum2}\end{gather}
However, summing up the inequality $d(t_1,s_{i_2})> d(t_1,s_{i_2+1})+\De_2$ from (\ref{eq:q1}) and the inequality $d(t_2,s_{i_2+1})+\De_2\ge d(t_2,s_{i_2})$ from (\ref{eq:q2}) gives
\[ d(t_1,s_{i_2})+d(t_2,s_{i_2+1})+\De_2 > d(t_1,s_{i_2+1})+\De_2+d(t_2,s_{i_2}) ,\]
which contradicts (\ref{eq:sum2}).

Finally, Case~3 is identical to Case~1 with every $i_1$, $v_1$, and $\De_1$ replaced by $i_3$, $v_3$, and $\De_3$, respectively, and Case~4 is identical to Case~2 with every $i_2$, $v_2$, and $\De_2$ replaced by $i_4$, $v_4$, and $\De_4$, respectively.
\EP

Thus, Sauer's lemma implies the following corollary, which concludes the proof of \Cref{thm:exact-compr}.
\BC\label{lem:F-bound}
For the family $\m F(S)$ defined in (\ref{eq:F}), we have $|\m F(S)|=O(|S|^3)$.
\EC

\subsection{$(1+\epsilon)$ Compression for Weighted Graphs}\label{sec:weighted-compr}
This section focuses on proving the compression part of \Cref{thm:approx}. The setting is the same as that in \Cref{sec:exact-unweighted}, except that $G$ is now a weighted graph.

\begin{theorem}\label{thm:approx-compr}
Given an $n$-vertex weighted planar graph $G=(V,E)$, and a set $S \subseteq V$ for sources lying on a single face, there exists a $(1+\e)$-compression of all $S \times V$ distances in $G$ using $\widetilde{O}(\poly(|S|/\e)+n)$ bits.
\end{theorem}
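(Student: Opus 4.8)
The plan is to build the compression in two layers. The inner layer is an \emph{additive} compression scheme for instances of bounded weighted diameter, obtained by essentially the VC-dimension argument of \Cref{thm:vc-dim}, only run against a finer family of threshold sets. The outer layer upgrades the additive guarantee to a multiplicative $(1+\e)$ one by invoking the inner layer, scale by scale, on the clusters of $O(\log W)$ low-diameter decompositions of $G$, where $W$ is the aspect ratio.

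\emph{Additive layer.} Fix a parameter $d$ that upper bounds the weighted diameter of the instance and an error parameter $\de\le d$; then every two consecutive sources on the face are within $G$-distance $d$. Imitating the unweighted construction, let $M$ be the set of integer multiples of $\eta:=\de/(2\el)$ lying in $[-d,d]$, so $|M|=O(d\el/\de)$, and for $i\in[\el-1]$, $\De\in M$ put $A_i^\De:=\{v:d(v,s_i)\le d(v,s_{i+1})+\De\}$ and $\m F(S):=\{\{(i,\De):v\in A_i^\De\}:v\in V\}$ over the universe $X:=[\el-1]\times M$. The proof of \Cref{thm:vc-dim} uses nothing about the values $\{-1,0\}$ beyond monotonicity of $A_i^\De$ in $\De$ and the planar uncrossing argument, so it applies verbatim: $\m F(S)$ has VC dimension at most $3$, whence $|\m F(S)|=O(|X|^3)=O(\el^6(d/\de)^3)$ by Sauer's lemma. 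Now, the pair consisting of which sets $A_i^\De$ contain $v$ and the value $\lfloor 2d(v,s_1)/\de\rfloor$ determines $\tuple_S(v)$ up to additive $\de$: the former reveals each consecutive difference $d(v,s_i)-d(v,s_{i+1})$ up to $\eta$ (it is the difference rounded up to $M$), so summing reconstructs the tuple relative to its first coordinate to within $(\el-1)\eta<\de/2$, and the latter pins the first coordinate to within $\de/2$. Since this pair takes only $O(\el^6(d/\de)^3)\cdot O(d/\de)=O(\el^6(d/\de)^4)$ values (using $d(v,s_1)=O(d)$), there are at most that many tuples up to $\de$-additive closeness; keeping one representative per class yields a $\de$-additive core-set of that size, and a compression storing the quantized tuples of the representatives ($\widetilde O(\el)$ bits each) plus, for each $v\in V$, a pointer to its representative ($O(\log(\el d/\de))$ bits), whose decoder returns the stored entry to within additive $\de$.

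\emph{Scale reduction.} For each $k\in\{0,\dots,\lceil\log_2 W\rceil\}$ fix a family of $O(\log n)$ random low-diameter decompositions of $G$ at radius $2^k$, each a partition into clusters of strong diameter $O(2^k\log n)$, such that for every vertex $v$ at least one partition places all of $B(v,2^k)$ inside $v$'s cluster (standard, by a union bound over the $\le\el n$ source–vertex pairs; a good family exists). For each such $k$ and each cluster $C$ with $C\cap S\ne\emptyset$, apply the additive layer to $G[C]$ with $d=\Theta(2^k\log n)$ and $\de_k=\Theta(\e 2^k)$: this is legitimate since deleting vertices only merges faces, so $S\cap C$ still lies on the outer face of $G[C]$ in the inherited cyclic order, and $\diam(G[C])=O(d)$. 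Given a query $(s,v)$ with $k^{*}:=\lfloor\log_2 d_G(s,v)\rfloor$, the padding property supplies a cluster $C^{*}$ at scale $k^{*}$ containing $s$, $v$, and a shortest $s$-$v$ path, so $d_{G[C^{*}]}(s,v)=d_G(s,v)$ and the stored compression returns it to within $\de_{k^{*}}=O(\e)\,d_G(s,v)$; the global decoder computes the estimate in every cluster containing both $s$ and $v$, discards those whose value is not $\Theta(2^k)$ for the cluster's scale $k$, and outputs the minimum, which is a $(1\pm O(\e))$-approximation of $d_G(s,v)$. For the size, each of the $\le\el$ sources lies in $O(\log n)$ clusters per scale, so only $\widetilde O(\el)$ clusters ever carry a core-set, each of $O(\el^6(\log n/\e)^4)=\poly(\el/\e)\polylog n$ quantized tuples of $\widetilde O(\el)$ bits, totalling $\widetilde O(\poly(\el/\e))$ bits; and each vertex lies in $\widetilde O(1)$ such clusters overall, contributing $\widetilde O(1)$ pointer bits, for $\widetilde O(n)$ bits in all. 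Rescaling $\e$ by a constant completes the proof (absorbing the $\log W$ factor into $\widetilde O$, as in \Cref{thm:approx}).

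\textbf{Main obstacle.} The substance is in the additive layer: verifying that the planar crossing argument of \Cref{thm:vc-dim} is genuinely insensitive to using a dense real threshold set $M$ in place of $\{-1,0\}$, and controlling the accumulation of the $\le\el$ independent rounding errors, one per consecutive difference. This accumulation is precisely why the quantization must be at granularity $\de/\el$ rather than $\de$, and it is what turns the $(d/\de)^3$ bound on $|\m F(S)|$ into the $(d/\de)^4$ bound on the core-set size. The scale reduction is routine; the points needing attention there are only that face-incidence and the cyclic order of the sources are inherited by induced subgraphs, and that the additive layer is run on $G[C]$ rather than $G$, so that the bounded-diameter hypothesis it requires actually holds.
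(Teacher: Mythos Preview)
Your proposal is correct and follows essentially the same two-layer approach as the paper: an additive-error core-set obtained by rerunning the VC-dimension argument of \Cref{thm:vc-dim} over a dense threshold set $M$ of step $\Theta(\de/\el)$ (this is exactly \Cref{thm:approx-core-set}, yielding the same $O(\el^6(d/\de)^4)$ bound), followed by a reduction to the additive case via low-diameter decompositions at $O(\log W)$ geometric scales with $O(\log n)$ repetitions each (\Cref{sec:reduction-to-additive-error}). The only cosmetic differences are in the decoder---the paper adds $\e'\cdot O(\log n/\beta)$ to each per-cluster estimate to force an overestimate and then takes the minimum, rather than your scale-consistency filter---and in that the paper invokes the pairwise separation bound of the LDD directly (which already keeps a fixed shortest $s$--$v$ path intact with probability $\ge 1-\beta\,d_G(s,v)$ by summing over its edges), whereas you phrase it as padding; the weaker property suffices and is what you actually use.
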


\subsubsection{Additive Error Case}

We first prove the result below for \emph{additive} error, based on the \emph{weighted diameter} of the graph. We then reduce the multiplicative error case to the additive via the use of low-diameter decompositions. We remark that we do not need the full power of \Cref{thm:approx-compr} in our diameter application; rather, the additive error result suffices.

\BL\label{lem:approx-additive}
Given an $n$-vertex weighted planar graph $G=(V,E)$ with weighted diameter $d$, and a set $S \subseteq V$ for sources lying on a single face, there exists an $(\e d)$-additive compression of all $S \times V$ distances in $G$ using $\widetilde{O}(\poly(|S|/\e)+n)$ bits.
\EL
To prove this lemma, we will define the notion of \emph{(additive) close} and \emph{(additive) core-set}. 

\BD[$\de$-Additive Close]
Let $G=(V,E)$ be a graph, let $S=\{s_1,\lds,s_\el\}\s V$ be a subset of vertices. Two vertices $u,v\in V$ are $\de$-additive close with respect to $S$ if
\[ |d(u,s_i) - d(v,s_i)| \le \de \qquad \forall i\in[\el] .\]
\ED

\BD[Additive Core-Set]
Let $G=(V,E)$ be a graph, let $S=\{s_1,\lds,s_\el\}\s V$ be a subset of vertices, and let $\de\ge0$ be an additive error parameter. A subset $V'\s V$ is a \emph{$\de$-additive core-set} with respect to $S$ if for all vertices $v\in V$, there exists a vertex $v'\in V'$ that is $\de$-additive close to $v$ with respect to $S$.
\ED

Our goal will be to prove the existence of a core-set of size $\poly(|S|/\e)$, from which \Cref{lem:approx-additive} immediately follows by the definition above.

\subsubsection{Proof of Small Additive Core-Sets}\label{sec:small-additive}

The statement is given below, whose proof follows by a similar VC dimension argument as in \Cref{thm:vc-dim} for the unweighted case. The main difference is that our sets $A_i^\De$ are more sophisticated.
\BT\label{thm:approx-core-set}
Let $G=(V,E,w)$ be a weighted graph, and let $d>0$ be a parameter. Let $S=(s_1,\lds,s_\el)$ be a sequence of points on a common face arranged in cyclic order, such that the distance between any two consecutive points is at most $d$. Then, there exists a $\de$-additive core-set of size $O(|S|^6(d/\de)^4)$.
\ET

Let $\de':=\Th(\de)$ be a parameter a constant factor smaller than $\de$, whose precise value is to be determined. To prove \Cref{thm:approx-core-set}, we define sets $A_i^\De$ similarly to the ones in the unweighted case, but with more values of $\De$. Define
\[ M:=\left\{ -\left\lceil \f{\el d}{\de'} \right\rceil\f{\de' }\el,\ -\lp\left\lceil \f{\el d}{\de'} \right\rceil-1\rp\f{\de' }\el,\ \lds,\ 0,\ \f{\de' }\el,\ 2\f{\de' }\el,\ 3\f{\de' }\el,\ \lds,\ \left\lceil \f{\el d}{\de'} \right\rceil\f{\de' }\el \right\} ,\]
that is, all multiples of $\de' /\el$ from roughly $-d$ to $d$. Note that $|M|=O(\el d/\de')=O(\el d/\de)$.
 Following the statement of \Cref{thm:vc-dim}, we define (1) $A_i^\De:=\{v\in V:d(v,s_i) \le d(v,s_{i+1})+\De\}$ for each $i\in[\el-1]$ and $\De\in M$, (2) the universe $X:=[\el]\times M$ of size $O(\el^2d/\de)$, and (3) the family
\begin{gather}
\m F(S):=\{ \{ (i,\De):v\in A_i^\De \} : v\in V\} .\label{eq:F-2}
\end{gather}

\BL\label{lem:same-sets-S}
Fix $i\in[\el]$. If $u,v\in V$ are in the same sets $A_i^\De$ for all $\De\in M$, that is, $u\in A_i^\De\iff v\in A_i^\De$, then
\[ |(d(u,s_i)-d(u,s_{i+1}))-(d(v,s_i)-d(v,s_{i+1}))| \le \f{\de' d}\el .\]
\EL
\BP
Since $d(s_i,s_{i+1}) \le d$ by assumption, by the triangle inequality, we have
\[ d(u,s_i) - d(u,s_{i+1}) \ge -d \qquad\text{and}\qquad d(u,s_i) - d(u,s_{i+1})\le d.\]

Let $\De_{\min}:=-\left\lceil \f{\el d}{\de'} \right\rceil\f{\de'}\el$ be the smallest value in $W$, and let $\De_{\max}:=\left\lceil \f{\el d}{\de'} \right\rceil\f{\de'}\el$ be the largest value in $W$. By knowing whether $u\in A_i^\De$ for each $\De\in W$, we know that:
\BE
\im If $u\in A_i^{\De_{\min}}$, then $d(u,s_i)-d(u,s_{i+1}) \le \De_{\min}$, so $d(u,s_i)-d(u,s_{i+1}) \in [-d,\De_{\min}]$, which is a (possibly empty) interval of length at most $\de'/\el$.
\im If $u\notin A_i^{\De_{\max}}$, then $d(u,s_i)-d(u,s_{i+1})>\De_{max}$, so $d(u,s_i)-d(u,s_{i+1})\in[\De_{\max},d]$, which is a (possibly empty) interval of length at most $\de'/\el$.
\im Otherwise, $d(u,s_i)-d(u,s_{i+1})$ is contained within an interval of length $\de'/\el$ determined by which sets $A_i^\De$ contain $u$.
\EE
In particular, we can narrow down the value of $d(u,s_i)-d(u,s_{i+1})$ to an interval of length at most $\de'/\el$ solely based on which sets $A_i^\De$ contain $u$. Since $u$ and $v$ belong to the same sets $A_i^\De$, we can repeat the argument for $v$ and obtain the same interval. Thus, $u$ and $v$ lie inside a common interval of length at most $\de'/\el$, as desired.
\EP

Below, we state a generalized version of \Cref{thm:vc-dim}. The proof is essentially unchanged, since the proof of \Cref{thm:vc-dim} can be generalized to work on any set of $\De$ values. We omit the details.

\begin{theorem}[VC-Dimension, Generalized]\label{thm:vc-dim-2}
The set system $\m F$ has VC dimension at most $3$.
\end{theorem}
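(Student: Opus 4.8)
The plan is to mirror the proof of \Cref{thm:vc-dim} essentially verbatim, since the only role that the finite value set $\{-1,0\}$ played there was to let us conclude, via the triangle inequality, that $d(v,s_i)-d(v,s_{i+1})$ lies in a bounded range — a fact we never actually used in the contradiction argument itself. I would begin by recalling the remark made inside the proof of \Cref{thm:vc-dim}: the argument there works for any set $M$ of reals in place of $\{-1,0\}$, because it only manipulates inequalities of the form $d(v,s_i)\le d(v,s_{i+1})+\De_a$ (or its negation) together with the path-swapping inequality \eqref{eq:sum}. So the proof obligation here is just to check that nothing in the geometry changes when $M$ is the large arithmetic-progression set defined above, and that the planar-drawing lemma \Cref{lem:drawing} still applies.

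Concretely, I would argue by contradiction exactly as before: assume the VC dimension of $\m F(S)$ is at least $4$, obtain a shattered set $Y=\{(i_1,\De_1),(i_2,\De_2),(i_3,\De_3),(i_4,\De_4)\}$ with $\De_a\in M$, first show the indices $i_1,i_2,i_3,i_4$ must be distinct (if $i_1=i_2$ with $\De_1\le\De_2$, the witness for $Y'=\{(i_1,\De_1)\}$ would need $d(v,s_{i_1})\le d(v,s_{i_1+1})+\De_1$ and $d(v,s_{i_1})>d(v,s_{i_1+1})+\De_2$, impossible), and WLOG order $i_1<i_2<i_3<i_4$. Then take $Y_1'=\{(i_1,\De_1),(i_3,\De_3)\}$ and $Y_2'=\{(i_2,\De_2),(i_4,\De_4)\}$, get witnesses $t_1$ and $t_2$ with the analogues of \eqref{eq:q1} and \eqref{eq:q2}, consider the eight relevant shortest paths, straighten the $t_1$-paths so they pairwise do not cross, note the eight face points $s_{i_1},s_{i_1+1},\lds,s_{i_4+1}$ are in cyclic order and consecutive pairs are distinct, and invoke \Cref{lem:drawing}. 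Since Condition~1 holds, Condition~2 fails, giving four cases of intersecting shortest paths; in each case the path-swapping inequality (the analogue of \eqref{eq:sum} / \eqref{eq:sum2}) contradicts the sum of the two relevant tuple inequalities, where now the additive constant $\De_a$ simply cancels on both sides just as $\De_1$ and $\De_2$ did in the original proof. This yields the contradiction, so the VC dimension is at most $3$.

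I do not expect any serious obstacle: the only thing to be careful about is that the face here is described as a \emph{cyclic} sequence (``arranged in cyclic order'' on a common face) rather than a consecutive path on the outer face as in \Cref{thm:vc-dim}, and $S$ need not be consecutive. But this is harmless — we only ever use that $s_{i_1},s_{i_1+1},s_{i_2},s_{i_2+1},s_{i_3},s_{i_3+1},s_{i_4},s_{i_4+1}$ appear in cyclic order around the face (which follows from $i_1<i_2<i_3<i_4$ together with any fixed cyclic labeling), that $s_{i_a}\ne s_{i_a+1}$, and that shortest paths from an interior-ish witness to these face vertices can be drawn in the planar embedding without crossing each other; \Cref{lem:drawing} is stated for exactly this configuration. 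One should also double-check the degenerate possibility that indices ``wrap around'' the cycle, but relabeling the cyclic sequence so that the four chosen indices are increasing handles it. Hence the plan is simply: reuse the proof of \Cref{thm:vc-dim} with $M$ in place of $\{-1,0\}$, verifying at each step that the value $\De_a$ is treated as an opaque real constant and cancels in \eqref{eq:sum}-type inequalities. This is what the paper means by ``the proof is essentially unchanged.''
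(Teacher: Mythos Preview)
Your proposal is correct and matches the paper's approach exactly: the paper itself says ``The proof is essentially unchanged, since the proof of \Cref{thm:vc-dim} can be generalized to work on any set of $\De$ values. We omit the details,'' and you have simply supplied those omitted details by replaying the contradiction argument with $\De_a\in M$ treated as opaque real constants that cancel in the path-swapping inequalities. Your side remarks about cyclic ordering and wrap-around are slight over-worries (the sets $A_i^\De$ are only defined for $i\in[\el-1]$, so no wrap-around arises), but they do no harm.
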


It follows by \Cref{thm:Sauer} that $|\m F| = O(|X|^3) = O(\el^6(d/\de)^3)$. We now construct our core-set as follows: for each set $F\in\m F$ and each $k\in\{0,1,2,\lds,\lf d/\de'\rf\}$, if there exists a vertex $v\in V$ satisfying
\[k\de' \le d(v,s_1) < (k+1)\de',\]
then add an arbitrary one to the core-set, and call it $v_F^k$. Call the resulting core-set $V'$, that is,
\[ V':=\{ v_F^k : F\in\m F, \, k\in\{0,1,2,\lds,\lf d/\de'\rf\} .\]
Observe that $|V'|=O(\el^6(d/\de)^4)$, meeting the desired bound from \Cref{thm:approx-core-set}. For the rest of this subsection, we focus on proving that $V'$ is a $\de$-additive core-set.

\BL\label{lem:additive-core-set-uv}
For any set $F\in\m F$ and $k\in\{0,1,2,\lds,\lf d/\de'\rf\}$, every two vertices $u,v\in E_F$ satisfying $d(u,s_1),d(v,s_1) \in [k\de',(k+1)\de')$ are $2\de'$-additive close w.r.t.\ $S$.
\EL

\BP
We will prove the following statement by induction from $i=1$ to $\el$:
\[ d(u,s_i) - d(v,s_i) \le \de' + (i-1)\f{\de' d}\el,\]
which clearly implies the lemma. The statement is true for $i=1$ by assumption. For the inductive step, assume the statement for $i$. By \Cref{lem:same-sets-S}, we have
\[ d(u,s_{i+1})-d(v,s_{i+1}) \le \lp d(u,s_i)-d(v,s_i)\rp +\f{\de' d}\el \le \lp\de'+(i-1)\f{\de' d}\el\rp + \f{\de' d}\el = \de' + i\f{\de' d}\el ,\]
completing the induction.
\EP

\BL\label{lem:additive-core-set-F-k}
For any vertex $u\in V$, consider the set $F\in\m F$ containing $u$ and the integer $k\in\{0,1,2,\lds,\lf d/\de'\rf\}$ satisfying $d(u,s_1) \in [k\de',(k+1)\de')$. Then, $u$ and $v_F^k$ are $2\de'$-additive close w.r.t.\  $S$.
\EL

\BP
Apply \Cref{lem:additive-core-set-uv} to set $F$, integer $k$, and vertices $u$ and $v_F^k$.
\EP

From \Cref{lem:additive-core-set-F-k}, it is easy to see that $V'$ is a $2\de'$-additive core-set w.r.t.\ $S$. Finally, setting $\de':=\de/2$ concludes \Cref{thm:approx-core-set}.

\subsubsection{Reduction to Additive Error}\label{sec:reduction-to-additive-error}

In this section, we prove \Cref{thm:approx-compr} using the additive error result of \Cref{lem:approx-additive}. We remark again that \Cref{thm:approx-compr}, and consequently this section, is not required for the results on approximate planar diameter. Instead, \Cref{thm:approx-compr} is for the sake of completing the picture of metric compression. 

The main idea is to run \emph{low-diameter decompositions} at varying values of diameter and compute additive approximations. We define the (strong diameter) version of (weighted) low-diameter decompositions below based on \cite{calinescu2005approximation,HaeuplerL18}.
\BD[Low-diameter decomposition (LDD)]
Given a weighted graph $G=(V,E,w)$, a \emph{low-diameter decomposition} with parameter $\be$ is a randomized partition of $V$ into vertex components  $V_1,V_2,\lds$ such that:
\BE
\im W.h.p., for each  component $V_i$, $G[V_i]$ has weighted diameter $O(\log n/\be)$.
\im For any two vertices $u,v\in V$, they belong to different components with probability at most $\be\, d_G(u,v)$.
\EE
\ED

The algorithm proceeds as follows. Suppose we scale the graph so that all edge weights are in the range $[1,W]$. For each $x$ a power of $2$ in the range $[1,\Th(W)]$, run LDD with parameter $\be=1/x$ for $O(\logn)$ times. For each of the $O(\logn\log W)$ LDDs, for each component $V_i$ in that LDD, we call \Cref{lem:approx-additive} on $G[V_i]$ with sources $S\cap V_i$ and additive error parameter $\e':=\Th(\e/\logn)$ to obtain a compression of size $\tO(\poly(|S\cap V_i|/\e)+|V_i|)$. Observe that if $S$ lies on a common face in $G$, then $S\cap V_i$ lies on a common face in $G[V_i]$, so calling \Cref{lem:approx-additive} is safe.

The total size of all compressions is clearly $\tO((\poly(|S|/\e)+n)\log W)$. Moreover, we remember the sets $V_i$ in  each LDD, which takes $\tO(n\log W)$ space. For two given vertices $v\in V$ and $s\in S$, to recover a $(1+\e)$-approximation of $d_G(v,s)$ from the compression, we look at all LDDs for which $v$ and $s$ belong to the same component. For each such LDD, we have an estimate of $d_G(v,s)$ computed in that compression, which is additively off by at most $\e'$ times the diameter of that component. We add $\e '\cd O(\log n/\be)$ to that estimate, where $\be$ is the parameter used in that LDD, and the $O(\cd)$ comes from property (1) of LDDs. This ensures that w.h.p., this final estimate is always an overestimate of the true distance $d_G(v,s)$. Finally, we output the smallest final estimate over all LDDs (for which $v$ and $s$ belong to the same component).

We now show that this output is a $(1+\e)$-approximation of $d_G(v,s)$, w.h.p. Since all distances are overestimates w.h.p., and since we take the minimum, it suffices to find one LDD for which the distance is at most $(1+\e)$ factor off. Let $x$ be the smallest power of $2$ larger than $2d_G(v,s)$. By property (2) of LDDs, the probability that each LDD executed with $\be=1/x$ separates $v$ and $s$ with probability at most $\be\, d_G(v,s) = d_G(v,s)/x \le 1/2$. Therefore, w.h.p., one of the LDDs over the $O(\log n)$ iterations groups $v$ and $s$ into the same component. W.h.p., this component has diameter at most $O(\logn/\be)$, so  our additive error is at most $\e' \cd O(\logn/\be)$ from \Cref{lem:approx-additive}, plus the extra $\e' \cd O(\logn/\be)$. In total, this is at most $2\e'\cd O(\logn/\be)$, which is at most $2\e' \cd O(\logn\cd d_G(v,s)) \le \e$,  using that $\be=\Th(1/d_G(v,s))$ and $\e '= \Th(\e/\logn)$.

\paragraph{Implications to Centralized Computation of Diameter and Oracles.}
With our improved bound on the core-set size, we can immediately replace the exponential dependency on $1/\epsilon$ in two previous algorithms, with a \emph{polynomial} dependency. The first is the $(1+\epsilon)$-approximate diameter algorithm of Weimann and Yuster \cite{weimann2016approximating} that takes time 
$\widetilde{O}(n \cdot 2^{1/\epsilon})$. The second is an $(1+\epsilon)$-approximate distance oracle of  Gu and Xu \cite{gu2015constant} that has a space of $\widetilde{O}(n \cdot 2^{1/\epsilon})$. These exponential terms in $1/\epsilon$ come for the same reason: this is the state-of-the-art bound on the size of a core-set with respect to subset of $1/\epsilon$ vertices lying on a face. Therefore by plugging our improved bound on the core-set we get a centerlized algorithm for computing the diameter in time $\widetilde{O}(n \cdot \poly(1/\epsilon))$, and $(1+\epsilon)$-approximate distance oracle of space $\widetilde{O}(n \cdot \poly(1/\epsilon))$.

We note that same two results are already presented by Chan \& Skrepetos \cite{chan2017faster} but there are two main differences in our approaches. \cite{chan2017faster} improves \cite{weimann2016approximating} and \cite{gu2015constant} by combining it with Cabello’s recent abstract Voronoi diagram based technique \cite{cabello2017subquadratic}. Indeed this is a nice indication where the tools for the exact setting and the weighted setting can be nicely combined together. Since Cabello’s algorithm is randomized, their final algorithms are also randomized\footnote{It is very possible that those algorithms can be made deterministic by replace the algorithm of \cite{cabello2017subquadratic} with the recent deterministic algorithm by \cite{gawrychowski2018voronoi}.}. In contrast, we use \cite{weimann2016approximating} and \cite{gu2015constant} in a black-box manner. Simply replacing the old exponential bound on the core-set with a polynomial one, the resulting algorithms are also deterministic.

\subsection{Fast Computation of Metric Compression}\label{sec:fast}
In this section, we turn to consider the computational aspects of \Cref{thm:exact,thm:approx}. These, along with the proofs of the compression parts (\Cref{thm:exact-compr,thm:approx-compr}), complete \Cref{thm:exact,thm:approx}. Similarly to \Cref{sec:compr}, we begin with the simpler, unweighted case and then present the weighted case.

\subsubsection{Computation of Unweighted Compression}\label{sec:fastunweighted}
We start by considering the fast computation of the exact compression scheme for $D$-diameter unweighted planar graphs. For simplicity, we consider the case where $T=V$, but same algorithm works for the case where $T \subseteq V$, in the latter case, we only keep the distance label tuple ID for the terminal in $T$.
\BL\label{lem:unweighted-faster}
Given an $n$-vertex unweighted planar graph $G=(V,E)$ and a set $S \subseteq V$ for sources lying consecutively on a single face, we can compute the $O(D \cdot |S|^3)$ distinct distance tuples over vertices in $V$, as well as which vertices contain each distance tuple, in time $\tO(n+D\cd|S|^3)$.
\EL

At a high level, our goal is to follow the multiple-source shortest path (MSSP) algorithm with all sources lying on a common face, from \cite{klein2005multiple}, while maintaining \emph{hashes} of distance tuples. Observe that we cannot explicitly maintain the size-$\el$ distance tuple for each vertex, since that is $n\cd\el$ integers, which exceeds the promised time bound of $\tO(n+D\cd\el^4)$ if $\el$ is large (say, $n^{\Om(1)}$). Therefore, we maintain hashes of size $O(\log n)$ instead. The tricky part is to efficiently update the distance tuples while running the MSSP algorithm.

We define a hash function motivated by Rabin-Karp string hashing. For a distance tuple $\bd=(d_1,\lds,d_\el)$, we define its \emph{hash value} under \emph{base} $b$ and \emph{modulus} $p$ (for $p$ prime) as
\[ h(\bd,b,p) := \sum_{i=1}^\el d_i b^i \bmod p .\]

Clearly, if two distance tuples are equal, then their hash values under the same base and modulus are equal. The next claim shows that for two distinct distance tuples, their hash values are likely to be different under a \emph{random} base, as long as the modulus is large enough.

\BCL
Consider two distinct distance tuples $\bd_1\ne\bd_2$. Then, for a fixed prime $p$,
\[ \Pr_{b \in \{0,1,\lds,p-1\}} [ h(\bd_1,b,p) = h(\bd_2,b,p) ] \le \el/p .\]
\ECL
\BP
We have $h(\bd_1,b,p)=h(\bd_2,b,p)$ if and only if $\sum_{i-1}^\el (d_{1,i}-d_{2,i})b^i \equiv 0 \bmod p$. The polynomial on the LHS is not identically zero since $\bd_1\ne\bd_2$. Since its degree is at most $\el$, it has at most $\el$ zeroes in $\{0,1,\lds,p-1\}$. Therefore, the probability that $b$ is one of these $\le \el$ zeroes is at most $\el/p$.
\EP

The specifics of the MSSP algorithm are irrelevant. The important properties are as follows.

Let us define the following dynamic tree operations, following \cite{Course-PlanarGraph-MIT}.
The MSSP algorithm begins with computing the SSSP tree on source $s_1$. It then travels along the face segment in the order $(s_2,s_3,\lds,s_\el)$, temporarily setting each $s_j$ as the source, while maintaining a \emph{dynamic forest} $F$ of values, one for each vertex in $V$. (It also maintains a dynamic forest on the \emph{dual graph}, but we do not need to discuss that here.) The algorithm has the following two invariants:

\BE
\im[(I1)] After finishing source $s_j$, for any vertex $v\in V$, its current value in $F$ is precisely $d(s_j,v)$.
\im[(I2)] It performs $O(n)$ dynamic forest operations in total, over all sources.
\EE

The dynamic forest $F$ has the following operations:

\BE
\im $\Cut(e)$: remove edge $e$ from forest
\im $\Join(e)$: add edge $e$ to join two trees
\im $\Get(v)$: returns the value currently stored at $v$
\im $\Add(\De,v)$: increase all values in the subtree rooted at $x$ by $\De$
\EE

These can be supported using Euler-Tour trees \cite{tarjan1984finding}. Our goal is to maintain another dynamic tree $F'$ on the vertices in $V$ that is updated alongside the MSSP algorithm, so that at the end, the value at each vertex $v$ is precisely the hash $h(\tuple_S(v),b,p)$. 
(From now on, we assume that the algorithm has fixed $b$ and $p$.)

To begin, define
\[ h(\tuple_S(v), b, p, j) := \lp \sum_{i=1}^{j-1}d(v,s_i)b^i + d(v,s_j)  \sum_{i=j}^\el b^i \rp \bmod p .\]
In particular, $h(\tuple_S(v),b,p,j)$ is exactly $h(\tuple_S(v),b,p)$, except all $d(v,s_i)$ terms in the expression for $i>j$ are replaced with $d(v,s_j)$.

Throughout the MSSP algorithm, we will maintain the following invariant:

\BE
\im[(I')] After the MSSP algorithm finishes source $s_j$, for any vertex $v\in V$, its current value in $F'$ is precisely $h(\tuple_S(v), b, p, j)$.
\EE
Note that, if Invariant~(I') holds at the end of the algorithm, then the value of each vertex $v\in V$ in $F$ is precisely $h(\tuple_S(v),b,p,\el)=h(\tuple_S(v),b,p)$, our desired hash value.

We maintain Invariant~(I') as follows. After the MSSP algorithm computes the SSSP for source $s_1$, initialize the dynamic forest $F'$ with value $h(\tuple_S(v),b,p,1)$ on each vertex $v\in V$.
Then, as the MSSP algorithm goes through sources $s_2$ through $s_\el$, we will modify $F'$ alongside $F$. In particular, whenever an operation is performed on $F$ under source $s_j$, we perform a similar operation on $F'$ as follows:
\BE
\im $\Cut(e)$ on $F$: Perform the same on $F'$
\im $\Join(e)$ on $F$: Perform the same on $F'$
\im $\Get(v)$: Do nothing on $F'$, since the MSSP algorithm does not use $F'$
\im $\Add(\De,v)$ on $F$: This is the only nontrivial case. We perform $\Add(\De',v)$ on $F'$, where
\[ \De' := \lp \De \cd \sum_{i=j}^\el b^i \rp \bmod p .\]
Of course, the value $\sum_{i=j}^\el b^i\bmod p$ can be precomputed at the beginning of each source $s_j$, so the update only takes $O(\log n)$ time.
\EE

\BCL
Invariant~(I') is satisfied throughout the MSSP algorithm.
\ECL
\BP
To do so, we maintain a more \emph{fine-grained} invariant, which we call Invariant~(I''): If the MSSP is at any point in processing source $s_j$, then for each vertex $v\in V$,
\[ \Value(F',v) = \lp \sum_{i=1}^{j-1}d(v,s_i)b^i + \Value(F,v) \sum_{i=j}^\el b^i \rp \bmod p ,\]
where $\Value(F,v)$ is the current value of $v$ in $F$ (that is, the integer returned by $\Get(v)$ on $F$), and similarly for $\Value(F',v)$ for $F'$. Clearly, Invariant~(I'') implies Invariant~(I').

First, we show that Invariant~(I'') is satisfied right before the MSSP algorithm starts with the source $s_2$. Since $\Value(F,v)=d(v,s_1)$ by the correctness of the MSSP algorithm, the value of $v$ at $F'$ is initialized to
\[ h(\tuple_S(v),b,p,1)=  \lp \sum_{i=1}^{j-1}d(v,s_i)b^i + \Value (F,v) \sum_{i=j}^\el b^i \rp \bmod p ,\]
satisfying the invariant.

There are two types of transitions of the MSSP algorithm. First, the algorithm can transition from source $s_j$ to $s_{j+1}$. By the correctness of the MSSP algorithm, we must have $\Value(F,v)=d(v,s_j)$, so
\begin{align*}
\Value(F',v) &= \lp \sum_{i=1}^{j-1}d(v,s_i)b^i + \Value(F,v) \sum_{i=j}^\el b^i \rp \bmod p
\\&= \lp \sum_{i=1}^{j}d(v,s_i)b^i + \Value(F,v) \sum_{i=j+1}^\el b^i \rp \bmod p ,
\end{align*}
as desired. Second, the algorithm can call $\Add(\De,v)$ on $F$, which increases $\Value(F,v)$ by $\De$. In this case, $\Value(F',v)$ increases by $ ( \De \cd \sum_{i=j}^\el b^i ) \bmod p$, which exactly restores the invariant.
\EP

We now complete our $\tO(n+D\cd\el^4)$ time algorithm. Fix a prime $p$ close to $n^c$ for some constant $c$ and a uniformly random $b\in\{0,1,\lds,p-1\}$ throughout the algorithm. After running the MSSP algorithm and updating $F'$ alongside it, we call $\Get(v)$ on each $v\in V$ in $F'$ to determine all values $h(\tuple_S(v),b,p)$. With probability at least $1-\el/n^c$, any given pair of vertices with different distance tuples also have different hashes. By a union bound, the probability that this occurs over all pairs is at least $1-\el n^2/n^c \ge 1-1/n^{c-3}$, which is sufficient for our w.h.p.\ bound.

The algorithm groups the vertices by their hash value, so that w.h.p., all vertices in the same group have the same distance tuple. By \Cref{thm:exact-compr}, there are at most $O(D\cd \el^3)$ distance tuples. The algorithm selects one vertex from each group, producing $O(D\cd\el^3)$ vertices. Next, it computes distance tuples for these selected vertices as follows: run the MSSP algorithm again, but after finishing each source $s_i$, call $\Get(v)$ in $F$ for each selected vertex $v$. This produces the desired $O(D\cd\el^3)$ distance tuples, one for each group. There are $O(D\cd\el^3)$ calls for each $s_i$, or $O(D\cd\el^3)\cd\el=O(D\cd\el^4)$ calls overall, so this part takes $\tO(n+D \cd \el^4)$ time. The rest of the algorithm takes $\tO(n)$ time, proving the running time promised by \Cref{lem:unweighted-faster}.

\subsubsection{Computation of Weighted Case}\label{sec:fastweighted}

Similarly to \Cref{sec:weighted-compr}, we reduce to and then solve the additive error case. Note that the LDD approach in \Cref{sec:reduction-to-additive-error} suffices for the reduction here as well. In particular, we can compute the additive case $O(\log n\log W)$ many times, one on each LDD, giving us an $O(\logn\log W)$ overhead, which is good enough for \Cref{thm:approx}. Therefore, for the rest of this section, we restrict to the additive error case.

\BT\label{lem:weighted-faster}
Let $G$ be a weighted planar graph with weighted diameter $d$, and let $S:=(s_1,s_2,\lds,s_\el)$ be consecutive vertices on a face, ordered in clockwise or counter-clockwise order. We can compute a $\de$-additive core-set $V'$ w.r.t.\ $S$ of size $\poly((d/\de)\log n)$, as well as their distance tuples, in time $\tO(n + \poly(\el d/\de))$. Moreover, for each $v\in V$, we can locate a vertex in the core-set that is $\de$-additive close to $v$.
\ET

The algorithm is similar to the one in \Cref{lem:unweighted-faster}, save for one crucial difference. Here, since we are looking for approximate distances, what we need is not a hash function, but a ``clustering'' function that groups together vertices whose distance tuples are close together (say, in $\el_2$-distance). For this, a simple Johnson-Lindenstrauss (JL) dimension reduction does the trick.

Let $r:=O(\logn)$ and let $\Phi:\R^\el\to\R^r$ be a random linear projection function satisfying the following w.h.p.:
\begin{gather}
\f12 \norm{u-v}_2 \le \norm{\Phi(u)-\Phi(v)}_2 \le 2\norm{u-v}_2 \qquad\forall u,v\in V. \label{eq:JL}
\end{gather}

Let $\de':=\de/\poly(\el \logn)$ be a parameter whose precise value is to be determined. By \Cref{thm:approx-core-set}, there is a $\poly(\el d/\de')$-sized $\de'$-additive core-set w.r.t.\ $S$. This means that if we consider the distance tuples $\tuple_S(v)$ as vectors in $\R^\el$, then for every $v\in V$, we have
\[\norm{\tuple_S(v)-\tuple_S(v')}_\infty \le \de' \implies \norm{\tuple_S(v)-\tuple_S(v')}_2 \le \de'\sr\el \]
for some vertex $v'$ in the core-set $V'$. In particular, this means that the vectors $\tuple_S(v)$ can be covered with $\poly(\el d/\de')$ $\el_2$-balls of radius $\de'\sr\el$, one centered at $\tuple_S(v')$ for each $v'\in V'$. Therefore, we know by (\ref{eq:JL}) that w.h.p., the projected tuples $\Phi(\tuple_S(v))$ can be covered with $\poly(\el d/\de')$ $\el_2$-balls of radius $2\de'\sr\el$, one centered at $\Phi(\tuple_S(v'))$ for each $v'\in V$. We thus have the following claim: 

\BCL\label{clm:small-cover}
W.h.p., the projected tuples $\Phi(\tuple_S(v))$ for $v\in V$ can be covered with $\poly(\el d/\de')$ $\el_2$-balls of radius $2\de'\sr\el$.
\ECL

Given the intuition above, the algorithm has two natural steps. First, it seeks to compute the projected distance tuple $\Phi(\tuple_S(v))$ for each $v\in V$. Then, based on the computed values, it needs to select a subset of vertices to form the $\de$-additive core-set $V'$.

The first step of the algorithm is almost identical to the one in \Cref{lem:unweighted-faster}, with the hash function $h$ replaced by the projection function $\Phi$.
Following the algorithm of \Cref{lem:unweighted-faster}, define
\begin{align*}
\Phi(\tuple_P(v),j) :=&\ \Phi( d(v,s_1), d(v,s_2),\lds,d(v,s_{j-1}),d(v,s_j),d(v,s_j),d(v,s_j),\lds,d(v,s_j) )
\\ =&\ \sum_{i=1}^j d(v,s_i)\Phi(e_i) + d(v,s_j)\sum_{i=j}^\el\Phi(e_i),
\end{align*}
where $e_i\in\R^\el$ is the unit vector in dimension $i$. (The second equality follows because $\Phi$ is a linear map.)

Similarly to \Cref{lem:unweighted-faster}, we maintain a dynamic forest $F'$ on the vertices in $V$ that is updated alongside the MSSP algorithm. This time, the values in $F'$ are vectors in $\R^\el$.\footnote{We can also imagine maintaining $r=O(\logn)$ many dynamic forests, each one keeping track of a single coordinate in $\R^r$.} Throughout the MSSP algorithm, we will maintain the following invariant:
\BE
\im[(I')] After the MSSP algorithm finishes source $s_j$, for any vertex $v\in V$, its current value in $F'$ is precisely $\Phi(v,j)$.
\EE
Like the algorithm in \Cref{lem:unweighted-faster}, the only nontrivial updates to $F'$ are when $\Add(\De,v)$ is called on $F$. In this case, we perform $\Add(\De',v)$ on $F'$, where
\[ \De' := \De\cd\sum_{i=j}^\el\Phi(e_i) .\]
Like in \Cref{lem:unweighted-faster}, the value $\sum_{i=j}^\el \Phi(e_i)$ can be precomputed at the beginning of each source $s_j$, so the update only takes $O(r \log n)=O(\log^2n)$ time. Following the analysis of \Cref{lem:unweighted-faster}, the algorithm for the first step takes $O(n\log^2n)$ time.

For the second step, we perform a simple hashing into a \emph{randomly shifted grid}, as follows. Define $Z:=\de'\sr\el\logn$ and let $z\in [0,Z)^\el$ be a vector whose coordinates are sampled independently and uniformly from the interval $[0,Z)$. Initialize a hash table $H$ indexed by $\Z^r$, and for each vertex $v\in V$, add it to the entry of $H$ indexed by
$\lf(\Phi(\tuple_P(v))+z)/Z\rf \in \Z^r$, where $\lf \bx\rf$ for a vector $\bx\in\R^r$ indicates replacing each entry of $\bx$ by its floor. Finally, from each non-empty entry in $H$, we add an arbitrary vertex $v$ in that entry to our core-set $V'$, and for any other vertex $u$ in that entry, we declare that $v\in V'$ is $\de$-additive close to $u$. Clearly, this second step so far takes $O(nr)$ time. Lastly, to compute the distance tuples for each vertex in $V'$, we can repeat the MSSP as in the algorithm of \Cref{lem:unweighted-faster}, taking $\tO(n+\el|V'|)$ time.

We will prove two claims: (1) the number of non-empty entries in $H$ is small, which also bounds the size of the core-set, and (2) all vertices belonging to the same entry in $H$ have (original) distance tuples that are close to each other.

\BL\label{lem:num-entries}
The expected number of non-empty entries in $H$ is $\poly(\el d/\de')$.
\EL
\BP
By \Cref{clm:small-cover}, the vectors $\Phi(\tuple_P(v))$ can be covered by $\poly(\el d/\de')$ $\el_2$-balls of radius $2\de'\sr\el$. Therefore, it suffices to show that the vertices in each $\el_2$-ball are mapped to $O(1)$ many entries in expectation.

Fix a ball $B\s R^r$ of radius $2\de'\sr\el$. 
It is not hard to see that for all coordinates $i\in[r]$, with probability at most
\[ \f{2\de'\sr\el}{Z} = \f2{\logn} ,\]
the vectors $\bx=(x_1,\lds,x_r)\in B$ take two different values of $\lf (x_i+z)/Z\rf$, and otherwise, the vectors take one value of $\lf(x_i+z)/Z\rf$. Moreover, for each $t\in[r]$, if $t$ many coordinates take two different values, then the number of non-empty entries in $H$ is at most $2^t$. Since the coordinates of $z$ are sampled independently, the expected number of entries in $H$ is at most
\[ \lp1+\f2{\logn}\rp^r = \lp1+\f2{\logn}\rp^{O(\log n)} = O(1) ,\]
as desired.
\EP

\BL\label{lem:same-entry-close}
W.h.p., any two vertices $u,v$ that map to the same entry in $H$ are $(2\de'\sr{\el r}\logn)$-additive close.
\EL
\BP
Clearly, any two vertices $u,v$ that map to the same entry in $H$ satisfy 
\[ \norm{\Phi(u)-\Phi(v)}_2 \le \sr r \norm{\Phi(u)-\Phi(v)}_\infty \le \sr rZ = \de'\sr{\el r}\logn .\]

Therefore, w.h.p., for all such vertices $u,v$,
\[
\norm{u-v}_\infty \le \norm{u-v}_2\stackrel{(\ref{eq:JL})}{\le}2\norm{\Phi(u)-\Phi(v)}_2 \le 2\de'\sr{\el r}\logn, \]
so $u$ and $v$ are $(2\de'\sr{\el r}\logn)$-additive close, as desired.
\EP

Although \Cref{lem:num-entries} only holds in expectation, by Markov's inequality, with probability at least $1/2$, the number of entries is at most twice the expectation. We can therefore try the hashing algorithm $O(\log n)$ times, and w.h.p., one of the tries has its number of entries at most twice the expectation, which is $\poly(\el d/\de')$.

Finally, we set $\de':=\de/(2\sr{\el r}\logn)$, so that by \Cref{lem:same-entry-close}, vertices that map to the same entry in $H$ are pairwise $\de$-additive close. 
Recalling that $r=O(\logn)$, we have $\poly(\el d/\de')=\poly(\el d/\de)$ many non-empty entries in $H$, and thus that many points in the core-set $V'$. This concludes \Cref{lem:weighted-faster}.

\newcommand{\new}{}

\section{Distributed Diameter in Unweighted Graphs}\label{sec:unbounded-diam}

\subsection{Bounded Diameter Decomposition (BDD)}\label{sec:bdd}
For a graph $G$, we describe a recursive (balanced) partitioning procedure represented by a tree $\mathcal{T}$, whose vertices, denoted as \emph{bags}, correspond to subgraphs in $G$.
This recursive partitioning must satisfy two crucial properties. First, the diameter of each bag is kept being bounded by $O(D\log n)$ throughout all recursion levels which enables the computation of an $O(D\log n)$-path separator recursively. The second property provides a small overlap between all subgraphs in the same recursion level, by guaranteeing that each edge $e$ belongs to at most two subgraphs in each recursion level. This allows one to work on all subgraphs of the same level simultaneously with the same round complexity (up to constant factor) of working on a single subgraph.

\BD[Bounded Diameter Decomposition]
Let $G=(V,E)$ be a graph under some planar embedding and let $T$ be a BFS tree of $G$ of radius $D$ rooted at some node $r$. A \emph{bounded diameter decomposition (BDD)} of $G$ is a rooted tree $\m T=(V_{\m T}, E_{\m T})$ whose vertices $X\in V_{\m T}$, called \emph{bags}, are subsets of $V$ satisfying:
 \BE
 \im The tree has depth at most $O(\log n)$.
 \im The root bag of $\m T$ equals $V$.
 \im For each non-leaf bag $X$, its child bags $X_1,\lds,X_\el$ satisfy $\bigcup_iG[X_i]=G[X]$.
 \im Every leaf bag $X$ has size $O(D\logn)$.
\im For each bag $X$, $G[X]$ is \new{connected}.
\im For each bag $X$, $T[X]$ consists of at most $O(\logn)$ connected components. 
\im For each non-leaf bag $X$, consider the set $S_X$ of nodes in $X$ present in more than one child bag. There exist some $O(\logn)$ paths inside $T[X]$ whose disjoint union of vertices is exactly $S_X$. (This, together with the fact that $T$\ has diameter $D$, implies that the size of this set is at most $O(D\logn)$.)
\im There exists a mapping $\m O$ from the bags $X$ to connected open subsets of $\R^2$ such that:
  \BE
  \im For each bag $X$, the planar embedding of the vertices and edges of $G[X]$ (which are points and simple curves in $\R^2$) are contained in $\overline{\m O(X)}$, the (topological) closure of $\m O(X)$.
  \im For each depth $d$, the subsets $\m O(X)$ over bags $X$ of depth $d$ are disjoint (in $\R^2$).
  \EE
\im For each bag $X$, there exists a closed curve in $\R^2$ through the nodes in $S_X$ that is disjoint from $\m O(X')$ for any child bag $X'$. 

%
%
%
%
 %
%
 \EE
\ED


\BT[Bounded diameter decomposition for planar graphs]\label{thm:SPSD}
Let $G=(V,E)$ be an unweighted planar graph with diameter $D$. There is a distributed algorithm that computes 
the recursive partitioning  of $G$ represented by a tree $\m T$ of height $O(\log n)$ within $\tO(D)$ rounds. In particular, every bag $X\in V_{\m T}$ has a unique ID and every node knows the IDs of all the bags that contain it. 
\ET

First, a few remarks. The set $S_X$ in property~(7) fits the role of a \emph{separator} in the graph $G[X]$.
Also, properties~(8)~and~(9) of the BDD may look cumbersome, but they necessary for the application of \Cref{thm:exact-compr} in diameter computation. Indeed, they are the only two properties which involve the planarity of the graph. And since \Cref{thm:exact-compr} assumes planarity, it makes sense that some aspects of planarity must be preserved in the BDD. Moreover, we use the topological notion of planarity (as opposed to combinatorial) because we need to track a fixed global embedding in our mind, so it is more geometrically intuitive to deal with regions in the plane.

Below, we present a few properties that are implied by the properties of the BDD that are sufficient for applications such as in \Cref{sec:exact-labels}.

\BL\label{lem:BDD-add}
If $\m T$ is a BDD for graph $G$, then the following additional properties hold:
\BE
\im[3'.] For each non-leaf bag $X$ and any two child bags $X_1$ and $X_2$, any path in $G[X]$ from a vertex in $X_1$ to a vertex in $X_2$ must intersect $S_X$.
\im[6'.] For each bag $X$, $G[X]$ has diameter $O(D\log n)$.
\im[7'.] For each non-leaf bag $X$, $|S_X|=O(D\log n)$.
\im[8'.] For each depth $d$, every edge of $G$ is in at most $2$ subgraphs $G[X]$.
\EE
\EL
\BP
We first prove property (3') from property (3) of BDDs. Let $v_1,\lds,v_\el$ be a path from $v_1\in X_1$ to $v_\el\in X_2$. Let $i\in[\el-1]$ be such that $v_i\in X_1$ and $v_{i+1}\notin X_1$. By property (3), the edge $(v_i,v_{i+1})$ must be in some child bag $X_3$. Then, $v_i$ is in $X_1$ and $X_3$, so by definition of $S_X$ (property~(7)), $v\in S_X$ and we are done. 

Property~(6') follows easily from  property~(6) and the fact that $T$ has diameter $O(D)$. Indeed, connecting together $O(\log n)$ trees of diameter $O(D)$ produces a tree of diameter $O(D\logn)$. Similarly, property~(7') also easily follows from property~(7) and the fact that $T$ has diameter $O(D)$.


We now prove property (8') from property (8) of BDDs. By property (8b), the regions $\m O(X)$ for bags $X$ of depth $d$ are disjoint. Fix an edge $(u,v)\in E$ inside some bag $X$. First, if either $u$ or $v$ is inside $\m O(X)$, then no other bag of depth $d$ can contain $x$ by the disjointness property. Otherwise, $u,v\in\ol{\m O(X)}\setminus\m O(X)$. Similarly, if the embedding of the edge $(u,v)$ intersects $\m O(X)$, then only $X$ can contain the edge; otherwise, the edge is a curve of positive length along the boundary $u,v\in\ol{\m O(X)}\setminus\m O(X)$. By simple topological arguments, since the regions $\m O(X')$ for bags $X'$ of depth $d$ are disjoint, any curve can be inside $\ol{\m O(X')}\setminus\m O(X')$ for at most two $X'$.
\EP

\subsection{Distributed Computation of BDD Decomposition}\label{sec:4.2}

The algorithm proceeds top-down, taking $O(\log n)$ iterations. On each iteration, every leaf bag $X$ with than $\Om(D \logn)$ vertices computes children $X_1,\lds,X_\el$ satisfying condition (3) of the BDD, as well as $|X_i|\le(5/6)|X|$ for each $i$. This is done in parallel over all bags that are leaves at the beginning of that iteration. It is easy to see that any bag $X$ of depth $i$ satisfies $|X|\le(5/6)^in$. Therefore, only $O(\log n)$ iterations are needed before every leaf bag $X$ has $|X|\le O(D)$. \new{We accomplish this by computing \emph{balanced cycle separators} using a modification of the algorithm from \cite{GhaffariP17}. Their algorithm requires that the planar graph network is \emph{biconnected}, so the technical modification is to make it work for graphs that are not biconnected. We defer the details to Appendix \ref{sec:balanced-cycle-general}; for this section, assume the following theorem as a black box:}

\BT\label{thm:cycle}
Given a $D$-diameter graph $G$ and a spanning tree $T$ (possibly of large diameter), there exists an $\widetilde{O}(D)$-round algorithm that computes a balanced cycle separator that consists of two tree paths of $T$ plus one additional edge (which is not necessarily in $G$).
\ET

We first focus on computing the children $X_1,\lds,X_\el$ for a single leaf bag $X$, deferring the parallelization over all leaf bags $X$ to the next part. Fix a leaf bag $X$, and suppose that every node in $X$ knows that it is in $X$. We can easily compute all edges in $T[X]$: it is all edges in $T$ that have both endpoints in $X$.

\paragraph{Step 1: Connect the trees in $T[X]$.}
By property (5), $G[X]$ is connected, so adding some subset of these edges produces a spanning tree of $G[X]$ which we call $T'$. By property (6), $T' \setminus T$ has $O(\log n)$ edges, which means that $T'$ has diameter $O(D\logn)$.




\paragraph{Step 2: Compute a cycle separator.}
We apply \Cref{thm:cycle} on the graph $G[X]$ with spanning tree $T'$, which computes a cycle separator $S$ with all edges inside $T'$ except possibly one, called the \emph{virtual} edge, that may not even be in $E$, the edges of the original graph $G$. We also have $|S|=O(D\logn)$ since $T'$ has diameter $O(D\logn)$. From now on, we view $S$ as a set of edges, but we say ``vertex $v$ is on $S$'' if $v$ is incident to some edge in $S$.

\paragraph{Defining the child bag $X^+$.}
Let $X^+$ denote the vertices enclosed by the cycle $S$ in the planar embedding. (We include the vertices in $S$ itself in $X^+$.) It is easy to see by planarity that $G[X^+]$ is connected.  Also, since $S$ is a cycle separator, we have $|X^+|\le(5/6)|X|$. Finally, for the mapping $\m O$ needed for property~(8), we define $\m O(X^+)$ to be the (topological) interior of the cycle $S$ in the planar embedding. That is, in this case, the topological closure $\ol{\m O(X^+)}$ is simply $S$. Note, that there is no need to actually compute $\m O$.

\BCL \label{clm:conn}
$T'[X^+]$ is connected.
\ECL
\BP
Root the tree $T'$ at a node $r$ on $S$. Then, the vertices on $S$ all lie on two paths that start from $r$ and travel down the tree. For any vertex $v$ in $X^+$, walk up the rooted tree until a node on $S$ is reached. Since $S$ separates $X^+$ from the rest of the graph, every vertex visited so far must lie in $X^+$. Finally, since $S$ consists of two paths from $r$, we can walk up one of the two paths and reach $r$. We have thus connected an arbitrary vertex $v\in X^+$ with $r$ along a path in $T'[X^+]$. It follows that $T'[X^+]$ is connected.
\EP

\begin{figure}[H]
\centering
\begin{tikzpicture}[scale=.85]

\draw[black, line width=2pt, fill=red!70!black] (-0.3784,-5.0955) node (v1) {} arc (-94.2471:-440:5.1095);
\node (v2) at (0.8919,-5.0374) {};

\draw[fill=red!86!white] (0,0) -- ++(-120:5.0542) arc (-120.4357:-60.1335:5) -- cycle;

\draw [dashed, line width=2pt] (v1.center) edge (v2.center);

\draw [red!60!white,fill=red!60!white] plot[smooth, tension=.7] coordinates {(-1.8186,-4.7397) (-1.3717,-3.0783) (-0.2815,-2.7681)} -- plot[smooth, tension=.7] coordinates { (0.1221,-2.8353) (0.2424,-3.3096) (-0.1806,-3.7338) (0.1221,-4.2046) (-0.2527,-4.5986) (-0.0124,-5.0598)}
--
(-0.3873,-5.0841) arc (-94.3563:-111.0752:5.0988);

\draw (-4.5032,-2.4056) arc (-151.889:-111.2584:5.1055) ;
\draw (2.2588,-4.5244) ;
\draw (2.2701,-4.5356) arc (-63.4117:-28.5843:5.072);
\draw [blue, line width=2pt, fill=green!50!white]
plot[smooth, tension=.7]coordinates {(1,-3) (1.7629,-3.6228) (2.2701,-4.5356)}
--
(2.2701,-4.5356) arc (-63.4117:-28.5843:5.072)
--
plot[smooth, tension=.7] coordinates {(4.4339,-2.4394) (3.3182,-0.7376) (2.7772,1.8207) (3.4872,3.7027)}
--
(3.4647,3.7253) arc (47.0758:157.7143:5.0874)
--
plot[smooth, tension=.7] coordinates {(-4.7286,1.9334) (-2.9367,0.4908) (-3.0719,-1.4364) (-4.492,-2.4056)}
--
(-4.5032,-2.4056) arc (-151.889:-111.2584:5.1055) 
--
  plot[smooth, tension=.7] coordinates {(-1.821,-4.7385) (-1.3702,-3.0931) (-0.2882,-2.7663)}
;
\draw  plot[smooth, tension=.7] coordinates {(-1.821,-4.7385) (-1.3702,-3.0931) (-0.2882,-2.7663)};
\draw  plot[smooth, tension=.7] coordinates {(1,-3) (1.7629,-3.6228) (2.2701,-4.5356)};
\draw  plot[smooth, tension=.7] coordinates {(4.4339,-2.4394) (3.3182,-0.7376) (2.7772,1.8207) (3.4872,3.7027)};

\draw[blue,line width=2pt, dashed] (-0.2815,-2.7681) node (v3) {} -- (1.0078,-3.0002);
\draw[line width=1.5pt]  plot[smooth, tension=.7] coordinates {(v3) (-0.8742,-3.5187) (-1.2537,-4.9425)};

\draw [line width=1.5pt] plot[smooth, tension=.7] coordinates {(1.4921,-3.3547) (0.5299,-3.8781) (0.9652,-5.0166)};
\draw[line width=1.5pt]  plot[smooth, tension=.7] coordinates {(0.7685,-3.678) (1.4934,-4.0174) (1.8943,-4.7429)};
\draw[line width=1.5pt]  plot[smooth, tension=.7] coordinates {(1.2516,-3.8774) (1.1243,-4.3802) (1.3661,-4.8702)};
\draw [line width=1.5pt] plot[smooth, tension=.7] coordinates {(-0.8788,-3.5086) (-0.4092,-3.9859) (-0.6089,-5.0542)};

\node[scale=1] at (0,0) {$\mathcal O(X^+)$};
\node[scale=1] at (-3.6384,-0.3474) {$\mathcal O(X^-_1)$};

\node[scale=1] at (4.2376,-0.8546) {$\mathcal O(X^-_3)$};
\node[scale=1] at (-1.5142,-5.6742) {$\mathcal O(X^-_{2,1})$};
\node[scale=1] at (1.6357,-5.637) {$\mathcal O(X^-_{2,2})$};
\node[blue,scale=1.5] at (0,4.5) {$S$};

\node (v4) at (-3.6803,0.3376) {};
\node (v5) at (-4.9601,1.0155) {};
\node (v6) at (-5.0591,-0.2294) {};
\node (v7) at (-4,-1.5) {};

\draw  [line width=1.5pt] plot[smooth, tension=.7] coordinates {(-3,0.5) (v4) (v5)};
\draw  [line width=1.5pt] plot[smooth, tension=.7] coordinates {(-3.6803,0.3376) (-4.2791,0.2231) (v6)};

\draw [line width=1.5pt]  plot[smooth, tension=.7] coordinates {(-3.0719,-1.4364)
 (v7) (-4.9284,-1.2739)};
\draw  [line width=1.5pt] plot[smooth, tension=.7] coordinates {(-4.4107,0.1259) (-4.4146,-0.7872) (v7)};

\draw [line width=1.5pt] plot[smooth, tension=.7] coordinates {(2.9095,2.4884) (4.1723,1.8873) (4.633,0.3516) (3.3167,-0.7379)};
\draw [line width=1.5pt] plot[smooth, tension=.7] coordinates {(2.7829,1.624) (3.6166,1.3681) (3.9164,0.4321) (3.0827,-0.109)};

\draw[blue,line width=2pt, dashed] (-0.2815,-2.7681) node (v3) {} -- (1.0078,-3.0002);
\draw [blue, line width=2pt]
plot[smooth, tension=.7]coordinates {(1,-3) (1.7629,-3.6228) (2.2701,-4.5356)}
--
(2.2701,-4.5356) arc (-63.4117:-28.5843:5.072)
--
plot[smooth, tension=.7] coordinates {(4.4339,-2.4394) (3.3182,-0.7376) (2.7772,1.8207) (3.4872,3.7027)}
--
(3.4647,3.7253) arc (47.0758:157.7143:5.0874)
--
plot[smooth, tension=.7] coordinates {(-4.7286,1.9334) (-2.9367,0.4908) (-3.0719,-1.4364) (-4.492,-2.4056)}
--
(-4.5032,-2.4056) arc (-151.889:-111.2584:5.1055) 
--
  plot[smooth, tension=.7] coordinates {(-1.821,-4.7385) (-1.3702,-3.0931) (-0.2882,-2.7663)}
;

\draw   [line width=1.5pt] plot[smooth, tension=.7] coordinates {(4.4324,-2.4229) (4.3334,-1.7119) (3.8474,-1.298)};
\draw   [line width=1.5pt] plot[smooth, tension=.7] coordinates {(4.3424,-1.7299) (4.5044,-1.55) (4.6213,-1.271)};
\end{tikzpicture}

        \caption{The case where $S$ has vertices on $\ol{\m O(X)}\setminus\m O(X)$. The different shades of red all comprise $\m O(X) \setminus \ol{\m O(X^+)}$, which is not connected.  Note that the blue closed curve does not belong in any open region $\m O(\cd)$.   The two dark red regions in $O(X) \setminus \ol{\m O(X^+)}$ become $\m O(X^-_1)$ and $\m O(X^-_3)$. The component $X^-_2$ becomes disconnected after removing the virtual (dotted) edges, so the corresponding region in $O(X) \setminus \ol{\m O(X^+)}$ is divided into the two regions $X^-_{2,1}$ and $X^-_{2,2}$ (the lighter shades of red).
}
\label{fig:bdd} 
\end{figure}

\paragraph{Step 3: Computing the child bag $X^+$.}
From the planar embedding, every node knows the clockwise ordering of its edges in the planar embedding. We first assign IDs to the vertices on $S$ in clockwise order from $0$ to $|S|-1$. Then, the vertex labeled $i$ knows that its incident edges to $X^+$ are precisely those from the edge connecting $i+1$ to the edge connecting $i-1$, inclusive. (Here, addition and subtraction are taken mod $|S|$.) We now compute a BFS through $G[X^+]$ as follows. Start from an arbitrary vertex on $S$. Every time we visit a new vertex on $S$ (including the initial vertex), we traverse through all its incident edges inside $G[X^+]$, which it has already computed. Every time we visit a vertex not on $S$, we traverse through all its incident edges. Since $G[X^+]$ has diameter $O(D\logn)$, the BFS will terminate in $O(D\logn)$ rounds. We now set $X^+$ as a child bag in the decomposition. 
\paragraph{Defining the remaining children.}
For the vertices on the other side of the separator, one attempt is to similarly define $X^-$ to be the vertices on the outside of the cycle $S$ in the planar embedding. (We include the vertices in $S$ itself in $X^-$.) However, it is not clear how to define $\m O(X^-)$: we could try to define it as $\m O(X) \setminus \ol{\m O(X^+)}$, but although this region is open, it might not be connected; see Figure~\ref{fig:bdd}. To preserve property~(8), we may need multiple bags $X^-_i$ instead of a single bag $X^-$.


First, suppose that $S$ does not contain any vertices in $\ol{\m O(X)}\setminus\m O(X)$, the boundary of $\m O(X)$. In this case,  $\m O(X) \setminus \ol{\m O(X^+)}$ is actually connected, so our initial attempt actually works. A straightforward adaptation of \Cref{clm:conn} shows that  \new{$T'[X^-]$} is also connected, so the algorithm proceeds identically to the $X^+$ case.

Otherwise, $S$ has vertices lying on $\ol{\m O(X)}\setminus\m O(X)$.   In this case, we first define a child bag $X^-_i$ for each connected region $\m O$ of $\m O(X)\setminus\ol{\m O(X^+)}$ as all vertices whose embedding is in $\ol{\m O}$, with one modification explained later. 

The following claim follows the same argument as the one in \Cref{clm:conn}. Observe that the first claim is not true if $(T'\cup S)[X^-_i]$ is replaced by $T'[X^-_i]$, since the single virtual edge in $S\setminus T'$ may be needed for connectivity; see Figure~\ref{fig:bdd}. 

\BCL\label{clm:conn2}
$(T'\cup S)[X^-_i]$ is connected.
\ECL
\BP
By construction, $S[X^-_i]$ forms a connected segment. If it does not contain the single virtual edge in $S$, then it is inside $T'$, and we can follow the proof of \Cref{clm:conn}. Otherwise, it can be broken up at the virtual edge into two segments in $T' \cap S[X^-_i]$. We can follow the proof of \Cref{clm:conn} to show that every vertex in $X^-_i$ is connected to one of the two segments. Finally, adding the single edge in $S\setminus T'$ (which is in $(T'\cup S)[X^-_i]$) connects $X^-_i$.
\EP

Since $(T'\cup S)[X^-_i]$ and $T'[X^-_i] \s G[X^-_i]$ only differ by the one virtual edge, \Cref{clm:conn2} implies that all but possibly one subgraph $G[X^-_i]$ are connected. For these $X^-_i$, we define $\m O(X^-_i)$ as the corresponding connected region in $\m O(X) \setminus \ol{\m O(X^+)}$. For the possibly one remaining $X^-_i$ (let's call it $X^-_j$) disconnected by the virtual edge, we break it up into two connected components $X^-_{j,1},X^-_{j,2}$ instead, and divide the corresponding region in $\m O(X) \setminus \ol{\m O(X^+)}$ to separate the embeddings of $X^-_{j,1}$ and $X^-_{j,2}$. The two divided regions form $\m O(X^-_{j,1})$ and $\m O(X^-_{j,2})$. The final, modified child bags are therefore $X^-_{j,1}$, $X^-_{j,2}$, and all the remaining untouched $X^-_i$'s.

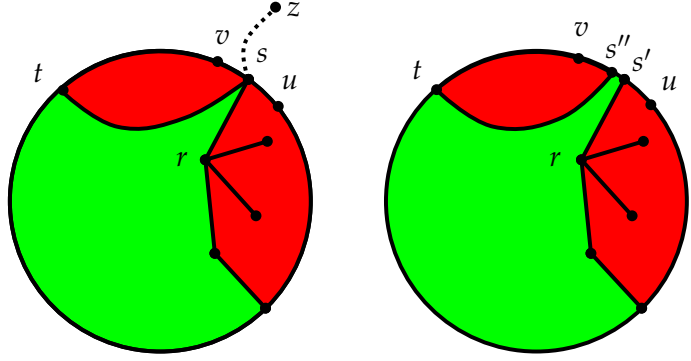
\begin{figure}\centering
\begin{tikzpicture}

\node at (6.0937,2.0575) {$s''$};
\node at (1.3576,1.9385) {$s$};
\node at (0.8255,2.1845) {$v$};
\node at (1.7232,1.5799) {$u$};
\node at (1.7784,2.5613) {$z$};
\node at (5.5975,2.2931) {$v$};
\node at (6.7679,1.5763) {$u$};
\node at (6.4069,1.8791) {$s'$};

 \tikzstyle{every node}=[circle, fill, scale=.4];
\draw[fill=green] [line width=1.5pt] (0,0) ellipse (2 and 2);

\node (v1) at (1.1691,1.6204) {};
\node (v2) at (0.5986,0.5505) {};
\node (v3) at (0.7254,-0.6937) {};
\node (v4) at (1.399,-1.4227) {};
\node (v5) at (1.4227,0.7962) {};
\node (v6) at (1.2721,-0.1944) {};
\node (v11) at   (1.1724,1.6207) {};
\draw[red, fill=red] (1.1724,1.6207) arc (54.1184:-45.4658:2.0003) -- (v3.center) -- (v2.center) --   (v11.center) {};
\draw [line width=1.5pt]  (v1.center) -- (v2.center) -- (v3.center) -- (v4.center);
\draw [line width=1.5pt]  (v5.center) -- (v2.center) -- (v6.center);
\draw [line width=1.5pt] (0,0) ellipse (2 and 2);

\draw [red,fill]  plot[smooth, tension=.7] coordinates {(v11) (0.3361,1.1112) (-0.573,0.9909) (-1.3283,1.4789)} --  (-1.3283,1.4855) arc (131.8024:54.2221:1.9928);
\draw [line width=1.5pt]  plot[smooth, tension=.7] coordinates {(v11) (0.3361,1.1112) (-0.573,0.9909) (-1.3283,1.4789)} --  (-1.3283,1.4855) arc (131.8024:54.2221:1.9928);

\node (v1x) at (1.1691,1.6204) {};
\node (v2x) at (0.5986,0.5505) {};
\node (v3x) at (0.7254,-0.6937) {};
\node (v4x) at (1.399,-1.4227) {};
\node (v5x) at (1.4227,0.7962) {};
\node (v6x) at (1.2721,-0.1944) {};
\node (v11x) at   (1.1724,1.6207) {};

 \tikzstyle{every node}=[circle, fill, scale=.4];
\draw[fill=green] (5,0) ellipse (2 and 2);

\node (vv1) at (6.1691,1.6204) {};
\node (vv2) at (5.5986,0.5505) {};
\node (vv3) at (5.7254,-0.6937) {};
\node (vv4) at (6.399,-1.4227) {};
\node (vv5) at (6.4227,0.7962) {};
\node (vv6) at (6.2721,-0.1944) {};
 \tikzstyle{every node}=[circle, scale=1];
\node (vv12) at (6.008,1.6899) {};
\node (vv11) at (6.1724,1.6207) {};
\draw[red, fill=red] (6.1724,1.6207) arc (54.1184:-45.4658:2.0003) -- (vv3.center) -- (vv2.center) --   (vv11.center) {};
\draw [line width=1.5pt]  (vv1.center) -- (vv2.center) -- (vv3.center) -- (vv4.center);
\draw [line width=1.5pt]  (vv5.center) -- (vv2.center) -- (vv6.center);
\draw [line width=1.5pt] (5,0) ellipse (2 and 2);

\node at (0.2877,0.5633) {$r$};
\node at (-1.6218,1.709) {$t$};
\node at (5.2526,0.5633) {$r$};
\node at (3.4169,1.7459) {$t$};

\draw [red,fill]  plot[smooth, tension=.7] coordinates {(vv12) (5.3361,1.1112) (4.427,0.9909) (3.6717,1.4789)} --  (3.6717,1.4855) arc (131.6298:58.9232:1.9803);
\draw [line width=1.5pt]  plot[smooth, tension=.7] coordinates {(vv12) (5.3361,1.1112) (4.427,0.9909) (3.6717,1.4789)} --  (3.6717,1.4855) arc (131.6298:58.9232:1.9803);

 \tikzstyle{every node}=[circle, fill, scale=.4];

\node (vv1x) at (6.1691,1.6204) {};
\node (vv2x) at (5.5986,0.5505) {};
\node (vv3x) at (5.7254,-0.6937) {};
\node (vv4x) at (6.399,-1.4227) {};
\node (vv5x) at (6.4227,0.7962) {};
\node (vv6x) at (6.2721,-0.1944) {};

\node at (6.0041,1.7141) {};
\node at (3.6727,1.4841) {};
\node at (-1.292,1.4773) {};
\node at (1.5678,1.2618) {};
\node at (0.7617,1.858) {};
\node at (6.5215,1.2829) {};
\node at (5.5606,1.8989) {};

\draw [line width=1.5pt,dotted] plot[smooth, tension=.7] coordinates {(v11x) (1.1255,2.106) (1.532,2.5619)};
\node at (1.532,2.5815) {};
\end{tikzpicture}

\label{fig:split-s}
\caption{Splitting $s$ into $s'$ and $s''$.}
\end{figure}

\paragraph{Step 4: Computing the children $X^-_i$.}
To compute each such $X^-_i$ simultaneously, we will assume that every vertex $s$ knows whether it is on the boundary $\ol{\m O(X)}\setminus\m O(X)$, and if so, its two consecutive neighbors $u,v$ in the planar embedding that lie on opposite ``sides'' (see Figure~\ref{fig:split-s}) defined as follows: if we draw a curve on $\R^2$ from (the embedding of) $s$ to an ``imaginary'' neighbor $z$ outside $\ol{\m O(X)}$ such that the curve does not intersect $\m O(X)$ (see Figure~\ref{fig:split-s}), then these neighbors $u,v$ of $s$ are the ones clockwise and counter-clockwise from $z$, respectively \footnote{The only purpose of defining $z$ is to formally define the neighbors $u$ and $v$ of $s$. So from this point on, we can forget about the existence of $z$, which was ``imaginary'' to begin with.}. 
We assume that we know this information for now, and describe how to maintain it later on.

First, if there is no vertex in $S$ whose embedding lies on $\ol{\m O(X)}\setminus\m O(X)$, then 
there is only one child bag $(X \setminus X^+)\cup S$, which can be found by a simple modification of Step~3. Otherwise, the algorithm first detects the connected regions of  $\m O(X) \setminus \ol{\m O(X^+)}$ as follows. We construct an auxiliary graph where every vertex $s\in S$ whose embedding lies on $\ol{\m O(X)}\setminus\m O(X)$ is \emph{split} into two vertices $s',s''$ as follows: let $u,v$ be defined as before, and define $r$ and $t$ as the clockwise-to-$u$ and counterclockwise-to-$v$ neighbors of $s$ on $S$, respectively; see Figure~\ref{fig:split-s}. The neighbors of $s$ from $u$ to $r$ in the clockwise direction become neighbors of $s'$, and the neighbors from $t$ to $v$ in the clockwise direction become neighbors of $s''$. We do not add an edge between $s'$ and $s''$. The algorithm then computes connected components in this auxiliary graph, which corresponds to the bags $X^-_i$. Recall that every bag has diameter $O(D\log n)$ by property~(6'), so this takes $O(D\log n)$ time by having node $s$ simulate both $s'$ and $s''$ in the graph.

Also, we may split the bag with the virtual edge into two, which is easy to compute.

Finally, the information $u,v$ for each relevant $s$ can be kept track of recursively as follows: there are no such vertices at the beginning, and given a bag $X$, the set of new vertices $s$ on the boundary  $\ol{\m O(X')}\setminus\m O(X')$ of a child bag $X'$ equals $S_X\cap X'$, and their two consecutive neighbors $u,v$ can be easily computed from knowing the orientation of $S_X$ and the planar embedding. 

In particular, (i) every vertex $s$ on $S_X$ is now on the boundaries $\ol{\m O(X^-_*)}\setminus\m O(X^-_*)$\footnote{The ``wildcard'' $*$ can take value $i$ or value $j,1$ or $j,2$.} of the relevant regions $\m O(X^-_*)$ and its neighbors $u,v$ can be easily computed given the orientation of $S_X$, and (ii)~the regions $\m O(X^-_*)$ can be defined so that every vertex not on $S_X$ is not on any boundary.

\BL
This decomposition algorithm satisfies all the properties of a BDD decomposition.
\EL
\BP
Properties (2) and (4) follow immediately from the construction.

For property (1), for each bag $X$ with $|X|\ge\Om(D\log n)$, all the children $X^+,X^-_i$ have size at most $(5/6)|X|$, so $O(\logn)$ iterations are needed.

Since $G[X^+]$ is connected (\Cref{clm:conn}), and the graphs $G[X^-_i]$ are all connected except possibly $X^-_*$ (\Cref{clm:conn2}), and the possible $X^-_*$ is broken into two connected components, we have properties (3) and (5). Moreover, since (i) $S$ separates $\m O(X^+)$ from $\m O(X^-_i)$ for all $i$, and (ii) the $\m O(X^-_i)$ are connected open regions in $O(X) \setminus \ol{\m O(X^+)}$, and (iii) the possibly one $X^-_i$ that is broken into two components has its region divided into two disjoint regions $\m O(X^-_{i,1})$ and $\m O(X^-_{i,2})$, a simple induction from the root to the leaves proves property (8).  Taking the planar embedding of the edges in $S$ produces the desired closed curve. Since the only nodes shared among multiple children are the ones in $S$, we have property (9). 

The most interesting part is proving property~(6). To do so, for each bag $X$, define $T'_X$ as the spanning tree $T'$ of $T[X]$ defined in Step~1 of the algorithm for bag $X$.   
For a bag $X$ with cycle separator $S_X$ and a child $X'$, $(T'_X\cup S_X)[X']$ is connected by \Cref{clm:conn}~or~\Cref{clm:conn2},  depending on whether $X'$ is inside or outside $S_X$ in the planar embedding. It follows that $T[X']$ has at most one more connected component than $T[X]$, so by a top-down induction, for a bag $X$ at depth $i$,  $T[X]$ has at most $i$ connected components. Since the depth $i$ is at most $O(\log n)$, this completes property~(6).

Lastly, we prove property (7) assuming property~(6). By construction, all nodes shared among more than one child bag of $X$ must belong to $S_X$. Since $T[X]$ has at most $O(\logn)$ connected components, by construction of $T'_X$, there are at most $O(\logn)$ edges in $T'_X\setminus T$. Since $S_X$ minus a single edge is a path contained in $T'_X$, and since $T'_X$ itself has only $O(\logn)$ edges not in $T$, property (7) follows. 
\EP

\subsection{Distributed Computation of (Exact) Distance Labels} \label{sec:exact-labels}
In a seminal paper, Gavoille et al. \cite{gavoille2004distance} presented a distance
labeling scheme with labels of size $O(r(n)\log^2 n)$ for the class of graphs with (recursive) $r(n)$-separators. 
Using $r(n)=O(\sqrt{n})$ yields the well known bound of $\widetilde{O}(\sqrt{n})$ distance labels. 
In our distance labels, we will use $r(n)=O(D\logn)$ for $n>D$. To make sure that an $O(D\logn)$-separator can be computed recursively, we use the BDD decomposition.

We first describe the labeling scheme by \cite{gavoille2004distance}. For the given graph $G$, the label $L_G(v)$ of each $v \in G$ consists of the following fields: (i) a list of distances $d_G(v,s)$ for every $s \in \sep(G)$, where $\sep(G)$ is the separator of $G$, (ii) the component ID of $v$ in $G \setminus \sep(G)$, and (iii) the (recursive) label $L_{G'}(v)$ where $G'$ is the component of $v$ in $G \setminus \sep(G)$. For sake of explanation, we denote the first two fields in the label by $\widehat{L}_{G}(v)$. Hence, a label of vertex $v$ consists of the concatenation of $k=O(\log n)$ sub-labels 
$$L_G(v)=\widehat{L}_{G_0}(v) \circ \widehat{L}_{G_1}(v) \ldots  \circ \widehat{L}_{G_k}(v), \mbox{~~where~~} \widehat{L}_{G_i}(v)=\langle ID(G_{i+1}), \,\{(s,d_{G_i}(v,s)),\, s \in \sep(G_{i})\} \rangle,$$ 
$G_0=G$ and $G_i$ is the component of $v$ in $G_{i-1}\setminus \sep(G_{i-1})$ for every $i \in \{1,\ldots, k\}$. The decoding of the distance labels $L_G(u)$ and $L_G(v)$ is done by computing the minimum $u$--$v$ distance via the distances to the separator vertices in each level. 

Our construction of the labels is also recursive, but requires a small adaptation from the scheme of \cite{gavoille2004distance}. 
Since our labels are based on $O(D\logn)$-size separator, the recursion is applied on the components of the BDD decomposition. Observe that in the recursive scheme of \cite{gavoille2004distance}, all subgraphs in a given recursive level are \emph{vertex disjoint}. This is because the child subgraphs of a graph $G'$ are the connected components of $G' \setminus \sep(G')$.
This disjointness property is indeed crucial for the size bound of the final labels, as the label of a vertex $v$ contains the concatenation of the sublabels $\widehat{L}_{G'}(v)$ for all the subgraphs $G'$ in the recursion tree that contain $v$.
In contrast, the level-$i$ subgraphs of the BDD decomposition are not necessarily vertex disjoint, as the vertices of the separators $S_X$ might be added to several subgraphs in order to keep the diameter small. To prevent a blow-up in the label size, we slightly modify the scheme of \cite{gavoille2004distance} as follows. 

Initially, all vertices are marked. In every independent level of the BDD recursion with bag $X$, we are given a subgraph $G':=G[X]$ and assume first that $|X| =\Omega(D \log n)$. By property~(7'), $S_X=O(D\logn)$. The algorithm computes a BFS tree from each $s \in S_X$ in $G'$. For the child bags $X_1,\lds,X_i$ of $X$, define $G'_i:=G'[X_i]$. The label $L_G(v)$ of each \emph{marked} vertex $v \in G'$ is appended with the sub-label $\widehat{L}_{G'}(v)$ which consists of (i) list of distances in $G'$ from $v$ to each $s \in S_{X}$, and (ii) the ID of the subgraph  $G'_j$ to which $v$ belongs.
Finally, all vertices in $S_X$ are \emph{unmarked}.  If $|G'|=O(D\log n)$ (leaf bag in the decomposition), we compute a BFS tree from each $v \in G'$, and append the label of each marked vertex in $G'$ with the list of distances to every $u \in G'$.
This completes the description of the algorithm. 

\paragraph{Round complexity (given the BDD decomposition).}
By property (6) of the BDD decomposition, the diameter of all subgraphs is bounded by $O(D\logn)$. In addition, by property (8'), each edge is shared among at most two subgraphs. Hence computing the $O(D\logn)$ BFS trees in all the subgraphs of level-$i$ in the recursion can be done simultaneously within $\tO(D)$ rounds. Since there are $O(\log n)$ recursion levels, the total round complexity is also bounded by $\tO(D)$.

\paragraph{Correctness.}
We will claim that for every subgraph $G'$ and every pair of marked nodes $u,v \in G'$, the distance $d_{G'}(u,v)$ can be computed from $L_{G'}(u)$ and $L_{G'}(v)$. Since in the first level $G$, all the vertices are marked, this will establish the correctness.
The base of the induction holds trivially for the leaf bags. Assume that it holds for all bags up to level $i$, and consider a bag $G'$ in level $i-1$ with children $G'_1,\ldots, G'_k$ in level $i$. Let $u$ and $v$ be two marked nodes in $G'$. 
There are two options. (i) The shortest $u$-$v$ path $P$ in $G'$ intersects $S_{X}$ at some vertex $w$. In such a case, $d_{G'}(u,v)=d_{G'}(u,w)+d_{G'}(w,v)$. Since the sub-labels $\widehat{L}_{G'}(u), \widehat{L}_{G'}(v)$  contain the distances $d_{G'}(u,w)$ and $d_{G'}(w,v)$ respectively, the distance $d_{G'}(u,v)$ can be be computed. (ii) The shortest $u$-$v$ path $P$ in $G'$ does not intersect $S_X$. By property~(3'), this implies that $u$ and $v$ must be in the same component $G'_j$ in the BDD tree for some $j \in \{1,\ldots, k\}$. 
We have that $d_{G'}(u,v)=d_{G'_j}(u,v)$ and by the induction assumption, $d_{G'_j}(u,v)$ can be computed from 
the labels $L_{G'_j}(u)$ and $L_{G'_j}(v)$. 
\paragraph{Label Size.}
The key observation is that the marked vertices of the level $i$-subgraphs are \emph{vertex disjoint}, for every $i$. This can be shown by induction on $i$. For the base case of $i=0$ the claim holds vacuously. Assume that it holds for $i-1$ and consider some bag $G'$ in level $i-1$, with children bags $G'_1,\ldots, G'_k$. By definition of $S_X$ (property~(7')), $G'_a \cap G'_b \s S_X$ for every distinct $a, b \in \{1,\ldots, k\}$. Since the vertices of $S_X \cap G'_a$ are unmarked for every $a$, combining with the induction assumption for level $i-1$, we get that all \emph{marked} vertices in each level $i$ subgraph are vertex-disjoint. 
The label of a vertex $v$ contains the sub-labels $\widehat{L}_{G'}(v)$ for every bag $G'$ in the BDD tree such that $v$ is a marked node in $G'$. We have that $v$ can be a marked vertex in at most $O(\log n)$ subgraphs, one per level. Since each $\widehat{L}_{G'}(v)$ has $O(D\log ^2n)$ bits, the total label size is bounded by $O(D \log^3 n)$. 

%
%

\subsection{The Distributed Diameter Algorithm}\label{sec:unw-diam}
The diameter is computed on the BDD tree $\mathcal{T}$ from the leaf bags up to the root.
The invariant for phase $i$ is that every node in every bag $X$ in level $D(\mathcal{T})-i+1$ knows
$$d(X)=\max_{u, v \in X}d_{G}(u,v)~.$$
Note that $d(X)$ might be smaller than the diameter of $G[X]$, since it is based on the distances in $G$ rather than in $G[X]$. By keeping this invariant after each step, we get that after $D(\mathcal{T})=O(\log n)$ phases, the root vertex $r$ knows $D=d(V)$.

For the leaf bag $X$, simply assign a leader to collect the distance labels $L_G(v)$ of all vertices $v \in X$, locally compute $d(X)$, and then broadcast it to other nodes in the bag. Assume that the invariant holds up to phase $i$ for all bags in level $\ell_i=D(\mathcal{T})-i+1$, and we now describe phase $i+1$.
Let $X$ be a bag in level $\ell_i-1$ and let $X_1,\ldots, X_k$ be its children bags in level $\ell_i+1$. 
By the invariant, every node in each child bag $X_j$ knows $d(X_j)$. 

Let $u,v \in X$ be the pair of largest $G$-distance in $X$, i.e., $d(X)=d_{G}(u,v)$.
There are two options: (i) $u,v \subseteq X_j$ for some $j \in \{1,\ldots, k\}$, or (ii) $u \in X_j$ and $v \in X_{j'}$ for some $j \neq j'\in [k]$. Case (i) can be easily handled since nodes in $X_j$ know $d(X_j)$, and the maximum $d(X_j)$ value over all $j\in [k]$ can be computed along a BFS spanning tree in $G[X]$ of depth $O(D\logn)$, guaranteed by property (6'). Thus, the nodes in $X$ can compute $\max_j d(X_j)$. 

Before we handle case (ii), let us define $G^+$ to be all vertices and edges whose planar embedding is on or inside the closed curve guaranteed by property (9) of BDDs. Similarly, define $G^-$ to be all vertices and edges whose planar embedding is on or outside the closed curve. Define $X^+$ and $X^-$ as the vertex sets of $G^+$ and $G^-$, respectively. Observe that by property (9), every child $X_j$ satisfies $X_j\s X^+$ or $X_j\s X^-$, but not both. For a vertex $v\in X$, define $G^\pm(v)$ as the graph $G^+$ or $G^-$ that contains $v$. If both do (e.g., when $v\in S_X$), then choose one arbitrarily. Define $X^\pm(v)$ as the vertex set of $G^\pm(v)$. Finally, define $G':=G[X]$ and $G'_j:=G[X_j]$ as before.

By property (3') of BDDs, any $u$--$v$ path in  case (ii) must pass through a vertex in $S_X$. In particular, the shortest $u$--$v$ path must travel inside $X^\pm(u)$ until reaching some node $s\in S_X$, then take the shortest $s$--$t$ path in $G$ to some $t\in S_X$ (possibly $t=s$), and finally travel inside $X^\pm(v)$ to $v$. Therefore, it suffices to compute
\begin{gather}  \max_{u\in X_j,\ v \in X_{j'},\ j\ne j'}\ \min_{s,t \in S_X} (d_{G^\pm(u)}(u,s) + d_G(s,t) + d_{G^\pm(v)}(t,v)) \label{eq:diam} .\end{gather}
The value $d_G(s,t)$ in the $\min$ expression in (\ref{eq:diam}) can be computed using the distance labels in $G$: every node in $S_X$ can simply broadcast its distance label to all nodes in $X$. 

For the other values $d_{G^\pm(u)}(u,s)$ and $d_{G^\pm(v)}(t,v)$, we compute distance in $G^+$ and $G^-$ so that every vertex $v\in G'$ knows its $S_X$-distance tuple in $G^\pm(v)$. Note that distances in $G$ will not work, since we want to apply \Cref{thm:exact-compr} to the graphs $G^+$ and $G^-$ (which each have $S_X$ lying on a single face), and the theorem requires distances in $G^+$ and $G^-$. To accomplish this for $G^-$, for each child bag $X_j$, first compute distance labels in $X_j$,\footnote{The structure of the distance labels in \Cref{sec:exact-labels} make computing these easy: simply read off the suffix of the distance label $L_G(v)=\widehat{L}_{G_0}(v) \circ \widehat{L}_{G_1}(v) \ldots  \circ \widehat{L}_{G_k}(v)$ beginning at $\widehat L_{G_j'}(v)$. Alternatively, if the reader wants to black-box \Cref{sec:exact-labels}, then the distance labels can be computed from scratch in each $G_j$. By property~(8'), every edge is in at most two $G_j$ at this level, so the computations over different $G_j$ can be done simultaneously in $\tO(D)$ rounds.} and have every pair of nodes $u,v\in S_X\cap X_j$ learn their distance in $G_j$. Then, consider an auxiliary graph $H$ on vertices $S_X$ where vertices $u,v\in S_X$ have an edge of length $\min_{G'_j\s G^-:\,u,v\in X_j}d_{G'_j}(u,v)$ (possibly $\infty$). Broadcast this graph to all nodes in $G^-$, so that each node $u\in X_j\s X^-$ can compute, for each $t\in S_X$, $\min_{s\in S_X\cap X_j} (d_{G_j}(u,s)+d_H(s,t))$.
By property (3'), for every $u\in X^-$ and $t\in S_X$, the (edges of the) shortest $u$--$t$ path can be broken into subpaths with endpoints in $S_X$ and which stay entirely inside some $G'_j$, and the distance between these endpoints is correctly computed. Therefore, this correctly computes distances in $G^-$. We can then repeat the same argument for $G^+$.\footnote{Note that \Cref{clm:conn2} implies that there is only one child bag $X_j$ inside $X^+$, so this case can be made even easier.}


Therefore, from now on, we assume that every vertex $v\in G'$ has computed its $S_X$-distance tuple in $G^\pm(v)$. Let us assume that all component IDs are represented by $B=O(\log n)$ bits. We then repeat the following process for each $b\in[B]$: Let $\m B_{b,0}$ be all components $G'_j$ whose $b$'th bit in their component ID is $0$, and let $\m B_{b,1}$ be those whose $b$'th bit in their component ID is $1$. Then, every vertex sends its distance tuple along the rooted spanning tree, together with the $b$'th bit of its component ID, so that the root learns the set of distinct (distance tuple, $b$'th~bit~of~component~ID) pairs. By \Cref{thm:exact-compr}, there are at most $\tO(D^4)$ many distinct tuples, so this can be done efficiently, whose analysis is deferred to the Round Complexity section. 
Once the root of the spanning tree receives all these pairs, it computes
\begin{gather}
\max_{\substack{ u\in G'_j\in\m B_{b,0}\\ v\in G'_{j'}\in B_{b,1}}} \ \min_{s,t\in S_X}(d_{G'^\pm(u)}(u,s) + d_G(s,t) + d_{G'^\pm(v)}(t,v)) , \label{eq:diamHT}
\end{gather}
by trying all pairs of a $(\cd,0)$ tuple and a $(\cd,1)$ tuple. Finally, the two components $G_j,G_{j'}$  achieving the $\max$ in Eq. (\ref{eq:diam}) must have their components IDs differ in some bit position $b$,  which means the root of the spanning tree correctly computes Eq. (\ref{eq:diam}) for that value of $b$. So by trying all $b\in[B]$, one of them will succeed.

\paragraph{Round Complexity.}
We show that phase $i$ can be implemented in $\widetilde{O}(D^5)$ rounds, and since there are $O(\log n)$ phases this establishes the claim. Since each edge $e$ appears on at most $2$ subgraphs in level $i$ by property (8') of BDDs, it is sufficient to analyze the round complexity for one such subgraph $G'=G[X]$. Then, we can work on all level subgraphs of level $i$ in parallel while paying a factor $2$  in the round complexity.

First, sending the distance labels of all $S_X$ vertices to all the vertices in $G'$ can be done by pipelining these labels along the rooted spanning tree. Since $G'$ has diameter $O(D\log n)$ by propery~(6'), $|S_X|=O(D\log n)$ by property~(7'), and distance tuples have size $\tO(D)$, this can be done in $\tO(D^2)$ rounds. As for computing the $S_X$-distance tuples in $G^\pm(v)$ for each $v\in X$, the distance labels can be computed in $\tO(D)$ time and pipelining the graph $H$ takes $O(|S_X|^2)=\tO(D^2)$ time, so this also takes $\tO(D^2)$ rounds.

The most expensive part is gathering the distance tuples. Since $S_X$ is a face in $G^+$, by \Cref{thm:exact-compr}, there are $O(|S_X|^3D) = \tO(D^4)$ many distinct distance tuples among the vertices in $G^+$; the same holds for $G^-$. Therefore, sending the distance labels can be implemented in $\widetilde{O}(D^6)$ rounds trivially: starting from $d=O(\logn)$ to $d=1$, all nodes at depth $d$ in the spanning tree send their $\tO(D^4)$ many $\tO(D)$-sized labels to their parents, and then each parent computes the union of its own labels, together with all labels it received.
 We can speed this up to $\tO(D^5)$ rounds as follows. Compute a hash of $O(\log n)$ bits for each distance tuple; w.h.p., no two hashes of different distance tuples collide. Then, repeat the same procedure as above, except with the hashes, taking $\tO(D^5)$ time. For each hash, we can add a tag of the ID of an arbitrary node that contains the distance tuple with this hash, to be sent along the spanning tree. The root then broadcasts down all (hash, ID) pairs. Finally, for each pair, the node with the corresponding ID sends the original $\tO(D)$-sized distance tuple. This is $\tO(D^4)$ many $\tO(D)$-sized labels to send along the spanning tree, which can be pipelined to run in $\tO(D^5)$ rounds.

\section{Distributed Distance Labels and SSSP in Weighted Graphs}\label{sec:labelssp}
In this section we consider a weighted graph $G=(V,E,w)$. 
Throughout, let $D(G)$ be the unweighted diameter of the graph $G$, when $G$ is clear from the context, we simply write $D$. Let $W$ be the largest edge weight in $G$, we will assume that $W=\poly(n)$. 
%
We will show the following:
\begin{lemma}\label{lem:weighted-exact-label}
For every weighted graph $G=(V,E,\omega)$, there exists a randomized algorithm for computing exact distance labels of size $\widetilde{O}(D)$ bits within $\widetilde{O}(D^2)$ rounds.
\end{lemma}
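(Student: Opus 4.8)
The plan is to reuse the recursive distance-labeling scheme of Gavoille et al.\ \cite{gavoille2004distance}, adapted to the non-vertex-disjoint BDD exactly as in \Cref{sec:exact-labels}, but to compute the labels \emph{bottom-up} along the recursion tree $\m T$ of the unweighted BDD of \Cref{thm:SPSD} (built on a BFS tree of $G$). The reason for reversing the direction is the one flagged in the technical overview: in the unweighted case one can run a BFS from each separator vertex to learn all in-bag distances, but a single \emph{weighted} shortest-path computation is already too expensive, so instead the distances to a separator must be reconstructed from the sub-labels already produced for the child bags. Concretely, for a bag $X$ with separator $S_X$ and children $X_1,\lds,X_k$ we use the label $L_{G[X]}(v)=\widehat L_{G[X]}(v)\circ L_{G[X_{j(v)}]}(v)$, where $\widehat L_{G[X]}(v)=\langle \mathrm{ID}(X_{j(v)}),\ \{(s,d_{G[X]}(v,s)):s\in S_X\}\rangle$ and $X_{j(v)}$ is a child bag containing $v$; as in \Cref{sec:exact-labels} a vertex stays \emph{marked} until it first lands in some $S_X$, and only the sub-labels $\widehat L_{G[X']}(\cdot)$ of bags $X'$ in which $v$ is marked are kept in the final stored label. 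By property~(7') each such sub-label has $\tO(D)$ bits, and since $\m T$ has depth $O(\log n)$ and the marked vertices of the bags at a fixed level are vertex-disjoint, $v$ is marked in $O(\log n)$ bags; hence the final label has $\tO(D)$ bits. Decoding is the standard Gavoille argument combined with property~(3') of \Cref{lem:BDD-add}: given $L_G(u),L_G(v)$, either the shortest $u$--$v$ path meets $S_G$, in which case $d_G(u,v)=\min_{s\in S_G}(d_G(u,s)+d_G(s,v))$ is read off the top-level sub-labels, or property~(3') puts $u,v$ in a common child bag and one recurses.

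The labels are then computed level by level from the leaves of $\m T$ upward. At a leaf bag $X$, which has $O(D\log n)$ vertices and, by planarity, $O(D\log n)$ weighted edges, a designated root gathers all of $G[X]$ up a depth-$\tO(D)$ BFS tree of $G[X]$ (legitimate since $G[X]$ has unweighted diameter $\tO(D)$ by property~(6')) and broadcasts it back; every vertex then computes its label locally by Dijkstra, at a cost of $\tO(D)$ rounds. At an internal bag $X$ with children $X_1,\lds,X_k$, assume inductively that every $v\in X_j$ knows $L_{G[X_j]}(v)$. Each separator vertex $s\in S_X$ broadcasts to all of $G[X]$ the sub-labels $L_{G[X_j]}(s)$ over the child bags $X_j$ containing it (at most two, as the $X^+/X^-_i$ construction of \Cref{sec:4.2} guarantees). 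From these, together with its own sub-label(s), a vertex $v\in X$ can decode $d_{G[X_j]}(s,s')$ for every child $X_j$ and $s,s'\in S_X\cap X_j$, as well as $d_{G[X_j]}(v,s)$ for every child $X_j$ containing $v$ and $s\in S_X\cap X_j$; it forms the auxiliary graph $H$ on $S_X\cup\{v\}$ with these edge weights and sets $d_{G[X]}(v,s):=d_H(v,s)$. Correctness, $d_H(v,s)=d_{G[X]}(v,s)$, is once more property~(3'): a shortest $v$--$s$ path in $G[X]$ changes child bags only at $S_X$-vertices, hence splits into segments lying inside single child bags with endpoints in $S_X\cup\{v,s\}$. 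This produces $\widehat L_{G[X]}(v)$, and hence $L_{G[X]}(v)$, for every $v\in X$.

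For the round complexity, at a bag $X$ there are $|S_X|=\tO(D)$ separator vertices (property~(7')), each broadcasting $O(1)$ sub-labels of $\tO(D)$ bits, so $\tO(D^2)$ bits in all; pipelining them up and down the $\tO(D)$-diameter BFS tree of $G[X]$ takes $\tO(D^2)$ rounds. By property~(8') every edge lies in at most two bags per level, so the scheduling fact runs all bags of a level simultaneously at $O(1)$ overhead; summing over the $O(\log n)$ levels, plus the $\tO(D)$ rounds to build the BDD, yields $\tO(D^2)$ rounds. Since the BDD construction is randomized, the whole procedure is a randomized algorithm succeeding w.h.p.

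The step I expect to be the main obstacle is closing the size budget at an internal bag: the $\tO(D^2)$-round bound needs \emph{both} that each separator vertex of $X$ has only $O(1)$ immediate child bags (so it broadcasts only $O(1)$ sub-labels) \emph{and} that every broadcast sub-label is already of size $\tO(D)$. The latter is exactly what the marking/shortening convention buys, since without truncating a vertex's recursive label once it becomes a separator, that label could fan out across lower levels to $\poly(n)$ bags. Checking that the $O(1)$-child-bag property genuinely follows from the concrete construction of \Cref{sec:4.2}, that the shortening keeps every \emph{intermediate} sub-label (not just the final stored one) of size $\tO(D)$, and that the decoder still works on the shortened labels via property~(3'), is the delicate part; the rest is routine pipelining and local Dijkstra.
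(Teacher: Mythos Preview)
Your proposal is correct and follows essentially the same approach as the paper: bottom-up label computation along the unweighted BDD, gathering the whole leaf bag locally, having each separator vertex of $X$ broadcast its child-bag labels, and reconstructing $d_{G[X]}(v,s)$ via Dijkstra on the auxiliary graph on $S_X\cup\{v\}$, with the same truncation rule $L_{G[X]}(s)=\widehat L_{G[X]}(s)$ for $s\in S_X$. The only small slip is your ``at most two'' child bags per separator vertex; the paper states at most three (a vertex of $S_X$ lying on the boundary $\ol{\m O(X)}\setminus\m O(X)$ can sit in $X^+$ and in two adjacent $X^-_i$ regions), but this does not affect the $O(1)$ bound you actually use.
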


The label structure will be exactly the same as that of the unweighted case, but the computation procedure is slightly more involved.  In contrast to the unweighted setting, we will not compute the distances from the separator vertices by computing their shortest path trees, as the latter step is too costly for weighted graphs. Instead, these distances will be computed in a bottom-up manner from the leaf bags (in the BDD tree) up to the root. 

Given the BDD decomposition, the labels are computed from the leaf bags up to the root, as follows.
%
The label $L_{G[X]}(v)$ for a leaf bag $X$ contains the list of all $X \times \{v\}$ distances in $G[X]$.
These distances can be computed locally at each node by collecting all edges of $G[X]$ (recall that $|X|=O(D \cdot \log n)$ by property (4)). 

Assume that the invariant holds up to level $i+1$, and consider a bag $X$ in level $i$ and its children $X_1,\ldots, X_k$ in level $i+1$, where we define $G':=G[X]$ and $G'_i:=G[X_i]$ as before. By the induction invariant, we assume that for every $X_i$, every vertex $v \in X_i$ has computed the distance label $L_{G'_i}(v)$.
Recall that to compute $L_{G'}(v)$, it is sufficient to compute the sub-label $\widehat{L}_{G'}(v)$ which contains the list of $S_X \times \{v\}$ distances in $G'$. To compute these distances, every vertex $s \in S_X$ sends to all the vertices in $G'$, its label $L_{G'_{j}}(s)$ for every $G'_j$ that contains $s$. 
In the analysis section, we will show that each $s \in S_X$ might appear on at most \emph{three} such subgraphs. All these labels are sent in a pipeline manner on a BFS tree in $G'$. Equipped with this distance information, each vertex $v \in G'$ computes its distances to $S_X$ in $G'$ by computing (locally) a weighted complete graph $\widehat{G}(v)$ on the vertex set $V(\widehat{G}(v)):=\{v\} \cup S_X$ with the following edge weights.
For every $x,y \in V(\widehat{G}(v)) \cap V(G'_i)$, the distance $d_{G'_i}(x,y)$ can be computed from the labels $L_{G'_i}(x)$ and $L_{G'_i}(y)$ for every $i \in \{1,\ldots, k\}$. The edge weight of $(x,y)$ is defined by
$w(x,y)=\min_{i=1}^k \widehat{d}_{G'_i}(x,y),$
where $\widehat{d}_{G'_i}(x,y)=d_{G'_i}(x,y)$ if both $x,y \in G'_i$ and $\widehat{d}_{G'_i}(x,y)=\infty$ otherwise. 
The distances  $S_X \times \{v\}$ for the sub-label $\widehat{L}_{G'}(v)$ are computed by running Dijkstra (locally) on $\widehat{G}(v)$. 
The label $L_{G'}(v)$ for every vertex $v \in G'_i$ and every $i \in \{1,\ldots, k\}$ is defined as follows:
$$
  L_{G'}(v) =
  \begin{cases}
    \widehat{L}_{G'}(v) \circ L_{G'_i}(v) & \text{for } v \notin S_X \\
    \widehat{L}_{G'}(v) & \text{for } v \in S_{X}.
        \end{cases}
$$

%
%
%
%
%
This completes the description of the algorithm.
%
We now claim: 
\BL\label{lem:ExactDistLabel}
The algorithm computes exact distance labels of size $O(D \log n\cdot \log W)$ within $\widetilde{O}(D^2)$ rounds.
\EL
\paragraph{Correctness and label size.}
The correctness is shown by induction on the subgraphs of the BDD decomposition, from the leaf subgraphs up to the root $G$. Let $d$ be the depth of $\mathcal{T}$. 
The base case is immediate as the label contains all distances in the leaf subgraph. Assume that 
for all subgraphs $G''$ up to level $i+1$, it holds that: (i) using $L_{G''}(u), L_{G''}(v)$, one can compute $d_{G''}(u,v)$ for every $u,v \in G''$; (ii) $|L_{G''}(u)|\leq \lceil D \log W \rceil \cdot (d-i+1)$. 

We will now consider a subgraph $G'$ in level $i$, with its children $G'_1,\ldots, G'_k$ in level $i+1$. We first show the correctness of the labels.
Starting with claim (i), we first show that each vertex $v$ correctly computes the $S_X \times \{v\}$ distances in $G'$. Recall that $v$ locally computes a weighted clique $\widehat{G}(v)$ with edge weights assigned based on the labels of $v$ and the labels of $S_X$ in $G'_1,\ldots, G'_k$. Observe that by the induction assumption on the labels of the subgraphs in level $i+1$, we have that $w(x,y) \geq d_{G'}(x,y)$ for every $x, y \in \widehat{G}(v)$.
Fix $s \in S_X$ and let $P$ be an $v$--$s$ shortest path in $G'$. Let $s_1,\ldots, s_k=s$ be the vertices on $P \cap S_X$ in order of their appearance on $P$ (i.e., sorted in increasing distance from $v$). We claim by induction on $j$ that $d_{\widehat{G}(v)}(v,s_j)=d_{G'}(v,s_j)$. For the base of the induction, consider $s_1$ and note that $P[v,s_1]$ is fully contained in one of the $G'_{\ell}$ subgraphs for some $\ell \in \{1,\ldots, k\}$. Thus $d_{G'}(v,s_1)=d_{G'_\ell}(v,s_1)$ and by the induction assumption on the labels $L_{G'_\ell}(v),  L_{G'_\ell}(s_1)$, we have that $w(v,s_1)=d_{G'_\ell}(v,s_j)$. Assume that the claim holds up to $s_{j-1}$. We will show that $d_{G'}(v,s_j)=d_{\widehat{G}(v)}(v,s_j)$, it is sufficient to show that $w(s_{j-1},s_j)=d_{G'}(s_{j-1},s_j)$. Since the internal segment of $P[s_{j-1},s_j]$ does not intersect $S_X$, it is fully contained in one of the subgraphs $G'_{\ell'}$. The claim then follows by the induction assumption on the labels of $G'_{\ell'}$. 
So-far, we have proved the correctness of the sub-label $\widehat{L}_{G'}(v)$ for every $v \in G'$. In the final step, for every non-separator vertex $v \in G'_i \setminus S_X$, we set $L_{G'}(v)=\widehat{L}_{G'}(v) \circ L_{G'_i}(v)$; for $s \in S_X$, we let $L_{G'}(v)=\widehat{L}_{G'}(v)$. This exactly follows the labeling scheme of \cite{gavoille2004distance} for the graph $G'$, thus the correctness follows immediately.

Finally, we bound the label size. By definition of $S_X$ (property~(7)), every $v \in G'\setminus S_X$ belongs to exactly one of the $G'_i$ subgraphs. Since the label $L_{G'_i}(v)$ is extended by adding the sub-label $\widehat{L}_{G'}(v)$ which consists of $\lceil D \log W \rceil$ bits, the claim follows by combining with the induction assumption (ii) on the size of $L_{G'_i}(v)$. For $v \in S_X$, we have that $L_{G'}(v)=|\widehat{L}_{G'}(v)|\leq \lceil D \log W \rceil$, the claim follows. 
\vspace{-10pt}
\paragraph{Round complexity.}
Consider a subgraph $G'$ in level $i$, and let $G'_1,\ldots, G'_k$ be its children. 

We first claim that each $s \in S_X$ belongs to at most three $G'_j$ subgraphs. This holds by the properties of the BDD decomposition.
Since the label size is bounded by $\widetilde{O}(D)$, overall the total amount of label information is $\widetilde{O}(D^2)$. Sending this information on a BFS tree in a pipeline manner takes $\widetilde{O}(D^2)$ rounds. The remaining computation of the distances based on these labels are local. 

\paragraph{From labels to SSSP.} Let $s \in V$ be the input source. To compute the $s \times V$ distances, it is sufficient to send the label of $s$ to all the vertices. This can be done in $\widetilde{O}(D)$ rounds. At this point, all vertices $v$ can compute $d_G(s,v)$. We let nodes exchange this distance information with their neighbors. To define the tree, every vertex $v$ picks as it parent the neighbor $u=\arg\min\{d_G(s,x)+w(u,x), x \in N(u)\}$, breaking ties based on IDs.

\begin{lemma}\label{lem:ExactSSSP}
For every weighted graph $G=(V,E,\omega)$ and a fixed source vertex $r \in V$, there exists a distributed algorithm that computes an (exact) SSSP tree in planar graph within $\widetilde{O}(D^2 \log W)$ rounds where $W$ is the maximum edge weight.
\end{lemma}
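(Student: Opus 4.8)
The plan is to reduce SSSP computation to the exact distance labels of \Cref{lem:ExactDistLabel} (equivalently \Cref{lem:weighted-exact-label}), which already run in $\widetilde{O}(D^2)$ rounds and produce labels $L_G(v)$ of $\widetilde{O}(D)$ bits such that $d_G(u,v)$ is recoverable from $L_G(u)$ and $L_G(v)$ alone. First I would invoke that construction to equip every vertex $v$ with its own label $L_G(v)$. Next, the source $r$ disseminates its label $L_G(r)$ to the whole graph: build a BFS tree of $G$ rooted at $r$ in $O(D)$ rounds, and pipeline the $\widetilde{O}(D)$-bit label $L_G(r)$ down this tree; since the tree has depth $D$ and each round pushes one $O(\log(nW))$-word one level further, the broadcast finishes in $O(D)+\widetilde{O}(D\log W)=\widetilde{O}(D\log W)$ rounds. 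After the broadcast, every vertex $v$ holds both $L_G(r)$ and $L_G(v)$, hence can locally evaluate the exact distance $\delta(v):=d_G(r,v)$.

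The second phase turns these distances into a tree in a single communication round. Every vertex $v$ sends $\delta(v)$ (an $O(\log(nW))$-bit value) to all of its neighbors, and then each $v\ne r$ selects as its parent the neighbor
\[
 p(v):=\arg\min_{u\in N(v)}\big(\delta(u)+w(u,v)\big),
\]
breaking ties by smallest vertex ID. Correctness: for the chosen parent, $\delta(p(v))+w(p(v),v)=\delta(v)$, because some neighbor on a shortest $r$--$v$ path attains this value and no neighbor can do better (that would contradict $\delta(v)=d_G(r,v)$). Consequently, following parent pointers from $v$ to $r$ yields a path of total weight exactly $\delta(v)$, so every root-to-vertex path in the resulting subgraph is a shortest path; restricted to the vertices reachable from $r$, the parent pointers therefore form a genuine SSSP tree. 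For acyclicity I would use the standard argument that $\delta$ does not increase along a parent edge, together with the ID-based tie-break: among any putative cycle, the vertex of minimum $(\delta(\cdot),\mathrm{ID}(\cdot))$ would not be choosing its successor in the cycle, a contradiction.

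Putting the pieces together, the round complexity is $\widetilde{O}(D^2)$ for the label construction, plus $\widetilde{O}(D\log W)$ for broadcasting $L_G(r)$, plus $O(1)$ for the neighbor exchange, which is $\widetilde{O}(D^2\log W)$ as claimed (and in fact $\widetilde{O}(D^2)$ under the standing assumption $W=\poly(n)$). I do not expect a serious obstacle: the only mildly delicate point is the tie-breaking argument ensuring the parent pointers are acyclic when zero-weight edges are present; if one assumes strictly positive weights, acyclicity is immediate since $\delta$ strictly decreases along each parent edge, and everything else is a black-box use of \Cref{lem:ExactDistLabel} together with elementary pipelining.
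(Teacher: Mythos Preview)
Your proposal is correct and matches the paper's own argument essentially line for line: the paper's ``From labels to SSSP'' paragraph also invokes \Cref{lem:ExactDistLabel}, broadcasts the source's label over a BFS tree, has every node compute $d_G(s,v)$ locally, exchange distances with neighbors, and pick the parent $\arg\min_{x\in N(u)}\{d_G(s,x)+w(u,x)\}$ with ID tie-breaking. Your write-up is slightly more careful about the acyclicity/tie-breaking justification and the $\log W$ bookkeeping (the paper silently uses the standing assumption $W=\poly(n)$), but the approach is the same.
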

%
%
%

\section{$(1+\epsilon)$ Diameter Approximation in Weighted Graphs}\label{sec:approxdiam}

\paragraph{Step (1): (Exact) SSSP.}
We begin with the SSSP tree $T$ from an arbitrary source $r$, computed in $\tO(D^2)$ time (\Cref{lem:ExactSSSP}). This also gives us a $2$-approximation of the weighted diameter, by finding the maximum distance from the source $r$ and multiplying that distance by $2$. Let $\widetilde D$ be this diameter estimate, so that the true diameter is in the range $[\widetilde D,2\widetilde D]$.

\paragraph{Step (2): BDD Decomposition on SSSP.}
Our next step is to compute a BDD on the \emph{SSSP tree} $T$. Note that $T$ might have arbitrarily large (unweighted) diameter. Nevertheless, we can modify the BDD decomposition scheme to run in $\tO(D^2)$ rounds\footnote{As elaborated in Sec. \ref{app:sssp-bdd} this might improve to $\tO(D^2)$ rounds if the separator algorithm of \cite{GhaffariP17}, and connectivity identification algorithm of \cite{ghaffari2016distributed} are applied in a non black-box manner.}. Another important property is that we modify the BDD so that every leaf bag has at most $O(1/\e \cdot \log^2 n)$ \emph{non-separator} nodes; By non-separator nodes we refer to vertices that are not in the separator $S_{X'}$  for all parent bags $X'$ of $X$ for each leaf bag $X$.
Again, it is easy to see that the BDD still has $O(\logn)$ levels.

For every bag $X$, the separator algorithm of~\cite{GhaffariP17} runs in time $\tO(D)$ on the computed spanning tree of $X$, regardless of its diameter. Moreover, this can be done in $\tO(D^2)$ time total in parallel over all bags $X$ at a given level, via the use of low-congestion shortcuts. One potential issue is the fact that the bags $X$ of a given level may share nodes, which means the shortcut parts may not be vertex-disjoint.

\dnsparagraph{Shortcuts on bags}
We remedy this issue by constructing a graph $\widehat G$ similar to the one in Section~4.2.1~of~\cite{GhaffariP17}. Our construction is as follows:
\BE
\im[i.] For each edge $v$ in more than one bag $X$ on a given level, $v$ makes a copy $v_X$ of itself for each such bag $X$ containing $v$. Add an edge $(v,v_X)$ for each such $X$.
\im[ii.] For each edge $(u,v)$ where $u$ has a copy $u_X$ and $v$ has no copy, replace $(u,v)$ with $(u_X,v)$. For each edge $(u,v)$ where $u$ and $v$ have copies $u_X$ and $v_X$, replace $(u,v)$ with $(u_X,v_X)$.
\EE
Note that we construct a different graph $\widehat G$ for each given level of the BDD.

\BL\label{lem:widehatG}
The graph $\widehat G$ satisfies the following properties:
\BE
\im $\widehat G$ is still planar.
\im $\widehat G$ has diameter $O(D)$.
\im An $r$-round distributed algorithm can be simulated in $2r$ rounds on $G$.
\EE
\EL
\BP
To show (1), consider the regions $\m O(X)$ for each bag $X$ on the given level, which are disjoint by property (8b) of the BDD. Imagine shrinking each $\m O(X)$ infinitesimally in the plane into a region $\m O'(X)$, so that their closures $\ol{\m O'(X)}$ are now also disjoint. For each vertex $v$ in multiple bags $X$, consider its new location $v'_X$ in each $\m O'(X)$; these locations are infinitesimally close to each other. Therefore, we can draw an infinitesimal curve from $v$ to each $v'_X$ in the plane without the curves intersecting. For each edge $(u,v)$ where $u$ has a copy $u_X$ but $v$ does not, the drawing of the edge can be shifted infinitesimally so that it now travels from $u'_X$ to $v$. For each edge $(u,v)$ where both $u$ and $v$ have copies $u_X$ and $v_X$, shift both endpoints of the edge infinitesimally so that it now travels from $u'_X$ to $v'_X$. Since the new closures $\ol{\m O'(X)}$ are disjoint, there can be no new edge crossings that arise. Since there were no crossings to begin with, there are still none, so the resulting drawing, which is an embedding of $\widehat G$, is planar. This proves property~(1). 

For property~(2), observe that for any path $v^1,v^2,\lds,v^\el$, every edge $(v^i,v^{i+1})$ in the path  can be replaced by either the path $v^i,v^i_X,v^{i+1}$, or the path $v^i,v^{i+1}_X,v^{i+1}$, or the path  $v^i,v^i_X,v^{i+1}_X,v^{i+1}$, depending on which of $v^i,v^{i+1}$ are duplicated in the graph. (If neither is duplicated, then the edge $(v^i,v^{i+1})$ still exists, so no replacement is necessary.) If we replace each edge on the path, the new path length is at most $3$ times the old length. Therefore, since $G$ has diameter $O(D)$, so does $\widehat G$.

Lastly, we prove property~(3). Every node $v$ can simulate all of its copies $v_X$. By property~(8'), each edge $(u,v)$ belongs to at most $2$ subgraphs of that level, so there are at most two edges $(u_X,v_X)$ in $\widehat G$. Therefore, for each distributed round on $\widehat G$, the messages that get passed along the different $(u_X,v_X)$ can be sent along $(u,v)$ in $G$ in $2$ rounds. Since every round in $\widehat G$ takes $2$ rounds on $G$, we have property~(3).
\EP

For each bag $X$ on the given level, define the vertex set $\widehat X$ in $\widehat G$ as follows: for each vertex $v\in X$, add $v$ to $\widehat X$ if $v$ has no copy of itself, and $v_X$ otherwise. The graph $\widehat G[\widehat X]$ is isomorphic to $G[X]$, and every vertex $v\in X$ knows its corresponding vertex in $\widehat X$. Therefore, we may compute shortcuts in each bag $\widehat X$ simultaneously, which are vertex-disjoint. By \Cref{lem:widehatG}, $\widehat G$ is still planar, so efficient shortcuts exist on $\widehat G$, and moreover, the computation on $\widehat G$ can be simulated efficiently back on $G$. From that point on, we assume that for every subgraph $G'$, we have a shortcut subgraph $H'$, such that $G' \cup H'$ has diameter $O(D\log n)$, and each edge appears on $O(D\log n)$ many $H''$ subgraphs for every shortcut subgraph $H''$ for a subgraph $G''$ in that level. This allows us working in all subgraphs of the same level efficiently.

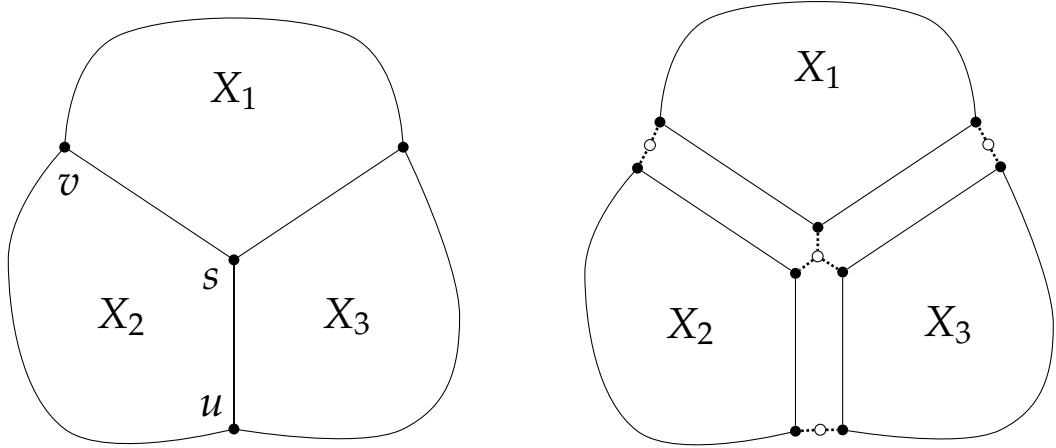
\begin{figure}
\centering
\begin{tikzpicture}[scale=.75]
\node[scale=1.5] at (0,3) {$X_1$};
\node[scale=1.5] at (-2,-1) {$X_2$};
\node[scale=1.5] at (2,-1) {$X_3$};
\node[scale=1.5] at (-0.4005,-0.3405) {$s$};
\node[scale=1.5] at (-0.3909,-2.5805) {$u$};
\node[scale=1.5] at (-2.9295,1.3365) {$v$};
\tikzstyle{every node} = [circle, fill, scale=.4];
\node (v2) at (0,0) {};
\node (v1) at (-3,2) {};
\node (v4) at (3,2) {};
\node (v3) at (0,-3) {};
\draw (v1.center) -- (v2.center) -- (v3.center) -- (v2.center) -- (v4.center);
\draw  plot[smooth, tension=.7] coordinates {(v1) (-2,4) (2,4) (v4)};
\draw  plot[smooth, tension=.7] coordinates {(v1) (-4,0) (-3,-3) (v3)};
\draw  plot[smooth, tension=.7] coordinates {(v3) (3,-3) (4,-1) (v4)};
\end{tikzpicture}
\qquad\qquad
\begin{tikzpicture}[scale=.7]
\node[scale=1.5] at (0,4) {$X_1$};
\node[scale=1.5] at (-2.4261,-0.875) {$X_2$};
\node[scale=1.5] at (2.4698,-0.8497) {$X_3$};
\tikzstyle{every node} = [circle, fill, scale=.4];
\node (v2) at (0,1) {};
\node (v1) at (-3,3) {};
\node (v4) at (3,3) {};
\node (v3) at (-0.4261,-2.875) {};
\node (v5) at (-3.4261,2.125) {};
\node (v6) at (-0.4261,0.125) {};
\node (v7) at (3.4698,2.1503) {};
\node(v8) at (0.4698,-2.8497) {};
\node(v9) at (0.4698,0.1503) {};

\draw  plot[smooth, tension=.7] coordinates {(v1) (-2,5) (2,5) (v4)};
\draw  plot[smooth, tension=.7] coordinates {(v5) (-4.4261,0.125) (-3.4261,-2.875) (v3)};
\draw  plot[smooth, tension=.7] coordinates {(v8) (3.4698,-2.8497) (4.4698,-0.8497) (v7)};
\draw  (v1) edge (v2);
\draw  (v4) edge (v2);
\draw  (v5) edge (v6);
\draw  (v6) edge (v3);
\draw  (v8) edge (v9);
\draw  (v7) edge (v9);

\tikzstyle{every node} = [draw,shape=circle,  scale=.4];
\node (v11) at (-0.005,0.4485) {};
\node (v10) at (-3.1906,2.563) {};
\node (v13) at (0.0447,-2.8476) {};
\node (v12) at (3.2359,2.5741) {};
\draw [densely dotted,line width=1] (v10) edge (v1);
\draw [densely dotted,line width=1] (v5) edge (v10);
\draw [densely dotted,line width=1] (v2) edge (v11);
\draw [densely dotted,line width=1] (v11) edge (v6);
\draw [densely dotted,line width=1] (v11) edge (v9);
\draw [densely dotted,line width=1] (v12) edge (v4);
\draw [densely dotted,line width=1] (v12) edge (v7);
\draw [densely dotted,line width=1] (v3) edge (v13);
\draw [densely dotted,line width=1] (v13) edge (v8);
\end{tikzpicture}
\caption{The graph $\widehat G$, obtained from splitting vertices of $G$, similar to the procedure in \cite{GhaffariP17}.}\label{fig:split}
\end{figure}

\dnsparagraph{Step (3): Diameter Approximation}
The diameter is computed on the BDD tree $\mathcal{T}$ (of the SSSP tree) from the leaf bags up to the root. 
The algorithm is similar to the one in \Cref{sec:unw-diam}, except that again, we use shortcuts in the simulated graph and mark special \emph{portal} nodes along the separators. Here, for each bag $X$ with subgraph $G'=G[X]$, we settle for computing an approximate diameter $\tilde d(G')\leq \max_{u,v \in G'}d_G(u,v)+\epsilon \widetilde{D}$. Set $\delta=\epsilon/3 \cdot \widetilde{D}$.

For every bag $X$, let $X_{a}$ be the set of all \emph{active} nodes defined by the nodes that did not appear on the separator of any ancestor bag of $X$.
Specifically, the invariant for phase $i$ would be that every node in every bag $X$ in level $D(\mathcal{T})-i+1$ knows
$$\widetilde{d}(X) \in  d(X) \pm \delta  \mbox{~~where~~} d(X)=\max_{u \in X_m, v \in X}d_{G}(u,v)~.$$
That is, the value $d(X)$ is restricted to the $G$-distances of $X_m \times X$. This definition is important for handling efficiently the leaf bags $X$, which by definition should satisfy that $|X_m|=\widetilde{O}(1/\epsilon)$.
By keeping this invariant after each step, we get that after $D(\mathcal{T})=O(\log n)$ phases, the root vertex $r$ knows $d(V) \in \widetilde D \pm \epsilon \cdot \widetilde D$.

For the leaf bag $X$, simply assign a leader to collect the distance labels $L_G(v)$ (i.e., in the graph $G$) of all vertices $v \in X_m$, locally compute (the exact) $d(X)$, and then broadcast it to other nodes in the bag. 
Overall, we send $\widetilde{O}(D/\epsilon)$ bits of information and using the low-congestion shortcuts it can be done in $\widetilde{O}(D^2/\epsilon)$ for all the leaf bags. Recall that in the low-congestion shortcuts each edge $e$ appears on $\widetilde{D}$ many subgraphs, and for each subraph, we send $\widetilde{O}(D/\epsilon)$ bits through an edge.

Assume that the invariant holds up to phase $i$ for all bags in level $\ell_i=D(\mathcal{T})-i+1$, and we now describe phase $i+1$.
Let $X$ be a bag in level $\ell_i-1$ and let $X_1,\ldots, X_k$ be its children bags in level $\ell_i+1$. 
Define $G'=G[X]$ and $G^+ \subset G'$ to be all the vertices and edges whose planar embedding is on or inside the closed curve guaranteed by property (9) of BDDs. Similarly, define $G^- \subset G'$ to be all the vertices and edges whose planar embedding is on or outside the closed curve. Define $X^+$ and $X^-$ as the vertex sets of $G^+$ and $G^-$, respectively. Observe that by property (9), every child $X_j$ satisfies $X_j\s X^+$ or $X_j\s X^-$, but not both. 

By the invariant, every node in each child bag $X_j$ knows $\widetilde{d}(X_j)$.
Let $u,v \in X_m \times X$ be the pair of largest $G$-distance in $X$, i.e., $d(X)=d_{G}(u,v)$.
There are two cases: (i) $u \notin S_X$ and $u,v \subseteq X_j$ for some $j \in \{1,\ldots, k\}$, and (ii) $u \in G^+$ and $v \in G^-$. Note that if $u \in S_X$ then it appears on both $G^+$ and $G^-$ and this is taken care of in the second case.

Case (i) can be easily handled since $u$ is also active in $X_j$, and thus the leader knows $\widetilde{d}(X_j)$, and the maximum $\widetilde{d}(X_j)$ value over all $j\in [k]$ can be computed along the shortcuts. Thus, the nodes in $X$ can compute $\max_j \widetilde{d}(X_j)$. 

By property (3') of BDDs, any $u$--$v$ path in case (ii) must pass through a vertex in $S_X$. In particular, the shortest $u$--$v$ path must travel inside $G^-$ until reaching some node $s\in S_X$, then take the shortest $s$--$t$ path in $G$ to some $t\in S_X$ (possibly $t=s$), and finally travels inside $G^+$ to $v$. 
Unlike the unweighted case, $S_X$ might be arbitrarily large, and will not be able to let all vertices in $S_X$ broadcast their labels. To reduce the number of vertices on the separators while introducing some approximation, we apply the technique from  \cite{weimann2016approximating}  of adding \emph{portals} on the separator. By property (7), the separator consists of $O(\log n)$ many paths along the SSSP tree. Therefore, the separator consists of $2\cd O(\logn)=O(\logn)$ many shortest paths. 
To compute the $O(\log n)$ segments, we can simply omit from the shortest path separator of $T'_X$ the at most $O(\log n)$ edges not in $T$, and compute connectivity identification on what remains.

In each shortest path, we mark its first and last vertex, as well as nodes every $\e' \widetilde D$ weighted distance along the path for some $\e':= \Th(\e/\log^2n)$ (see Sec. \ref{app:marking} for implementation details).
Note that since each shortest path has weighted length at most $\widetilde D$, there are at most $O(1/\e')$ many portals per shortest path, or $O(1/\e' \cd \logn) = \tO(1/\e)$ many portals total. Let $Q$ be the set of these portals. 
The portal vertices then satisfy that forcing a $u$-$v$ shortest path between $u \in G'_j$ and $v \in G'_{j'}$ to pass through $Q$ rather than any other vertex in $S_X$ increases the $u$-$v$ distance by an additive term of $\delta$.
That is, we pick the portal $Q$ such that for every $u \in G^-$ and $v \in G^+$, it holds:
\begin{equation}\label{eq:portal}
d_G(u,v)
\leq
\min_{s,t\in Q} d_{G^-}(u,s)+d_G(s,t)+\tilde d_{G^+}(t,v)+\delta~.
\end{equation}
It is then sufficient to compute:
\[
\tilde d(G')
=
\max_{\substack{ u \in G^+, v\in G^-}}\ 
\min_{s,t\in Q} \tilde d_{G^-}(u,s)+d_G(s,t)+\tilde d_{G^+}(t,v)~,
\]
where $\widetilde{d}(.)$ will be an additive $\delta$-approximation for the true distances.
The value $d_G(s,t)$ in the $\min$ expression above can be computed using the distance labels in $G$: every portal node in $Q$ can simply broadcast its distance label $L_G(s)$ to all nodes in $X$. Hence, we send a total of $\widetilde{O}(D)\cdot \poly(\log n/\epsilon)$ bits, and using the shortcuts, we can do it in $\widetilde{O}(D^2)\cdot \poly(\log n/\epsilon)$ rounds for all $i$-level subgraphs.

For the other values $\widetilde{d}_{G^-}(u,s)$ and $\widetilde{d}_{G^+}(t,v)$, we will use the approximate core-sets in $G^-$ and $G^+$. Let's zoom into $G^-$ and explain how to compute the exact distance tuple in this graph with respect to $Q$. Denote by $\widehat{G}^-$ to be the subgraph $G' \cap H'$ where $H'$ are the shortcut edges of $G'$. The edge weights in $\widehat{G}^-$ are set as follows: all edges \emph{not} in $G^-$ are given infinite weights (i.e., weights of $100\widetilde D$) and we keep the edge weights for the edges in $G^-$.
Now, we apply the exact distance label algorithm of Lemma \ref{lem:weighted-exact-label} on $\widehat{G}^-$. This can be done in a total of $\widetilde{O}(D^3)$ rounds for all the $G^-$ subgraphs of that level. Note that the unweighted diameter of $\widehat{G}^-$ is $\tO(D)$, but the distances between vertices in $G^-$ are only based on the edges in $G^-$, as all other edges have large weights. Note that the extra factor of $D$ is due to the $\widetilde{O}(D)$ congestion of the shortcuts.

Then, we let all portal vertices $s \in Q$ send their exact distance tuples $L_{G^-}(s)$ to all the vertices in $G^-$ along the shortcut edges of $G' \cup H'$. The same is repeated to the subgraph $G^+$.
Overall, sending the corresponding distance labels of $Q$ in $G^-$ and $G^+$ takes $\widetilde{O}(D^2/\epsilon)$ rounds for all the subgraphs in this level (since each edge appears on $\widetilde{O}(D)$ shortcuts).

At this point, every vertex $u \in G^-$ knows $\tuple_Q(u)$ in the subgraph $G^-$ (and same for every $v \in G^+$).  
It remains for the leader in $G'$ to compute the additive $\delta$-approximate core-sets in  $G^+$ and in $G^-$. To do this efficiently, every vertex $u \in G^-$ hashes its distance tuple $\tuple_Q(u)$ in a randomly shifted grid as explained in Sec. \ref{sec:fastweighted}, and sends this hash tuple. Since ``nearby" distance tuples are hashed into the same value with good probability, overall the leader should collect $\poly(Q/\epsilon)$ distinct hash values. As explained before this hashing technique only holds in expectation. Thus once the nodes see that the procedure takes too long (e.g.\ exceeds twice its expected runtime) then the nodes start over the process together. W.h.p., this procedure can repeat for $O(\log n)$ times, and w.h.p., one of the tries has its number of entries at most twice the expectation, which is $\poly(\el d/\de')$.
Once all these distinct hash values are gathered, it can collect one representative distance tuple in $G^-$ for each hash value and for each $b \in \{0,1\}$. Same algorithm is applied to compute the core-set in $G^+$. At this point, the leader has computed the approximate core-sets of in $G^+$ and $G^-$, thus it can compute:
\begin{gather}
\max_{\substack{ u\in G^-\\ v\in G^+}} \ \min_{s,t\in Q}(\widetilde{d}_{G^-}(u,s) + d_G(s,t) + \widetilde{d}_{G^+}(t,v)) , \label{eq:diamHT}
\end{gather}
Note that by the definition of the approximate core-set, for every vertex $v \in G^+$, the root knows a distance tuple of some other vertex $v'\in G^+$ such that the distance tuples of $v$ and $v'$ are $\delta$-close and the same of $G^-$. Specifically, let $u \in G^+, v \in G^-$ be the pair that achieve the $\max$ in Eq. (\ref{eq:diamHT}). 
By the definition of $\delta$-approximate core-sets, we know that the leader of $G'$ has collected the tuples of vertices $u' \in G^-(u)$ and $v'\in G^+(v)$ such that $|\tuple_Q(u)-\tuple_Q(u')|\leq \delta$ and $|\tuple_Q(v)-\tuple_Q(v')|\leq \delta$. Therefore, the leader has computed the distance
\begin{eqnarray*}
\widetilde{d}(G')\geq \min_{s,t\in Q} ~~d_{G^-}(u',s) &+&d_G(s,t)+d_{G^+}(v',s) 
\\&\leq& 
2\delta+\min_{s,t\in Q} d_{G^-}(u,s) +d_G(s,t)+d_{G^+}(v,s) 
\\&\leq& 
3\delta+\min_{s,t\in S_X} d_{G^-}(u,s) +d_G(s,t)+d_{G^+}(v,s)=d_G(u,v)+3\delta~,
\end{eqnarray*}
where the last inequality follows by Eq. (\ref{eq:portal}).

\begin{theorem}
For every $n$-vertex weighted planar graph $G$ and $\epsilon \in (0,1)$, there exists a distributed algorithm that computes an $(1+\epsilon)$ approximation for the weighted diameter in $\widetilde{O}(D^3)+\poly(D^2/\epsilon)$ where $D$ is the unweighted diameter of $G$. 
\end{theorem}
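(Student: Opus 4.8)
The plan is to assemble the three steps described above into a proof, tracking the round complexity and the \emph{additive} approximation error through the $O(\log n)$ levels of the BDD recursion. First, I would apply \Cref{lem:ExactSSSP} to build an exact SSSP tree $T$ rooted at an arbitrary vertex $r$ in $\widetilde O(D^2)$ rounds; doubling the largest $r$-to-vertex distance yields an estimate $\widetilde D$ with $d_G(V)\in[\widetilde D,2\widetilde D]$, where $d_G(V)$ denotes the true weighted diameter, and I fix the additive target $\delta=\Theta(\e\widetilde D/\log n)$ (the extra $\log n$ will absorb the constant slack introduced per level; one may instead take $\delta=\Theta(\e\widetilde D)$ and rescale $\e$ at the end). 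Second, I would run the BDD construction of \Cref{thm:SPSD} \emph{on $T$} instead of on a BFS tree: since $T$ may have huge unweighted diameter, this means computing balanced cycle separators on the bounded-diameter spanning trees $T'_X$ of the bags, and the $O(\log n)$ levels together with the modified leaf condition ($|X_m|=\widetilde O(1/\e)$ non-separator nodes) follow exactly as in Section~\ref{sec:unbounded-diam}. To run all bags of one level in parallel despite shared separator vertices, I build the split graph $\widehat G$ and invoke \Cref{lem:widehatG}: $\widehat G$ is planar, has diameter $O(D)$, its per-level bags $\widehat X$ are vertex-disjoint and isomorphic to $G[X]$, and a round on $\widehat G$ costs $2$ rounds on $G$. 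Hence low-congestion shortcuts exist on $\widehat G$, and after simulating them back each bag $G[X]$ gets a shortcut subgraph $H_X$ with $G[X]\cup H_X$ of unweighted diameter $\widetilde O(D)$ and each edge of $G$ lying in only $\widetilde O(D)$ shortcut subgraphs of that level. All of Step~(2) takes $\widetilde O(D^2)$ rounds.

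Third, I would perform the bottom-up diameter computation on the BDD tree, maintaining the invariant that every node of every level-$(D(\mathcal T)-i+1)$ bag $X$ knows a value $\widetilde d(X)\in d(X)\pm\delta$, where $d(X)=\max_{u\in X_m,v\in X}d_G(u,v)$. A leaf bag has $|X_m|=\widetilde O(1/\e)$, so a leader collects the exact $G$-labels (from \Cref{lem:weighted-exact-label}) of those nodes over the shortcuts in $\widetilde O(D^2/\e)$ rounds and computes $d(X)$ exactly. For an internal bag $X$ with children split by the property-(9) closed curve into $G^+,G^-$, the farthest $X_m\times X$ pair either lies inside a single child — handled by aggregating $\max_j\widetilde d(X_j)$ along the shortcuts — or it straddles $S_X$, in which case property~(3') says the shortest path crosses $S_X$, and I replace $S_X$ by $\widetilde O(1/\e)$ portals $Q$ placed every $\e'\widetilde D$ along the $O(\log n)$ shortest-path segments of $S_X$ (losing an additive $\delta$, as in \cite{weimann2016approximating}). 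The $d_G(s,t)$ values for $s,t\in Q$ come from broadcast $G$-labels ($\widetilde O(D)\cdot\poly(\log n/\e)$ bits, hence $\widetilde O(D^2)\cdot\poly(\log n/\e)$ rounds over the shortcuts); the values $d_{G^-}(u,s)$ come from running \Cref{lem:weighted-exact-label} on the inflated graph $\widehat G^-:=G[X]\cup H_X$ in which every edge not in $G^-$ is assigned weight $100\widetilde D$ (so shortest paths between $G^-$-vertices use only $G^-$-edges, while the unweighted diameter stays $\widetilde O(D)$), which costs $\widetilde O(D^3)$ rounds over the level since each edge carries the traffic of $\widetilde O(D)$ shortcut subgraphs; $G^+$ is symmetric. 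Finally, to compress the $|G^\pm|$ distance tuples to the leader, every vertex hashes its $Q$-distance tuple into a randomly shifted grid as in \Cref{sec:fastweighted}; by \Cref{thm:approx-core-set} only $\poly(|Q|/\e)$ cells are non-empty (w.h.p.\ after $O(\log n)$ restarts), so the leader gathers a $\delta$-additive core-set for $G^+$ and for $G^-$ (plus one child-ID bit per representative) in $\poly(D^2/\e)$ rounds and outputs $\widetilde d(X)=\max_{u\in G^-,v\in G^+}\min_{s,t\in Q}\bigl(\widetilde d_{G^-}(u,s)+d_G(s,t)+\widetilde d_{G^+}(t,v)\bigr)$.

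For correctness I would argue by induction on the level that $\widetilde d(X)\in d(X)\pm O(\delta)$. In the straddling case, choosing $s,t$ as the first and last touch of $S_X$ along a shortest $u$--$v$ path gives $\min_{s,t\in S_X}\bigl(d_{G^-}(u,s)+d_G(s,t)+d_{G^+}(t,v)\bigr)=d_G(u,v)$ by property~(3'); restricting to portals adds $\le\delta$, and replacing $u,v$ by their core-set representatives adds $\le 2\delta$. In the in-child case, $\widetilde d(X_j)\in d(X_j)\pm\delta\subseteq d(X)\pm\delta$ because $(X_j)_m\times X_j\s X_m\times X$. Crucially the error does not accumulate across levels: child estimates re-enter only through the in-child case, while the straddling case is recomputed from exact labels; thus the root ends with $d_G(V)\pm O(\delta)=(1\pm O(\e))\,d_G(V)$, using $d_G(V)\ge\widetilde D$. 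For the round complexity, each level costs $\widetilde O(D^3)+\poly(D^2/\e)$ — dominated by the exact-label computation in $\widehat G^\pm$ and the core-set gathering — and since each edge lies in $O(1)$ bags and $\widetilde O(D)$ shortcut subgraphs per level, the random-delay scheduling of \cite{ghaffari2015near} runs all bags of a level concurrently with polylog overhead; multiplying by the $O(\log n)$ levels and adding Steps~(1) and~(2) yields the claimed $\widetilde O(D^3)+\poly(D^2/\e)$ bound.

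The step I expect to be the main obstacle is the careful interplay in Step~(3) between the portal approximation, the grid-hashing (approximate core-set) approximation, and the \emph{exact} distance labels inside $\widehat G^\pm$: one must verify that the additive slack stays $O(\delta)$ \emph{per level} (not multiplied), and that \Cref{lem:weighted-exact-label} still runs in $\widetilde O(D^2)$ rounds per bag — hence $\widetilde O(D^3)$ per level — when fed the inflated graph $\widehat G^\pm$, whose unweighted diameter is $\widetilde O(D)$ thanks to the shortcuts but whose weighted distances must ignore the $100\widetilde D$-weight edges, the extra $D$ factor coming precisely from the $\widetilde O(D)$ shortcut congestion. A secondary delicate point is confirming that the BDD built on the SSSP tree still has separators that decompose into $O(\log n)$ shortest-path segments (so $\widetilde O(1/\e)$ portals per segment suffice) and that the split graph $\widehat G$ indeed makes the per-level bags vertex-disjoint, so that low-congestion shortcuts can be computed on $\widehat G$ and simulated back within $\widetilde O(D^2)$ rounds.
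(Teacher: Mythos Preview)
Your proposal is correct and follows essentially the same three-step approach as the paper: exact SSSP to get $\widetilde D$, a BDD on the SSSP tree with the split-graph $\widehat G$ to enable per-level low-congestion shortcuts, and a bottom-up sweep using $\widetilde O(1/\e)$ portals on the $O(\log n)$ shortest-path segments of $S_X$, exact labels in the inflated graphs $\widehat G^{\pm}$, and grid-hashed approximate core-sets. Your handling of the error (noting that only the in-child branch re-uses a previous estimate, so the additive slack stays $O(\delta)$ rather than accumulating) and of the $\widetilde O(D^3)$ bottleneck (the $\widetilde O(D)$ shortcut congestion multiplying the $\widetilde O(D^2)$ label computation) matches the paper's analysis; the paper simply fixes $\delta=\e\widetilde D/3$ rather than dividing by $\log n$, which is equivalent after your observed rescaling.
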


\newpage

\bibliographystyle{alpha}
\bibliography{ref}

\newpage
\appendix
\section{Proof of \Cref{lem:drawing}}\label{sec:planar-drawing}

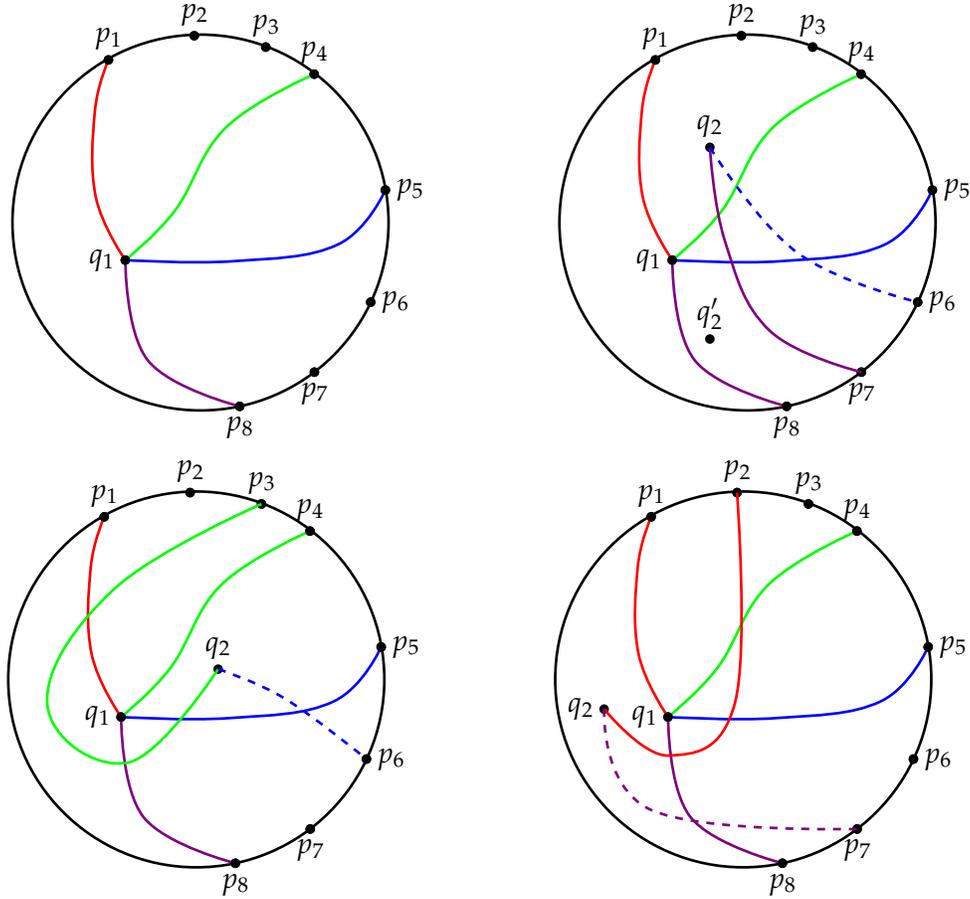
\begin{figure}\centering
\begin{tikzpicture}[scale=.5]
\node [above] at (-2.4481,5.3268) {$p_1$};
\node[above] at (-0.165,5.9713) {$p_2$};
\node[above]at (1.7342,5.6687) {$p_3$};
\node [above] at  (3.0223,4.9478)  {$p_4$};
\node[right] at (4.9155,1.8635) {$p_5$};
\node[right]  at (4.5247,-1.119) {$p_6$};
\node[below] at (3.0287,-2.9803) {$p_7$};
\node[below] at (1.0416,-3.8917) {$p_8$};
\node[left] at (-2,0) {$q_1$};
\tikzstyle{every node} = [circle, fill, scale=.35];
\draw [line width=1] (0,1) ellipse (5 and 5);
\node (v1) at (-2.4481,5.3268) {};
\node at (-0.165,5.9713) {};
\node at (1.7342,5.6687) {};
\node (v2) at (3.0223,4.9478) {};
\node(v3) at (4.9155,1.8635) {};
\node  at (4.5247,-1.119) {};
\node at (-2,0) {};
\node at (3.0287,-2.9803) {};
\node(v4) at (1.0416,-3.8917) {};
\draw[red,line width=1]  plot[smooth, tension=.7] coordinates {(-2,0) (-2.784,1.6784) (-2.808,3.9649) (v1)};
\draw[green,line width=1]  plot[smooth, tension=.7] coordinates {(-2,0) (-0.6783,1.3189) (0.6737,3.5259) (v2)};
\draw[blue,line width=1]  plot[smooth, tension=.7] coordinates {(-2,0) (0.884,-0.0288) (3.6525,0.4146) (v3)};
\draw [violet,line width=1] plot[smooth, tension=.7] coordinates {(-2,0)(-1.407,-2.6595) (v4)};
\node  at (-2.4481,5.3268) {};
\node at (-0.165,5.9713) {};
\node at (1.7342,5.6687) {};
\node  at (3.0223,4.9478) {};
\node at (4.9155,1.8635) {};
\node  (v5) at (4.5247,-1.119) {};
\node at (-2,0) {};
\node at (3.0287,-2.9803) {};
\node  at (1.0416,-3.8917) {};
\end{tikzpicture}
\qquad\qquad
\begin{tikzpicture}[scale=.5]
\node [above] at (-2.4481,5.3268) {$p_1$};
\node[above] at (-0.165,5.9713) {$p_2$};
\node[above]at (1.7342,5.6687) {$p_3$};
\node [above] at  (3.0223,4.9478)  {$p_4$};
\node[right] at (4.9155,1.8635) {$p_5$};
\node[right]  at (4.5247,-1.119) {$p_6$};
\node[below] at (3.0287,-2.9803) {$p_7$};
\node[below] at (1.0416,-3.8917) {$p_8$};
\node[left] at (-2,0) {$q_1$};
\node [above] at (-1,3) {$q_2$};
\node [above] at (-1,-2.1) {$q_2'$};
\tikzstyle{every node} = [circle, fill, scale=.35];
\draw [line width=1] (0,1) ellipse (5 and 5);
\node (v1) at (-2.4481,5.3268) {};
\node at (-0.165,5.9713) {};
\node at (1.7342,5.6687) {};
\node (v2) at (3.0223,4.9478) {};
\node(v3) at (4.9155,1.8635) {};
\node  at (4.5247,-1.119) {};
\node at (-2,0) {};
\node at (3.0287,-2.9803) {};
\node(v4) at (1.0416,-3.8917) {};
\draw[red,line width=1]  plot[smooth, tension=.7] coordinates {(-2,0) (-2.784,1.6784) (-2.808,3.9649) (v1)};
\draw[green,line width=1]  plot[smooth, tension=.7] coordinates {(-2,0) (-0.6783,1.3189) (0.6737,3.5259) (v2)};
\draw[blue,line width=1]  plot[smooth, tension=.7] coordinates {(-2,0) (0.884,-0.0288) (3.6525,0.4146) (v3)};
\draw [violet,line width=1] plot[smooth, tension=.7] coordinates {(-2,0)(-1.407,-2.6595) (v4)};
\node  at (-2.4481,5.3268) {};
\node at (-0.165,5.9713) {};
\node at (1.7342,5.6687) {};
\node  at (3.0223,4.9478) {};
\node at (4.9155,1.8635) {};
\node  (v5) at (4.5247,-1.119) {};
\node at (-2,0) {};
\node(p7) at (3.0287,-2.9803) {};
\node  at (1.0416,-3.8917) {};

\node (q) at (-1,3) {};
\node (q2) at (-1,-2.1) {};
\draw [violet,line width=1] plot[smooth, tension=.7] coordinates {(q) (-0.5974,0.4985)  (0.5812,-1.8702) (p7)};
\draw  [blue,dashed,line width=1]  plot[smooth, tension=.7] coordinates {(q) (1.3149,0.2039) (v5)};
\end{tikzpicture}
\begin{tikzpicture}[scale=.5]
\node [above] at (-2.4481,5.3268) {$p_1$};
\node[above] at (-0.165,5.9713) {$p_2$};
\node[above]at (1.7342,5.6687) {$p_3$};
\node [above] at  (3.0223,4.9478)  {$p_4$};
\node[right] at (4.9155,1.8635) {$p_5$};
\node[right]  at (4.5247,-1.119) {$p_6$};
\node[below] at (3.0287,-2.9803) {$p_7$};
\node[below] at (1.0416,-3.8917) {$p_8$};
\node[left] at (-2,0) {$q_1$};
\node [above] at (0.5812,1.2727) {$q_2$};
\tikzstyle{every node} = [circle, fill, scale=.35];
\draw [line width=1] (0,1) ellipse (5 and 5);
\node (v1) at (-2.4481,5.3268) {};
\node at (-0.165,5.9713) {};
\node(p3) at (1.7342,5.6687) {};
\node (v2) at (3.0223,4.9478) {};
\node(v3) at (4.9155,1.8635) {};
\node  at (4.5247,-1.119) {};
\node at (-2,0) {};
\node at (3.0287,-2.9803) {};
\node(v4) at (1.0416,-3.8917) {};
\draw[red,line width=1]  plot[smooth, tension=.7] coordinates {(-2,0) (-2.784,1.6784) (-2.808,3.9649) (v1)};
\draw[green,line width=1]  plot[smooth, tension=.7] coordinates {(-2,0) (-0.6783,1.3189) (0.6737,3.5259) (v2)};
\draw[blue,line width=1]  plot[smooth, tension=.7] coordinates {(-2,0) (0.884,-0.0288) (3.6525,0.4146) (v3)};
\draw [violet,line width=1] plot[smooth, tension=.7] coordinates {(-2,0)(-1.407,-2.6595) (v4)};
\node  at (-2.4481,5.3268) {};
\node(p2) at (-0.165,5.9713) {};
\node at (1.7342,5.6687) {};
\node  at (3.0223,4.9478) {};
\node at (4.9155,1.8635) {};
\node  (v5) at (4.5247,-1.119) {};
\node at (-2,0) {};
\node(p7) at (3.0287,-2.9803) {};
\node  at (1.0416,-3.8917) {};

\node (q) at (0.5812,1.2727) {};
\draw [green,line width=1] plot[smooth, tension=.7] coordinates {(q) (-1.8106,-1.2116) (-3.9598,0.3252) (-2.2035,3.3872) (p3)};
\draw  [blue,dashed,line width=1]  plot[smooth, tension=.7] coordinates {(q) (2.3607,0.487) (v5)};
\end{tikzpicture}
\qquad\qquad
\begin{tikzpicture}[scale=.5]
\node [above] at (-2.4481,5.3268) {$p_1$};
\node[above] at (-0.165,5.9713) {$p_2$};
\node[above]at (1.7342,5.6687) {$p_3$};
\node [above] at  (3.0223,4.9478)  {$p_4$};
\node[right] at (4.9155,1.8635) {$p_5$};
\node[right]  at (4.5247,-1.119) {$p_6$};
\node[below] at (3.0287,-2.9803) {$p_7$};
\node[below] at (1.0416,-3.8917) {$p_8$};
\node[left] at (-2,0) {$q_1$};
\node [left] at (-3.7,0.2097) {$q_2$};
\tikzstyle{every node} = [circle, fill, scale=.35];
\draw [line width=1] (0,1) ellipse (5 and 5);
\node (v1) at (-2.4481,5.3268) {};
\node at (-0.165,5.9713) {};
\node at (1.7342,5.6687) {};
\node (v2) at (3.0223,4.9478) {};
\node(v3) at (4.9155,1.8635) {};
\node  at (4.5247,-1.119) {};
\node at (-2,0) {};
\node at (3.0287,-2.9803) {};
\node(v4) at (1.0416,-3.8917) {};
\draw[red,line width=1]  plot[smooth, tension=.7] coordinates {(-2,0) (-2.784,1.6784) (-2.808,3.9649) (v1)};
\draw[green,line width=1]  plot[smooth, tension=.7] coordinates {(-2,0) (-0.6783,1.3189) (0.6737,3.5259) (v2)};
\draw[blue,line width=1]  plot[smooth, tension=.7] coordinates {(-2,0) (0.884,-0.0288) (3.6525,0.4146) (v3)};
\draw [violet,line width=1] plot[smooth, tension=.7] coordinates {(-2,0)(-1.407,-2.6595) (v4)};
\node  at (-2.4481,5.3268) {};
\node(p2) at (-0.165,5.9713) {};
\node at (1.7342,5.6687) {};
\node  at (3.0223,4.9478) {};
\node at (4.9155,1.8635) {};
\node  (v5) at (4.5247,-1.119) {};
\node at (-2,0) {};
\node(p7) at (3.0287,-2.9803) {};
\node  at (1.0416,-3.8917) {};

\node (q) at (-3.7,0.2097) {};
\draw [red,line width=1] plot[smooth, tension=.7] coordinates {(q) (-1.984,-1.0267)  (-0.2161,0.3945) (p2)};
\draw  [violet,dashed,line width=1]  plot[smooth, tension=.7] coordinates {(q) (-2.4808,-2.4595) (p7)};
\end{tikzpicture}
\caption{
The setting and cases for the proof of \Cref{lem:drawing}. In each of the three cases, the dotted curve is unavoidable and establishes the contradiction.
}
\label{fig:drawing}
\end{figure}

Here, we prove \Cref{lem:drawing}, restated below.

\drawing*

\BP
For simplicity, we assume that $p_2\ne p_3$, $p_4\ne p_5$, and $p_6\ne p_7$; the case when some of them are equal is analogous.

The curves from $q_1$ to $p_1$, $p_4$,  $p_5$, and $p_8$ divide the region inside the curve into four regions; see \Cref{fig:drawing}, top left. There are four cases depending on which region $q_2$ belongs to; \Cref{fig:drawing} displays three cases, with the last being symmetric.

In the first case (top right), the arc $(q_2,p_7)$ must cross the arc $(q_1,p_5)$ because it cannot cross $(q_1,p_8)$. Since arcs out of $q_2$ cannot pairwise cross, this forces the arc $(q_2,p_6)$ to cross $(q_1,p_5)$, contradiction. Note that this case also handles the symmetric case when $q_2$ is inside the region containing the point $q_2'$ in the figure.

In the second case (bottom left), the arc $(q_2,p_3)$ must cross the arc $(q_1,p_5)$ because it cannot cross $(q_1,p_4)$. This forces the arc $(q_2,p_6)$ to cross $(q_1,p_5)$, contradiction.

In the third case (bottom right), the arc $(q_2,p_2)$ must cross the arc $(q_1,p_4)$ because it cannot cross $(q_1,p_1)$. This forces the arc $(q_2,p_7)$ to cross $(q_1,p_8)$, contradiction.
\EP

\section{Auxiliary Distributed Procedures}\label{sec:aux}

%
\subsection{Computation of a Balanced Cycle Separator}\label{sec:balanced-cycle-general}
For a graph $G=(V,E)$, a subset of vertices $S \subseteq V$ is a \emph{balanced separator} if the removal of $S$ breaks $G$ into connected components that are constant factor smaller than the number of vertices in $G$. 
For a graph $G$ and a spanning tree $T \subseteq G$, a \emph{balanced cycle separator} is a balanced separator $S$
which forms a cycle as follow: the vertices of $S$ are connected by two tree-paths $\pi(x,y)$ and $\pi(y,z)$ (where possibly $z=y$) plus an additional edge $(x,z)$ which is not necessarily in $G$. A cycle separator naturally defines two regions in $G$, the region inside the cycle and the region outside, the number of vertices in both these regions should be at most $2n/3$. 

In this section we prove Thm. \ref{thm:cycle} by extending the algorithm of \cite{GhaffariP17} to $1$-connected graphs. To do that, we will augment $G$ with a subset of virtual edges $E'$ such that $G \cup E'$ is \emph{biconnected}. We will then see that (1) computing those virtual edges can be done in $\widetilde{O}(D)$ rounds\footnote{Here $D$ is the diameter of the input graph $G$. This would hold even if it is required to compute the augmentation for a subgraph $G' \subseteq G$ with possibly larger diameter than that of $G$.}; and (2) that any $r$-round algorithm on $G \cup A$ be simulated in $G$ within $\widetilde{O}(r)$ rounds. Thus computing a cycle separator in $G$ can be done by simulating the algorithm of \cite{GhaffariP17} on $G \cup E'$ in the graph $G$, using $\widetilde{O}(D)$ rounds. The output will be a cycle separator that contains at most one edge that is not in $T \subseteq G$. Note that $V(G)=V(G \cup E')$ and thus a balanced cycle separator in $G \cup E'$ is also a balanced cycle separator in $G$. To define the virtual edges that transform $G$ into a biconnected graph we use the block-cut tree representation of the graph.

\paragraph{The Block-Cut Tree.} The block-cut tree is a representation of the biconnected components in the graph, called \emph{blocks}. The tree has two types of vertices: cut vertices and block vertices, where a cut-vertex $v$ is connected to all the block vertices that contain $v$. For our purposes, we root the tree at an arbitrary cut-vertex. 
A representation of the block-cut tree $\mathcal{T}$ can be computed in $O(D)$ rounds, by using the biconnectivity identification algorithm of \cite{GhaffariP17}, we explain more about the implementation details at the end of the section. For every block $B$, let $r(B)$ be the cut-vertex in $B$ that is closest to the root, thus $r(B)$ is also the parent of the block-vertex $B$ in the block-cut tree. 
For every cut vertex $v$, let $\ell(v)$ be the \emph{level} of the cut-vertex in the block-cut tree, for every non-cut vertex $v'$ the level of $v'$ is the level of the (unique) block containing $v'$ in the block-cut tree.
Finally, the level  $\ell(e)$ of every edge $e$ is the level of the block containing $e$. 



\paragraph{Planar Biconnectivity Augmentation.}
We add to $G$ two subsets of virtual edges, namely, $A$ and $B$ such that $G \cup A \cup B$ is a biconnected planar graph. The first subset $A$ is defined as follows.
Every cut-vertex $u$ in level $\ell(u)$ connects by a virtual edge every two consecutive neighbors $u_1,u_2$ in the clockwise ordering satisfying that $\ell((u,u_1)),\ell((u,u_2))=\ell(u)+1$. That is, the cut-vertex $u$ connects by a virtual edge consecutive neighbors that belong to child blocks of $u$ in the block-cut tree\footnote{In fact it is sufficient to connect incident neighbors from distinct child blocks.}. The level of every newly added virtual edge $e'=(u_1,u_2)$ is set to $\ell(e')=\ell(u)+1$. This is done simultaneously for all cut-vertices. 

From that point on, we treat those edges $A$ edges as part of our graph, and the second step is applied on the graph $G \cup A$. We also applied the planar embedding algorithm of Ghaffari andHaeupler \cite{ghaffari2016embedding} on the graph $G \cup A$ so that every vertex $v$ knows the clockwise ordering of its neighbors in $G \cup A$. We do \emph{not} re-compute the block-cut tree of $G \cup A$, and we only use the level information of the edges in $A$. 
In the second step, a subset of virtual edges $B$ is computed as follows. Every cut-vertex $u$ with an even value of $\ell(u)/2$ considers its neighbors  $u_1,\ldots, u_k$ in $G \cup A$ in a \emph{clockwise ordering}. We say that the neighbor $u_j$ is of level $i'$ if $\ell((u,u_j))=i'$. It then connects a neighbor $u_j$ in level $\ell(u)+1$ to a consecutive neighbor in the ordering that belongs to level $\ell(u)-1$. For every cut vertex $u'$ with an odd value of $\ell(u')/2$, we do the same only in the reverse direction, i.e., connecting neighbor in level $\ell(u)+1$ to a consecutive neighbor in level $\ell(u)-1$ in the counter clockwise ordering. See Fig. \ref{fig:virtual-bct} for an illustration.

\begin{figure}[h!]
\includegraphics[scale=0.5]{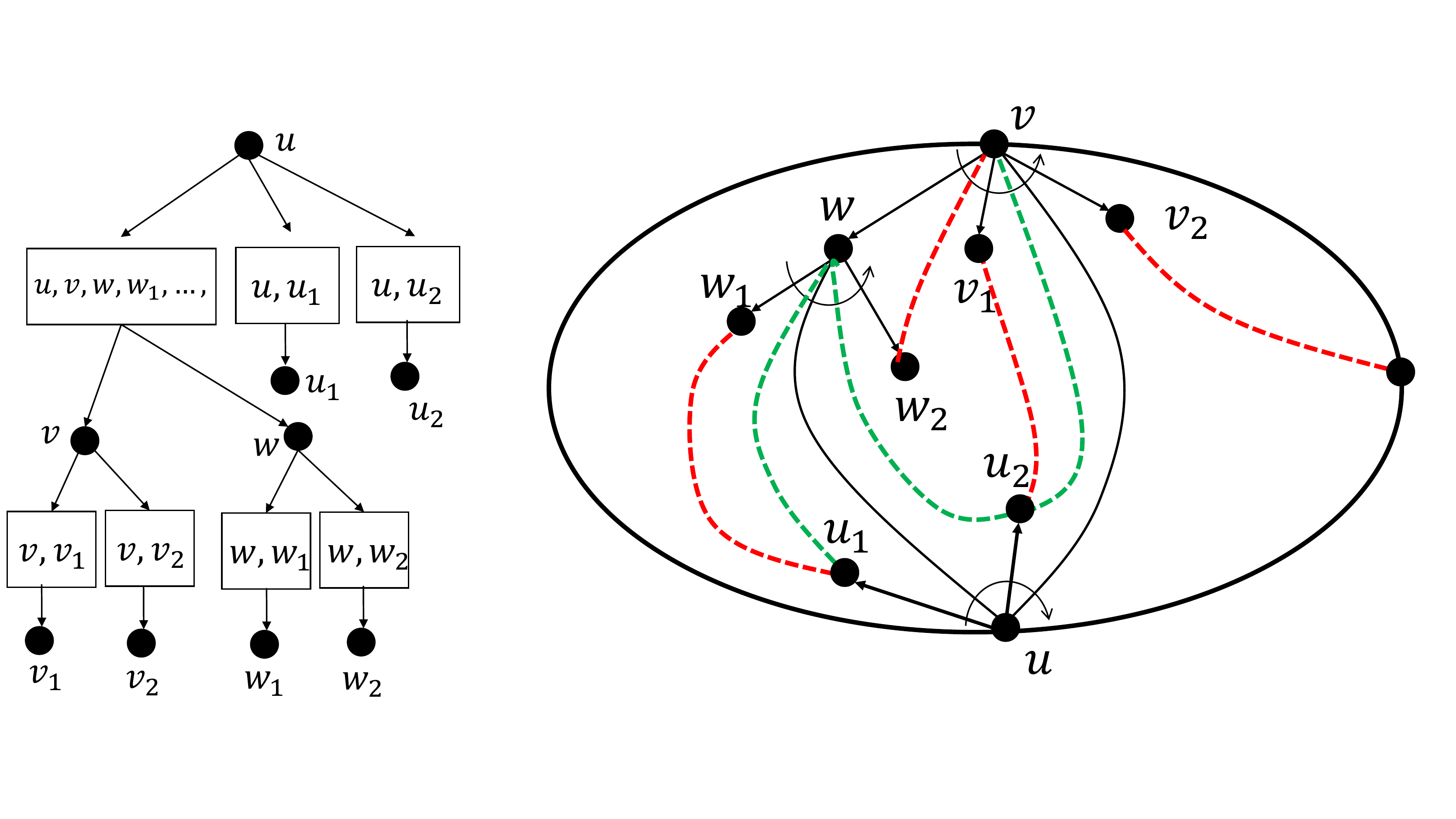}
\caption{Illustration for the virtual edges added. Right: Black edges are real $G$-edges, virtual type $A$ edges are in green, and type $B$ edges are in red. Left: The block-cut tree representation that is used to define those edges.
\label{fig:virtual-bct}}
\end{figure}

We first claim that $G \cup A \cup B$ is biconnected. For every level-$i$ cut-vertex $u$ it will be sufficient to show that all its neighbors in level $i+1$ are connected --not through $u$ -- to at least one neighbor in level $i-1$.
In the first step, we connect consecutive neighbors on level-$(i+1)$ by $A$-edges. In the second step, w.l.o.g. assume that $i/2$ is even. There are only two types of $u$'s neighbors, namely, those in level $i+1$ and those of level $i-1$. Since all level $(i+1)$ neighbors between two consecutive level $(i-1)$ neighbors are connected by $A$ edges, by adding the $B$ edges we connects the consecutive neighbors of level $(i+1)$ and $i$, as desired.  

We next show that those edges do not violate planarity, and start with the following observation that basically says that if those edges are added for a single vertex then planarity is preserved. 
\begin{observation}\label{obs:safe-parallel}
Fix a vertex $v$ and let $u_1,u_2,\ldots, u_\ell$ be its neighbors ordered in a clockwise ordering (based on the embedding). Then one can add the edges $(u_{i},u_{i+1})$ without violating planarity, i.e., $G \cup \{(u_{i},u_{i+1}), i \in \{1,\ldots, \ell-1\}\}$ is planar.
\end{observation}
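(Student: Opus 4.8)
The statement to prove is Observation~\ref{obs:safe-parallel}: for a vertex $v$ with clockwise-ordered neighbors $u_1,\ldots,u_\ell$, one can add all edges $(u_i,u_{i+1})$ for $i\in\{1,\ldots,\ell-1\}$ while preserving planarity.

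\textbf{Plan.} The plan is to argue directly from a fixed planar embedding of $G$. Fix a planar drawing of $G$ in the plane (or on the sphere), and consider a small open disk $B$ around the drawing of $v$ that is small enough to meet only the edges incident to $v$ and no other part of the drawing. Inside $B$, the edges $(v,u_1),\ldots,(v,u_\ell)$ leave $v$ as $\ell$ disjoint arc-germs, and by definition of the combinatorial embedding they appear around the boundary circle $\partial B$ in exactly the cyclic order $u_1,u_2,\ldots,u_\ell$. The strategy is to route each new edge $(u_i,u_{i+1})$ so that it hugs the two arcs $(v,u_i)$ and $(v,u_{i+1})$: start at $u_i$, travel alongside the arc $(v,u_i)$ into the disk $B$ until very close to $v$, then cut across a thin ``pie slice'' of $B$ between the two consecutive germs, then travel back out alongside the arc $(v,u_{i+1})$ to $u_{i+1}$. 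Because the germs of $(v,u_i)$ and $(v,u_{i+1})$ are consecutive in the cyclic order around $v$, the pie slice between them is an empty region of $B$ (it contains no other edge germ), so this crossing is safe.

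\textbf{Key steps, in order.} First, make precise the existence of the small disk $B$ and the fact that the cyclic order of the $\ell$ germs on $\partial B$ realizes the combinatorial clockwise order $u_1,\ldots,u_\ell$; this is just the definition of a combinatorial planar embedding as stated in the Preliminaries. Second, for each $i$ choose disjoint thin ``tubes'' (neighborhoods) around each arc $(v,u_i)$, shrinking them enough that the tubes around distinct arcs are pairwise disjoint except near $v$; the new edge $(u_i,u_{i+1})$ will be drawn inside (tube of $(v,u_i)$) $\cup$ (an empty pie slice of $B$) $\cup$ (tube of $(v,u_{i+1})$). Third, verify non-crossing: (a) a new edge does not cross any old edge of $G$, since it stays inside tubes of $v$'s incident edges and inside $B$, which by choice of $B$ meets no other part of the drawing; (b) two new edges $(u_i,u_{i+1})$ and $(u_j,u_{j+1})$ with $|i-j|\ge 2$ use disjoint tubes and disjoint pie slices, so they do not cross; (c) two \emph{consecutive} new edges $(u_{i-1},u_i)$ and $(u_i,u_{i+1})$ share the tube around $(v,u_i)$, but we can nest them — draw $(u_i,u_{i+1})$ slightly ``inside'' $(u_{i-1},u_i)$ along that shared tube (or equivalently split the tube of $(v,u_i)$ into two parallel sub-tubes, one for each side) — and they meet only at the common endpoint $u_i$, which is permitted. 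Fourth, conclude that the resulting drawing is a planar embedding of $G\cup\{(u_i,u_{i+1}):i\in\{1,\ldots,\ell-1\}\}$, proving the observation.

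\textbf{Main obstacle.} The only genuinely delicate point is step~(c): handling the shared tube around $(v,u_i)$ used by the two new edges incident to $u_i$ (and, recursively, making sure all new edges can be simultaneously nested consistently). The clean way to handle it is to note that we are adding a \emph{path} $u_1-u_2-\cdots-u_\ell$ whose edges form a caterpillar-like fan around $v$; one can route this path as a single simple curve that spirals outward from near $v$, touching each $u_i$ in order, staying within $B\cup(\text{union of the thin tubes})$. Alternatively, and perhaps most cleanly, observe that adding the path is equivalent to the following: in the face structure around $v$, each pair of consecutive edges $(v,u_i),(v,u_{i+1})$ bounds a corner of some face $F_i$; the new edge $(u_i,u_{i+1})$ can be drawn inside $F_i$ as a chord cutting off the corner at $v$ — and since each face $F_i$ gets at most one new chord (the $F_i$ are the distinct faces incident to $v$), these chords are placed in disjoint faces and automatically do not cross. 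I expect this face-based phrasing to be the shortest route and would use it to dispatch the obstacle in one or two sentences.
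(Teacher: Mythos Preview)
The paper does not actually prove this observation; it states it as self-evident and moves on to use it. Your tube-based argument is correct and supplies the missing justification.

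One small caution about your face-based alternative at the end: the parenthetical claim that ``the $F_i$ are the distinct faces incident to $v$'' is not true in general. If $v$ lies on a bridge (or more generally if $G$ is not biconnected near $v$), the same face can appear as several different corners around $v$, so distinct indices $i\ne j$ may give $F_i=F_j$ and you would be adding several chords inside one face. The argument still goes through, but you need the slightly stronger statement that the chords cut off \emph{distinct corners} of that face and can therefore be drawn pairwise non-crossing (e.g., by adding them one at a time, each inside the sub-face containing its corner). Your tube-and-pie-slice argument already handles this correctly, so the issue is only with the shortcut phrasing.
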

To show that connecting many consecutive neighbors in parallel does not violate planarity, we start with the following observation:
\begin{observation}
Edges added by non-neighboring cut-vertices do not intersect with each other. 
\end{observation}

We first show that adding all the $A$ edges is safe. For a cut-vertex $u$ in level $\ell(u)$, we orient its incident edges in level $(\ell(u)+1)$ away from $u$, see Fig. \ref{fig:virtual-bct}.
Note that an edge cannot be directed in both directions in this definition, as the tail of the edge is the parent of the block containing $(u,v)$ in the block-cut tree. It might also be the case that an edge is not directed at all with this definition, e.g., in case where both the endpoints of an edge $e=(u,v)$ are in level $\ell(e)+1$ (i.e., the edge $(u,v)$ is in the parent block of $u$ and $v$).

Thus, in the first step, we connect two outgoing neighbors $u_1,u_2$ that are consecutive on the clockwise ordering. The only problematic case is where one of these neighbors say $u_1$ is a cut-vertex that connects $u$ to some of $u_1$'s neighbor. Here, this case cannot happen, as a cut-vertex connects only outgoing neighbors with $A$ edges, and $u$ is an incoming neighbor of $u_1$. 

We next show that also the edges of $B$ are safe to be added to $G \cup A$. Also here it remains to consider the case of two neighboring cut-vertices $u,v$. Case (I): $u$ and $v$ are in the same level. In this case, it must be that $\ell(u)=\ell(v)=\ell((u,v))+1$. That is, the edge $(u,v)$ belongs to the parent block of $u$ and $v$ in the block-cut tree. In order words, the edge $(u,v)$ serves as the parent edge for both $u$ and $v$. Since we apply the same rule (connecting neighbors in the clockwise or counterclockwise ordering) for these vertices, their edges do not intersect. See Fig. \ref{fig:type-A-same-level} for an illustration.
Case (II): $u$ and $v$ are in distinct levels. In this case, we can assume without loss of generality, that $v$ is a descendent of $u$ in the block-cut tree. Letting $i$ be the level of the edge $(u,v)$, we get that $\ell(u)=i-1$ and $\ell(v)=i+1$. Note that in such a case, a cut-vertex $v$ connects consecutive neighbors $u$ and $w$ such that $(u,w)$ is in level $i+2$. Therefore, the edges are oriented from $u \to v \to w$. We want to show that the $B$ edges added by $u$ do not intersect the edges with the $B$ edges added by $v$. If $v$ connects $u$ and $w$ by a $B$-edge, since $u$ and $v$ have a difference of $+2$ in their levels, $u$ adds edges in the counter direction to $v$. 

\begin{figure}[h!]
\includegraphics[scale=0.5]{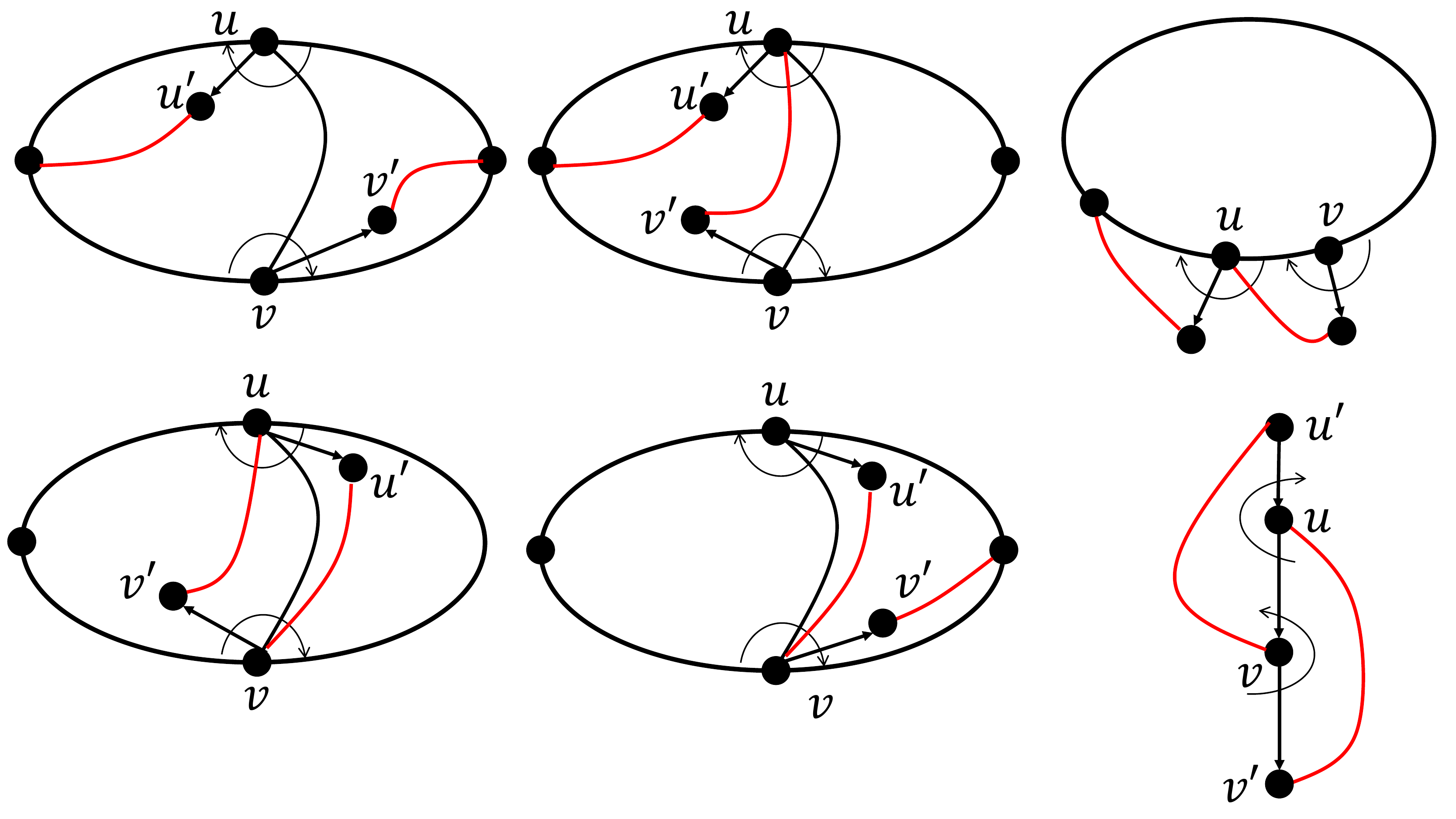}
\caption{The first four drawing consider the case where the edge $(u,v)$ is inside the block and $u$ and $v$ are cut-vertices in the same level in the block-cut tree. In the top-right we consider the case where $(u,v)$ is on the boundary of the block. Bottom-right is an illustration for the case where $u$ and $v$ have different levels, in such a case, w.l.o.g. $u$ is a parent of the block containing the edge $(u,v)$. 
\label{fig:type-A-same-level}}
\end{figure}


\paragraph{Computation of the Planar Biconnectivity Augmentation.}
To compute the virtual edges of type $A$ it is sufficient to know the planar embedding, the biconnected component identification of the block-cut tree and their levels in the block-cut tree. 
This can be easily computed within $O(\diam(G'))$ where $G'$ is the subgraph on which the biconnectivity augmentation is computed. In the unweighted case, $G'$ is guaranteed to have diameter $O(D\log n)$. 

In the weighted case, the diameter of $G'$ might be large, and in such a case we follow the procedures of \cite{ghaffari2016distributed} and use the low-congestion shortcuts in $G$ to compute the layering in $G'$. Let $T'$ be any spanning tree of $G'$, the block-cut tree of $G'$ clearly ``agrees" with $T'$. 
We will mark the cut-vertices, and arbitrarily pick one such cut-vertex as a root denoted by $r$. 
The goal is for each cut-vertex $u$ to count the number of cut-vertices on its tree path $\pi(r, u,T')$. This allows $u$ to compute its level in the block-cut tree. Once every cut-vertex has computed its level, all other edges and vertices can compute their level as well: for every block $B$ the level of $B$ is the level of its root cut-vertex $r(B)$ plus one, the level of all edges and non cut vertices is the level of the unique block that contains them. In every block $B$, all nodes can learn $\ell(r(B))$ using low-congestion shortcuts with $\widetilde{O}(D)$ rounds. 

We now explain how each cut-vertex $v$ can count the number of cut-vertices on its tree path $\pi(r, u,T')$. 
First, the tree $T'$ is oriented towards the root $r$ using the tree orientation procedure of \cite{ghaffari2016distributed}, which taken $\widetilde{O}(D)$ rounds. 
Next, we apply a recursive fragment procedure on this rooted tree. In this procedure, every fragment has a root which is the root of the subtree that spans the vertices in the fragment. In every merging step, each child fragment suggests the tree edge to its parent fragment (i.e., the fragment containing the parent of the root of the fragment in the tree) for the merge and then, the head-fragments
accept all their children tail fragments. 
We will keep the invariant that up to phase $i$, every vertex knows the number of cut vertices on its tree path from the root of its level $(i-1)$ fragment. In phase $i$,  an $i$-level fragment is created by merging a parent fragment $F'$ with a subset of child $(i-1)$-level fragments. Thus, it is easy to update the information in each child fragment $F$ by adding the value (i.e, number of cut-vertices from the root of $F'$) for a vertex $u \in F'$, where $u$ is the vertex that is connected via a tree edge to a vertex in $F$.  After $O(\log n)$ phases, the fragment contains the entire tree and thus each vertex knows the number of cut-vertices on its tree path from $r$. This allows each cut-vertex to compute its layer number in the block-cut tree. Overall, this computation takes $\widetilde{O}(D)$ rounds.

\paragraph{Completing the Proof of Thm. \ref{thm:cycle}}
Computing the set of virtual edges is done in $\widetilde{O}(D)$ rounds using the procedure above.
Note that the cycle separator $S$ computed in the augmented graph $G \cup A \cup B$ consists of two tree paths in $T$ plus one additional edge which is possibly not in $G$. Since $T \subseteq G$ is spanning a tree in $G \cup A \cup B$, the cycle separator has at most one edge that is not in $G$ (i.e., it might not even be in $A$).

It remains to show that the separator algorithm on $G \cup A \cup B$ of \cite{GhaffariP17} can be simulated in $G$ within $\widetilde{O}(D)$ rounds. This follows by showing that the endpoints of all edges in $A$ are connected in $G$ by nearly disjoint paths of length at most $5$.
\begin{claim}
All virtual edges can be simulated in $\widetilde{O}(1)$ rounds in $G$.
\end{claim}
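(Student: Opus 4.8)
The plan is to show that each virtual edge $e'=(u_1,u_2)$ added during the biconnectivity augmentation can be replaced by a short, low-congestion path in $G$, so that any round of a distributed algorithm on $G\cup A\cup B$ is simulated by $O(1)$ rounds on $G$ (possibly with a $\polylog$ overhead coming from scheduling many simulations in parallel). First I would recall how the virtual edges arise: every type-$A$ edge $(u_1,u_2)$ is added by a common cut-vertex $u$ joining two consecutive neighbors of $u$ in the clockwise order (so the path $u_1\,u\,u_2$ of length $2$ in $G$ realizes it); every type-$B$ edge is added by a cut-vertex $u$ joining a level-$(\ell(u)+1)$ neighbor $u_1$ to a consecutive level-$(\ell(u)-1)$ neighbor $u_2$, again with the length-$2$ path $u_1\,u\,u_2$ in $G\cup A$. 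Since a type-$A$ edge is itself a path of length $2$ in $G$, substituting for the at most one $A$-edge that may appear on such a $B$-path turns it into a path of length at most $4$; adding the endpoint edges gives the claimed bound of $5$. Thus every virtual edge has a realizing path in $G$ of length at most $5$, all of whose internal vertices are cut-vertices of $G$ (or their neighbors), which is exactly the ``nearly disjoint paths of length at most $5$'' claim stated in the excerpt.

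Next I would argue the congestion bound, i.e.\ that these realizing paths do not overload any edge of $G$. Fix an edge $(x,y)\in E(G)$; I claim only $O(1)$ virtual-edge paths use it. A path realizing a virtual edge incident to cut-vertex $u$ consists of at most two hops through $u$ together with possibly a detour through a single type-$A$ edge, itself realized by two hops through some other cut-vertex $w$. Hence if $(x,y)$ lies on such a path, then $x$ or $y$ is a cut-vertex $u$ that added the virtual edge (or the auxiliary cut-vertex $w$). For a fixed cut-vertex $u$ and a fixed incident edge, the clockwise order around $u$ pairs that neighbor with at most two consecutive neighbors (one on each side), so $u$ contributes at most $O(1)$ virtual edges through any one of its incident edges; summing over the $O(1)$ possible roles of the endpoints of $(x,y)$ gives total congestion $O(1)$ on $(x,y)$. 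By the scheduling fact (the framework of \cite{ghaffari2015near} quoted in the preliminaries), with dilation $O(1)$ and congestion $O(1)$, a single round on $G\cup A\cup B$ is simulated in $O(1)$ rounds on $G$ — so an $r$-round algorithm on $G\cup A\cup B$ runs in $\tO(r)$ rounds on $G$, as needed.

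The one subtlety — and I expect this to be the main obstacle — is to verify carefully that the realizing paths are ``nearly disjoint'' in the precise sense required, i.e.\ that distinct virtual edges do not route through the \emph{same} internal vertex in a way that blows up congestion, and in particular that the auxiliary $A$-edge substitutions (step-2 edges of $B$ detouring through step-1 edges of $A$) do not chain further. This is handled by the structural observations already proved in the excerpt: edges added by non-neighboring cut-vertices do not interact, edges added at a single cut-vertex are planar-safe by \Cref{obs:safe-parallel}, and a $B$-edge incident to $u$ only ever routes through one $A$-edge, which in turn routes through exactly one other cut-vertex — so the chain has bounded length and the level-parity convention (even vs.\ odd $\ell(u)/2$) prevents two cut-vertices from both trying to use the other's neighborhood in a conflicting way. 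Once these cases are enumerated exactly as in the planarity argument, the dilation-$5$ and congestion-$O(1)$ bounds follow, the scheduling fact applies, and the simulation — hence Theorem~\ref{thm:cycle} — is complete.
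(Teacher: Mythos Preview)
Your proposal is correct and follows essentially the same approach as the paper: realize each $A$-edge by the length-$2$ path through its defining cut-vertex, realize each $B$-edge by the length-$2$ path in $G\cup A$ through its defining cut-vertex (expanding the at most one $A$-hop to get a length-$\le 4$ path in $G$), and bound congestion to $O(1)$ per $G$-edge using the consecutive-neighbor structure. The paper's proof is just a terser version of this; your invocation of the scheduling framework and the planarity observations is unnecessary here (dilation and congestion are both constant, so direct simulation suffices), and your ``adding the endpoint edges gives $5$'' remark is a bit muddled, but none of this affects correctness.
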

\begin{proof}
We first show the claim for the edges in $A$. Clearly, each virtual edge $(u,v)$ has a common cut-vertex neighbor $w$. Thus it can be simulated via the path $u-w-v$. Recall that we view those edges as directed away from $u$, therefore the edge $(u,w)$ is only used to simulate at most one $A$ edges.
We now consider the second subset of virtual edges $B$. Here we connect the neighbor of a cut vertex $u$ in its child component to a neighbor of $u$ in its parent component. In the worst case both edges are virtual edges in $A$ (but in fact, at most one such edge can be in $A$). Thus, the endpoints of each edge in $B$ are connected by a path of length $4$ in $G$. Next, note that each edge in $G \cup A$ is used to simulate at most two edges in $B$, since an edge $e \in G \cup A$ can get connected at most twice, once at each of its side. Thus overall each edge in $G$ appears on constant many paths connecting the endpoints of the $B$ edges. 
\end{proof}

\subsection{Modifications for Computing the BDD on Weighted SSSP}\label{app:sssp-bdd}

\paragraph{Computing Separators.} Our goal is to compute shortest path separators in the SSSP until every bag contains $O(\log n\cdot 1/\epsilon)$ non-separator vertices. To do that we will apply the following modifications. Initially we unmark all vertices and throughout the recursion, we mark vertices that belong to the separator. For bag $X$, all vertices that appear as part of the separator in the ancestor bags 
$X'$ of $X$ are marked. Then, we compute a \emph{weighted} separator in $G[X]$ by assigning all marked nodes weight $0$ and all remaining nodes have weight of $1$. The algorithm of Ghaffari and Parter \cite{GhaffariP17} can be easily modified to work in this weighted version. The output shortest path separator satisfies that the total weight strictly inside and outside the cycle is at most a constant fraction of the total weight.

Since the diameter of $G[X]$ might be large, we will be working on the transformed graph $\widehat{G}$ and compute low-congestion shortcuts $H_1,\ldots, H_k$ for every subgraph $G[X_1], \ldots, G[X_k]$ in that level. Since every edge $e$ appears on $\widetilde{O}(D)$ subgraphs $H_i \cup G[X_i]$, and since the the separator algorithm of Ghaffari and Parter \cite{GhaffariP17} takes $\widetilde{O}(D)$ rounds, overall on each edge that algorithm sends $\widetilde{O}(D^2)$ messages. Using the random delay approach, we can compute the separator in each subgraph $G[X_i]$ simultaneously using a total of $\widetilde{O}(D^2)$ rounds. 

We note that if instead of using the separator algorithm of Ghaffari and Parter \cite{GhaffariP17} as a black box in each subgraph $H_i \cup G[X_i]$, the algorithm computes shortcuts for all the faces insides all subgraphs of the same level (as repeatedly done in \cite{GhaffariP17}), the round complexity can be improved to $\widetilde{O}(D)$ rounds.

\paragraph{Defining the Child Components.}
In the same manner as for the unweighted case, $T' \setminus T$ has $O(\log n)$ edges. However, the diameter of $T'$ might be large. The child bag $X^+$ is defined in the same manner as in the unweighted case. 
We next consider the remaining components. First, suppose that $S$ does not contain any vertices in $\ol{\m O(X)}\setminus\m O(X)$, the boundary of $\m O(X)$. In this case,  $\m O(X) \setminus \ol{\m O(X^+)}$ is actually connected, an the other child bag is computed in the same manner as $X^+$.
Otherwise, $S$ has vertices lying on $\ol{\m O(X)}\setminus\m O(X)$. 
In such a case, we define the remaining components almost in the same manner as in the unweighted case (see Step 4 in Sec. \ref{sec:4.2}), with the only difference is that we apply a connectivity identification algorithm using low-congestion shortcuts, in the auxiliary graph. 
Overall, the computation of the BDD on the SSSP can be done in $\widetilde{O}(D^2)$ rounds.
(Also here the computation can be made  $\widetilde{O}(D)$ rounds, if we compute the shortcuts for all faces of the subgraphs in the same level). 

%
%

\subsection{Marking $O(1/\epsilon)$ Portals on a Shortest Path}\label{app:marking}
Let $u$-$v$ be a shortest path segment on the separator path. Using the recursive merging procedure of \cite{ghaffari2016distributed,GhaffariP17}, we can mark all vertices on the path and orient it towards $u$. By letting $u$ send its exact distance label, all the vertices on the path can compute their distance from $u$ in $G$. Then, vertices exchange this distance with their path neighbors. Each vertex $x$ on the path then computes $\lceil d_G(u,x) / (\epsilon \cdot \widetilde{D}) \rceil$. This is the distance class of $x$. The portal vertices are the switching points of the distance classes along the path. That is, a vertex $x$ is a portal if its distance class is distinct from its parent on the path. Since the distance label of $u$ has $\widetilde{O}(D)$ bits, and since there are $O(\log n)$ shortest-path segments, this step is implemented in $\widetilde{O}(D^2)$ rounds (as we use low-congestion shortcuts to send the information along the path). 

%
%

\end{document}